\setlist[enumerate,1]{label=(\arabic*),ref=\arabic*}
\newcommand{\ket}[1]{|#1\rangle}
\newcommand{\bra}[1]{\langle #1 |}
\newcommand{\ketbra}[1]{\ket{#1}\bra{#1}}
\newcommand{\tr}{{\normalfont\text{tr}}}
\newcommand{\abs}[1]{\big\lvert #1\big\rvert}
\newcommand{\norm}[1]{\big\Vert #1\big\Vert}
\newcommand{\comm}[2]{\big[#1,#2\big]}
\newcommand{\order}[1]{\mathcal{O}\big(#1\big)}
\newcommand{\ordertilde}[1]{\widetilde{\mathcal{O}}\big(#1\big)}
\newcommand{\secref}[1]{Section~\ref{#1}}
\newcommand{\figref}[1]{Figure~\ref{#1}}
\newcommand{\thmref}[1]{Theorem~\ref{#1}}
\newcommand{\propref}[1]{Proposition~\ref{#1}}
\newtheorem{theorem}{Theorem}
\newtheorem{lemma}[theorem]{Lemma}
\newtheorem{corollary}[theorem]{Corollary}
\newtheorem{proposition}[theorem]{Proposition}
\newtheorem{assumption}[theorem]{Assumption}
\theoremstyle{remark}
\newtheorem{example}[theorem]{Example}
\theoremstyle{definition}
\newtheorem{definition}[theorem]{Definition}
\newcommand{\Real}{\mathbb{R}}
\newcommand{\vect}[1]{\boldsymbol{#1}}
\newcommand{\myeq}[1]{\stackrel{#1}{=}}
\newcommand{\myle}[1]{\stackrel{#1}{\le}}
\newcommand{\obs}{O}
\newcommand{\unit}{\boldsymbol{1}}
\newcommand{\hbt}{\mathcal{H}}
\newcommand{\loss}{\mathcal{L}}
\newcommand{\dloss}{\frac{\delta\loss}{\delta\rho}}
\newcommand{\ee}{\mathbb{E}}
\newcommand{\Hhydro}{H_{\text{hydro}}}
\newcommand{\numtime}{{N_t}}
\newcommand{\numop}{K}
\newcommand{\bigO}{\mathcal{O}}
\newcommand{\bregion}{\mathcal{B}}
\newcommand{\sample}{M_s}
\newcommand{\controlswap}{\mathcal{C}_{\text{swap}}}
\newcommand{\PL}{Polyak–Łojasiewicz}
\newcommand{\Lipschitzmax}{\mathsf{L}_{{\normalfont \text{Lip}}}}
\newcommand{\Niter}{N_{\text{iter}}}
\newcommand{\muG}{\mu_{\mathcal{G}}}
\newcommand{\adj}{{\normalfont \text{adj}}}
\newcommand{\aux}{{\normalfont \text{aux}}}
\begin{document}

\title{Quantum Neural Ordinary and Partial Differential Equations}
\author{Yu Cao}
\email{yucao@sjtu.edu.cn}
\affiliation{Institute of Natural Sciences, Shanghai Jiao Tong University, Shanghai 200240, China}
\affiliation{Ministry of Education Key Laboratory in Scientific and Engineering Computing, Shanghai Jiao Tong University, Shanghai 200240, China}
\affiliation{School of Mathematical Sciences, Shanghai Jiao Tong University, Shanghai, 200240, China}
\author{Shi Jin}
\affiliation{Institute of Natural Sciences, Shanghai Jiao Tong University, Shanghai 200240, China}
\affiliation{Ministry of Education Key Laboratory in Scientific and Engineering Computing, Shanghai Jiao Tong University, Shanghai 200240, China}
\affiliation{School of Mathematical Sciences, Shanghai Jiao Tong University, Shanghai, 200240, China}
\author{Nana Liu}
\email{nana.liu@quantumlah.org}
\affiliation{Institute of Natural Sciences, Shanghai Jiao Tong University, Shanghai 200240, China}
\affiliation{Ministry of Education Key Laboratory in Scientific and Engineering Computing, Shanghai Jiao Tong University, Shanghai 200240, China}
\affiliation{School of Mathematical Sciences, Shanghai Jiao Tong University, Shanghai, 200240, China}
\affiliation{Global College, Shanghai Jiao Tong University, Shanghai 200240, China}

\date{\today}

\begin{abstract}
We introduce a unified framework --- Quantum Neural Ordinary and Partial Differential Equations (QNODEs and QNPDEs) --- which extends the continuous-time formalism of classical neural ordinary and partial differential equations into quantum machine learning and quantum control. QNODEs denote the evolution of finite-dimensional quantum systems, whereas QNPDEs denote their infinite-dimensional (continuous-variable) counterparts; both are governed by generalised Schrödinger-type Hamiltonian dynamics, coupled with a corresponding loss function. This formalism permits gradient estimation via an adjoint-state method, facilitating efficient learning of quantum dynamics, and other dynamics that can be mapped (relatively easily) to quantum dynamics. Using this method, we present quantum algorithms for computing gradients with and without time discretisation, achieving efficient gradient computation that would otherwise be intractable on classical devices. We provide detailed resource estimates for these algorithms and investigate the local energy landscape for training.
The formalism subsumes a wide array of applications, including quantum state preparation, Hamiltonian learning, learning dynamics in open systems, and the learning of both autonomous and non-autonomous classical ODEs and PDEs. In many cases of interest, the Hamiltonian is composed of a relatively small number of local operators, yet the corresponding classical simulation remains inefficient, making quantum approaches advantageous for gradient estimation. This continuous-time perspective can also serve as a blueprint for designing novel quantum neural network architectures, generalising discrete-layered models into continuous-depth models.
\end{abstract}

\maketitle

\tableofcontents

\section{Introduction}
The interplay between the study of classical dynamical systems and machine learning has been a fruitful one and holds much promise. The theory of dynamical systems provides a mathematical framework to study the evolution of systems over time, offering insights into stability, chaos, and long-term behavior. Machine learning, particularly in areas like recurrent neural networks and neural ODEs, draws on this theory to model temporal data and learn complex dynamics. Conversely, machine learning facilitates the  discovery of governing equations and predicting behaviors in high-dimensional or partially observed dynamical systems. This interplay enriches both fields, enabling data-driven discovery of physical laws and more robust and interpretable learning models. \\

A key step to generalising machine learning algorithms for the study of dynamical systems is to generalise discrete-time algorithms into their continuous-time counterparts. In this direction, one notable milestone is the introduction of neural ordinary differential equations (neural ODEs) \cite{2018NODE, e_proposal_2017}, which has since attracted significant attention. This approach was motivated by the deep connection between ODEs and Residual Networks (ResNet) \cite{lu_beyond_2018, haber_stable_2017, ruthotto_deep_2020}. In parallel, the extension to infinite-dimensional systems is natural, and we refer to these as neural PDEs, although this concept is also known by other names, such as PDE-Net \cite{long_pde-net_2018} and physics-informed neural networks (PINN) \cite{raissi_physics-informed_2019}, depending on the specific architecture and training details. Further generalisations have been made to stochastic dynamics, see e.g., \cite{liu_neural_2019, tzen_neural_2019, li_scalable_2020, kidger_neural_2021}. Another related framework is the SDE-based diffusion model \cite{song_score-based_2021}, which unifies earlier discrete-time Markov chain-based generative models \cite{song_generative_2019, ho_denoising_2020}, and has inspired research in generative modeling. 
This continuous-time perspective of neural ODEs offers several key advantages: (1) efficient computation enabled by the use of the adjoint-state method --- the continuous-time analogue of backpropagation algorithm \cite{rumelhart_learning_1986} --- together with classical numerical integrators for dynamical systems; (2) a principled analytical framework for studying properties such as stability and informing architecture design \cite{e_proposal_2017,haber_stable_2017}; and (3) a unified paradigm for designing neural network architectures \cite{lu_beyond_2018, haber_stable_2017, ruthotto_deep_2020}, providing an alternative to mapping-based architectures.\\

Nevertheless, some fundamental bottlenecks remain. In particular, the gradient estimation required to train these models via gradient-based methods can be computationally expensive, because classical simulation of high-dimensional dynamical systems often suffers from the curse of dimensionality, particularly for many-body quantum dynamics. To mitigate this, quantum simulation for quantum dynamical systems can, in principle, avoid this curse of dimensionality \cite{feynman_simulating_1982}; in some scenarios, analogous benefits also extend to non-quantum dynamical systems. \\

This motivates us to consider machine learning on quantum devices from the perspective of quantum dynamical systems (an analog perspective on quantum machine learning), which is much less studied compared to its digital counterparts \cite{li_hybrid_2017,tangpanitanon_expressibility_2020,markovic_quantum_2020,magann_pulses_2021,liang_variational_2022,leng2022differentiable,oscar_rodrigo_analog_2024}. For all quantum algorithms that rely on fundamental quantum mechanics, the most basic dynamical equation is Schr\"odinger's equation. Hence, we introduce the notion of (closed) quantum neural ODEs and PDEs (QNODEs and QNPDEs) where the state $\rho(t, \theta)$ evolves according to 
\begin{align}
\label{eqn::qnde}
    \frac{\partial \rho(t, \theta)}{\partial t}=-i \big[ H(t, \theta), \rho(t, \theta) \big], \qquad \rho(0, \theta)=\rho_0,
\end{align}
and $H(t,\theta)$ is a time-dependent Hamiltonian with controllable parameters $\theta$, and there is an associated loss function $\mathcal{L}$, dependent on the machine learning task. When $\rho(t, \theta)$ is a finite-dimensional quantum system, we refers to this as a (closed) QNODE and if $\rho(t, \theta)$ is an infinite-dimensional quantum system, it is called a (closed) QNPDE. If $\rho(t, \theta)$, instead, obeys open quantum dynamics, then this is an (open) QNODE or QNPDE. \\

This formalism is natural for quantum machine learning:
\begin{itemize}
    \item {\bf (Natural ansatz):} Firstly, this is a very natural formalism for the learning of quantum dynamics itself (namely, Hamiltonian learning \cite{wiebe_2014_hamiltonian}), and lends itself easily to the more traditional problems in quantum control and quantum state preparation. 
    
    \item {\bf (Wide applications):} Secondly, the learning of unknown parameters of other dynamical systems can be made more efficient by the mapping of these dynamical systems to the corresponding quantum dynamics. In particular, we know that for linear and nonlinear ODEs, linear PDEs and certain nonlinear PDEs, there is a very simple mapping possible through Schr\"odingerisation \cite{analogPDE, PRLschr}, level set \cite{JL-nonlinear} or von Neumann-Koopman formulation \cite{joseph2020koopman} that exploits the fully continuous nature of these problems, in both time and space. In this case, the Hamiltonian representation often can be decomposed into a relatively smaller number of operators (since the form of the problem is known) and this kind of ansatz is much more physically-motivated than a generic neural network ansatz. 
    
    \item {\bf (Time is naturally continuous):} Thirdly, in the design of quantum algorithms with the common gate-based methods, those quantum circuits still require the preparation of pulse-level control of the quantum system, which, in fact, runs in continuous time. The so-called discrete-time settings in reality cannot be truly discrete since quantum speed limits \cite{del_campo_quantum_2013} fundamentally forbid the instantaneous application of two distinguishable quantum channels one after the other. 
    
    \item {\bf (Connection to QNN):} Another motivation is to study the connection to other quantum setups like in quantum neural networks (QNN), which run in discrete time steps. Here we have a continuous-depth version which is non-autonomous (i.e., explicit time dependence in the Hamiltonian) and, in principle, the different discretisations can reproduce all other (discrete) architectures for quantum neural networks \cite{Nonauto2}. Different discretisations can give rise to different architectures. In addition, backpropagation can be performed more directly (a continuous-version of backpropagation). Also, since Hamiltonian dynamics in Eq.~\eqref{eqn::qnde} is reversible, it is known that for all reversible maps, the loss function gradients can be computed via the adjoint sensitivity method with constant memory cost independent of depth \cite{gomez2017reversible,2018NODE}. This decoupling of depth and memory can be important in applications \cite{gomez2017reversible}.
\end{itemize}

In our work, we show the feasibility of generalising continuous-time classical backpropagation into the quantum regime and present two algorithms for gradient estimation: one based on partial time discretisation and one that does not require any time discretisation at all. 
These two algorithms are summarised in Table~\ref{table::1}, which utilised the back-propagation structure of classical neural ODEs. These algorithms differ from the parameter-shift rule
\cite{li_hybrid_2017,mitarai_quantum_2018,Schuld_2019_evaluating,kottmann_evaluating_2023,wierichs_general_2022} for digital quantum devices, which requires quantum measurements using a collection of different parameter values per iteration of the gradient estimate;  
In our case, the quantum measurements are performed using the same set of parameters at each stage, and almost no restrictions are imposed on the Hamiltonian, except the following general decomposition:
 \begin{align}
 \label{eqn::H_ansatz}
 H(t, \theta)=\sum_{k=1}^\numop f_k(t, \theta)H_k, \qquad f_k(t, \theta) \in \mathbb{R},
 \end{align}
 where  $H_k \neq H_{k'}$ for all $k \neq k'$ are time-independent Hermitian operators, then the cost for gradient estimation depends only on $K$, and thus not a priori dependent on the number of tuneable parameters $M$ for $\theta$; see more discussion in \secref{sec::cost}.\\

 After a very brief background section, we introduce the QNODE and QNPDE formalism in Section~\ref{sec:scopeapps}. In particular, we present the applications of this formalism to learning quantum state preparation and Hamiltonian learning (Section~\ref{sec:appclosed}), certain learning problems in open quantum systems (Section~\ref{sec:appopen}), learning (classical) ODEs/PDEs (Section~\ref{sec:appodepde}) and learning non-autonomous systems (Section~\ref{sec:appnonauto}). Furthermore, in Section~\ref{sec:appqnn} we show how this provides a general framework for a large class of quantum neural networks in the continuous-time limit, from which new quantum neural network architectures could emerge from choosing different discretisations.
 In Section~\ref{sec:mixed}, we present our quantum algorithms for gradient estimation, using partial time discretisation (Section~\ref{section::quantum_alg}) and without any time discretisation (Section~\ref{alg::no_discretization}), then discuss the costs in Section~\ref{sec:complexity}, as well as local energy landscape in Section~\ref{subsec::landscape}. Some simple numerical examples are demonstrated in Section~\ref{sec::example}. 
 
\begin{table}[h!]
\caption{Summary of the main quantum gradient estimation algorithms for QNODEs/QNPDEs, see Section~\ref{sec:mixed} for more details.}
\vspace{\baselineskip}
\begin{NiceTabular}[width=17cm]{X[3,l,m]X[2,l,m]X[2,l,m]X[3,l,m]X[6,l,m]}[]
\hline
{\bf Algorithm} &  {\bf Theorem} & {\bf Circuit} & {\bf Treatment of time} & \medskip \hspace{2em} {\bf Applications} \medskip \\
\hline
\medskip {\bf QNODE/QNPDE gradient estimation algorithm (v1)} &   \thmref{thm::main} & \figref{fig::alg1::parta} & Requires partial time discretisation & \multirow{2}{\textwidth}{\vspace{0.07\baselineskip}
\begin{minipage}{\textwidth}\raggedright 
\begin{itemize}
\item State preparation and Hamiltonian learning (closed system (\ref{sec:appclosed}); open system (\ref{sec:appopen}));
\item Learning ODEs/PDEs (\ref{sec:appodepde}); 
\item Learning non-autonomous systems (\ref{sec:appnonauto});
\item Continuous-time framework for QNNs (\ref{sec:appqnn}). 
\end{itemize}
\end{minipage}  }\vspace{5\baselineskip}  \\
\bigskip {\bf QNODE/QNPDE gradient estimation algorithm (v2)} \bigskip & \thmref{thm5} & \figref{fig::alg1::partb} & Fully continuous in time & \\
\hline
\end{NiceTabular}
\label{table::1}
\end{table}

\section{Background}
In this section, we briefly review the notation used in quantum information theory, and revisit classical neural ODEs/PDEs in classical machine learning.

\subsection{Basic notation in quantum information processing}

We first summarize the notation for discrete-variable (DV) and continuous-variable (CV) quantum states used in later sections. Quantum information processing is often described in terms of qubits, the quantum analogue of classical bit strings $\{0,1\}^{\otimes n}$. A qubit uses the eigenbasis $\{|0\rangle, |1\rangle\}$ for the computational basis and resides in a two-dimensional Hilbert space $\hbt_{\text{qubit}}$. An $n$-qubit system lives in a $D=2^n$-dimensional Hilbert space, given by the tensor product $\big(\mathcal{H}_{\text{qubit}}\big)^{\otimes n}$. The corresponding quantum state can be written  as $|u\rangle=\frac{1}{\|\mathbf{u}\|} \sum_{j=1}^D u_j|j\rangle$, where the normalisation is $\|\mathbf{u}\|^2=\sum_{j=1}^D |u_j|^2$. This notation can also be used to denote a $D$-dimensional {\it qudit} system. Mathematically this is equivalent to a system of $n=\log_2 D$ qubits, but in physical implementation, these are distinct. Both qubits and qudits are discrete-variable (DV) quantum systems. \\

A continuous-variable (CV) quantum state, or {\it qumode}, spans an infinite-dimensional Hilbert space $L^2(\Real)$ \cite{lloyd_quantum_1999}. A qumode is the quantum analogue of a continuous classical degree of freedom, such as position or momentum. It is equipped with observables with continuous spectra, such as the position $\hat{x}$ and momentum $\hat{p}$ operators. The eigenbasis $\{|x\rangle\}_{x \in \mathbb{R}}$ consists of eigenstates of $\hat{x}$, and a qumode can be expressed as $|u\rangle = \frac{1}{\|\mathbf{u}\|} \int u(x)|x\rangle dx$, with normalisation $\|\mathbf{u}\|^2=\int |u(x)|^2 dx$. Integrals $\int$ without terminals denote $\int=\int_{-\infty}^{\infty}$ unless otherwise specified. If $x=(x_1, \cdots, x_m)$ then the above represents a system of $m$ qumodes. Although CV quantum information processing is less common than qubit-based approaches, these states are in fact natural for many quantum systems, and is particularly important for the simulation of linear PDEs while preserving continuity in space \cite{analogPDE}. The quadrature operators of a qumode are $\hat{x}$ and $\hat{p}$, where $[\hat{x},\hat{p}]=i \unit$. If we let $|x\rangle$ and $|p\rangle$ denote the eigenvectors of $\hat{x}$ and $\hat{p}$ respectively, then $\langle x|p\rangle=\exp(ixp)/\sqrt{2\pi}$. The position and momentum eigenstates each form a complete eigenbasis so $\int  \ketbra{x} dx =\unit =\int  \ketbra{p} dp$. Here the quantised momentum operator $\hat{p}$ is also associated with the spatial derivative through $i\hat{p}|u\rangle \propto \int \partial u/\partial x|x\rangle dx$. Similarly for the position quadrature, we have the association $\hat{x}|u\rangle \propto \int x u(x)|x\rangle dx$. This is important for associating Hamiltonians acting on qumodes with PDEs \cite{analogPDE}. \\

We denote the Hilbert space for the Hamiltonian dynamics as $\hbt$ (which may be finite- or infinite-dimensional), the adjoint system as $\hbt_{\adj}$ (with the same dimension as $\hbt$), and the auxiliary qubit system as $\hbt_{\aux}$; see \secref{sec:mixed}. We use $\norm{\cdot}_p$ to denote the Schatten-$p$ norm of operators; in particular, $\norm{\cdot}_{\infty}$ is the usual operator norm (namely, largest singular value of the operator). 
For two operators $A$ and $B$, $A\succeq B$ or $B\preceq A$ means that $A - B$ is a positive semidefinite operator.
The notation $\mathcal{O}(\cdot)$ is the big-O notation and $\widetilde{\mathcal{O}}(\cdot)$ refers to the big-O notation while suppressing terms with logarithmic scaling; we may alternatively use notation $f\lesssim g$ to represent $f = \order{g}$ and $f\gtrsim g$ to represent $g = \order{f}$ for convenience in presenting detailed proofs.

\subsection{Classical neural ODE and PDE}

The concept of neural ordinary differential equations (ODEs) was formally introduced in \cite{2018NODE,e_proposal_2017}, referring to a general parametric family of ODEs of the form:
\begin{align}
\label{eqn::classical_ode}
\frac{d}{dt} y(t) = f\big(y(t), t, \theta\big),
\end{align}
where $f: \Real^n \times \Real \times \Omega \to \Real^n$ is a (potentially nonlinear) function parameterised by $\theta \in \Omega$, with $\Omega$ denoting the domain of parameters. As already discussed in introduction, this framework is motivated by the connection between ODEs and residual networks \cite{lu_beyond_2018, haber_stable_2017}. For example, applying the Euler discretization with time step $\Delta t$ yields
\begin{equation*}
y_{k+1} = y_k +  f(y_k, t, \theta)\Delta t,
\end{equation*}
which mirrors the structure of ResNet. The continuous-time perspective enables new approaches to neural network architecture design \cite{haber_stable_2017,lu_beyond_2018}, and facilitates efficient training via the adjoint equation, which generalises the classical backpropagation algorithm \cite{rumelhart_learning_1986} to the continuous time.
Schr{\"o}dinger's equation is also of the form in  Eq.~\eqref{eqn::classical_ode}, and one may utilise classical machine learning for quantum machine learning problems. However, there is a concern of the scalability in dimensionality of classical machine learning for high-dimensional quantum systems, which motivates the study in the follow-up sections. \\

The finite-dimensional state $y(t)$ in Eq.~\eqref{eqn::classical_ode} can be generalised to a function $u(t)$, with $f$ replaced by a parameterised differential operator $\mathscr{D}$, leading to what we refer to as {neural PDEs}:
\begin{align*}
\frac{d}{d t} u(t) = \mathscr{D}\big(u(t), t, \theta\big).
\end{align*}
Depending on the training scheme and loss function, this approach is known by various names in the literature, such as PDE-Net \cite{long_pde-net_2018} and physics-informed neural networks (PINN) \cite{raissi_physics-informed_2019}, among others.

\section{Quantum neural ODE and PDE: scope and applications}
\label{sec:scopeapps}

We will begin with introducing some general concepts about quantum machine learning, and then introduce the quantum neural ordinary and partial differential equations. Then the section is followed by many applications: Hamiltonian learning for closed and open quantum systems, learning ODEs and PDEs, and learning non-autonomous systems.
Finally, we will discuss how the framework of QNODE/QNPDE could offer insights into designing quantum neural networks.

\subsection{General concepts}

For \textit{any} learning algorithm whose task is to learn an approximation of unknown quantum states (parameterised as $\rho(\theta)$) or unknown quantum channels (parameterised as $\mathcal{E}(\theta)$), it must involve a functional $\mathcal{L}$:
\begin{align*}
    \mathcal{L}: \qquad \rho(\theta),\  \mathcal{E}(\theta) \rightarrow \mathbb{R}, \qquad \theta=(\theta_1, \cdots, \theta_M) \in \mathbb{R}^M.
\end{align*}
Then one aims to solve for $\theta$ that minimizes the loss function 
\begin{align} \label{eq:argL}
    \text{argmin}_{\theta}\, \mathcal{L}_{\theta},
\end{align}
where we use the notation $\mathcal{L}_{\theta}$ for simplicity, and it should be understood as $\loss_\theta = \loss\big(\rho(\theta)\big)$ or $\loss_\theta = \loss\big(\mathcal{E}(\theta)\big)$ depending on the context. Due to the close relationship between the quantum channel that creates the density matrix from a known state and the density matrix itself, without loss of generality, we will consider $\loss_\theta = \loss\big(\rho(\theta)\big)$ mostly below.\\

Since all quantum states or channels ultimately arise from physical processes, they must also evolve in real, continuous time $t \in \mathbb{R}$. Any discretisation in time that arises in some scenarios -- for instance in digital quantum simulation -- is some approximation of the continuous time setting. The so-called discrete-time settings in reality cannot be truly discrete since quantum speed limits \cite{del_campo_quantum_2013} fundamentally forbids the instantaneous application of two distinguishable quantum channels one after the other. Furthermore, we will see that there are many natural applications where continuity in time is an essential requirement. Thus, for these learning tasks, we must consider the dynamical laws -- in continuous time $t$ -- corresponding to the quantum state $\rho(t, \theta)$ and quantum channels $\mathcal{E}(t, \theta)$. The corresponding loss function $\mathcal{L}_{\theta}(t=T)$ can either result from $\rho(t=T, \theta)$ or $\mathcal{E}(t=T, \theta)$ at some terminal time $T$ or certain running cost e.g., $\loss_{\theta}=\int_0^T \loss(\rho(t,\theta)) dt$ or be a combination of both (for instance in general quantum optimal control problems). Since the running cost, after being discretised, could be viewed as a linear combination of terminal costs, we will simply consider the terminal cost throughout this work. \\

Given the form of $\mathcal{L}_{\theta}$ that is motivated by the learning problem at hand, the key next step is to approximately solve the optimization problem in Eq.~\eqref{eq:argL}. 
One may use non-gradient based optimisation methods which have also been used in \cite{tangpanitanon_expressibility_2020,liang_variational_2022}. For more efficiency in optimisation, we will focus on the gradient-based methods for Eq.~\eqref{eq:argL}.
Gradient-based methods require first and foremost the efficient computation of the gradients $\partial \mathcal{L}/\partial \theta_m$, $m=1, \cdots, M$. When the system dimension is large, in some scenarios it is possible to use quantum simulation to compute the gradient more efficiently. For digital quantum devices, a widely studied approach is the parameter-shift rule \cite{li_hybrid_2017,mitarai_quantum_2018,Schuld_2019_evaluating,kottmann_evaluating_2023,wierichs_general_2022}, which has recently also been used in \cite{leng2022differentiable} to design a forward-type algorithm. It remains an open question whether a backward-type algorithm is possible, which is addressed by our new quantum algorithms in Section~\ref{sec:mixed}; see full details therein.
We will see that these gradient computation algorithms require the ability to prepare -- as an initial state input -- either the functional derivative $\dloss\vert_{\rho=\rho(T)}$ being proportional to a density matrix or require $\dloss\vert_{\rho=\rho(T)}$ to be expressed as a known linear combination of density matrices which can each be prepared.\\

Note that $\mathcal{L}_{\theta}$ is a functional on the space of density operators, equipped with Hilbert-Schmidt inner product, the change of the loss function, denoted as $\dloss{}$, is defined as follows:  
\begin{align*}
\frac{d}{d\epsilon}\loss(\rho + \epsilon A)\Big\rvert_{\epsilon=0} = \tr\Big(\dloss{} A \Big), \qquad \text{for all feasible Hermitian operators } A.
\end{align*}
Note that $\dloss{}$ is a \emph{Hermitian operator} on the Hilbert space of the original quantum system $\hbt$. An important class of loss functions that can satisfy the above conditions includes
\begin{align} \label{eq:lossdef}
\begin{aligned}
   \mathcal{L}_{\theta}\ = &\ \tr\big(\rho(T, \theta)\obs\big), \qquad \obs=\sum_{j=1}^J c_j \obs_j=\obs^{\dagger}, \\
& \obs_j \succeq 0, \quad \tr(\obs_j)=1, \quad c_j \in \mathbb{R},
\end{aligned}
\end{align}
where $\rho(t, \theta)=\mathcal{E}(t, \theta)(\rho_0)$ can obey either closed quantum dynamics or open quantum dynamics, $c_j$ are known constants and we assume that we can be given access to $O_j$ as density matrices. For example, here it is clear that $\dloss \vert_{\rho=\rho(T)}=O$. Although we will examine applications in Sections~\ref{sec:appclosed} -- \ref{sec:appqnn} where the loss functions can mostly be expressed in the form in Eq.~\eqref{eq:lossdef}, the formalism itself is not solely confined to this loss function form. The quantum circuits for the gradient estimation algorithms in Section~\eqref{sec:mixed} can also be applied  to generic loss functions as long as $\dloss{}$ is proportional to or can be decomposed to density matrices.\\

\subsection{Quantum neural ODEs and PDEs}

To proceed with this fairly general setting, we first need to define the concept of quantum neural ODEs and PDEs, analogously to (classical) neural ODEs/PDEs, so we focus primarily the formulation where $\rho(t, \theta)$ evolves via dynamics in \textit{continuous time}, rather than discrete time.  

\begin{definition} \label{def:qnode}
    Let the unitary evolution of the initial quantum state with density matrix $\rho_0$ obey the dynamical equation
    \begin{align} \label{eq:schrdensity1}
    \frac{\partial \rho(t, \theta)}{\partial t}=-i \big[ H(t, \theta), \rho(t, \theta) \big], \qquad \rho(0, \theta)=\rho_0, \qquad H(t, \theta)=H^{\dagger}(t, \theta), \qquad \theta=(\theta_1, \cdots, \theta_M).
\end{align}
The parameters $\{\theta_m\}_{m=1}^M$ are tuned to optimise a given loss function $\mathcal{L}_{\theta}$. We refer to the system in Eq.~\eqref{eq:schrdensity1} with its corresponding $\mathcal{L}_{\theta}$ collectively as
\begin{enumerate}
\item a (closed) \textit{quantum neural ordinary differential equation (QNODE)}, if $\rho(t,\theta)$ are finite $D$-dimensional quantum states (DV); 
\item a (closed) \textit{quantum neural partial differential equation (QNPDE)} when $\rho(t,\theta)$ are a system of $d$ qumodes (CV); 
\item a (closed) \textit{hybrid QNPDE} when $\rho(t, \theta)$ is a hybrid CV-DV quantum state.
\end{enumerate}
If $\rho(t, \theta)=\mathcal{E}(t,\theta)(\rho_0)$ where $\mathcal{E}(t,\theta)$ is a quantum channel that instead describes open quantum system dynamics, then depending on if $\rho(t, \theta)$ is a DV or CV, or hybrid DV-CV quantum state, the corresponding ODE or PDE for $\rho(t, \theta)$ is referred to as an \textit{(open) QNODE}, \textit{(open) QNPDE} or \textit{(open) hybrid QNPDE}.  
\end{definition}
In this paper, we examine this formalism for wide range of applications of general closed QNODEs/QNPDEs for loss functions of the type in Eq.~\eqref{eq:lossdef}. In these cases, it is sufficient to learn the generating Hamiltonian $H(t, \theta)$ in Eq.~\eqref{eq:schrdensity1}, which has the general decomposition in Eq.~\eqref{eqn::H_ansatz} (also listed below):
\begin{align*}
H(t, \theta)=\sum_{k=1}^\numop f_k(t, \theta)H_k, \qquad f_k(t, \theta) \in \mathbb{R}, \qquad H_k=H_k^{\dagger},
\end{align*}
where $H_k$ are time-independent Hermitian operators.  If classical data $z \in \mathbb{R}^N$ is also included, then we can augment $H(t, \theta) \rightarrow H(t, \theta, z)=\sum_{k=1}^\numop f_k(t, \theta, z) H_k$, but we will suppress the $z$ parameters for simplicity unless explicitly required. \\

For finite $D$-dimensional quantum systems, if the Lie algebra generated by the set $\{iH_k\}_{k=1}^K$ under commutation and linear combination is dense in $SU(D)$, then $H(t,\theta)$ can generate an arbitrary unitary $U(t)=\mathcal{T}\exp(-i \int_0^t H(\tau, \theta) d\tau)$ with suitably chosen $f_k(t, \theta)$ and in general $K_{\max}=D^2-1$, although many terms $f_k(t,\theta)$ can also be zero. We can also have $K \ll D^2-1$ depending on applications that do not require universality. Below are a few examples of Eq.~\eqref{eqn::H_ansatz}. When $D=2$, $H_1=\sigma_x, H_2=\sigma_y, H_3=\sigma_z$. For continuous-variable quantum systems, for approximate universality in generating unitaries, it is sufficient for the set $\{H_k\}_{k=1}^K$ to contain only up to quadratic Hamiltonians (i.e., degree one or two in $\hat{x}$, $\hat{p}$) that generate only Gaussian operations, and only a single non-Gaussian Hamiltonian (e.g., $H_k=\hat{x}^3$) and access to arbitrary $f_k(t)$ \cite{lloyd_quantum_1999}. For example, for a single qumode, $H_1=\hat{x}, H_2=\hat{p}, H_3=\hat{x}^2, H_4=\hat{p}^2, H_5=\hat{x}\ \hat{p}, H_6=\hat{x}^3$ is one possible choice.\\

There are primarily two types of applications to consider and these demonstrate the two key advantages of this continuous-time formalism:  
\begin{itemize}

    \item {\bf (Special classes of data -- no universality required)}: It naturally captures the class of learning problems where the data arises directly from continuous-time dynamics associated with Eq.~\eqref{eq:schrdensity1} or with ODEs/PDEs that can be naturally mapped onto Eq.~\eqref{eq:schrdensity1}. This includes most scenarios that arise from physical phenomena obeying known linear dynamical laws and some obeying nonlinear dynamics. Here $H_k$ appears naturally and $K$ is not often too large; 
    
    \item {\bf (General data -- universality required)}: It can also capture the class of learning problems more appropriate for \enquote{messy dynamics}, where the form of the dynamical laws that naturally generates these data are not generally known nor can be written in simple form (e.g. dynamics generating pictures of cats versus those generating dogs). Then universality in representation is more appropriate, and the learning task is to approximate $f_k(t, \theta)$ using some basis expansion. Although it is sufficient to choose $K=K_{\max}$, but $K \ll K_{\max}$ is also possible by appropriate choices of $f_k(t, \theta)$, and these different choices correspond to different \enquote{learning architectures}.
\end{itemize}

Here the choices of $\{f_k(t, \theta)\}_{k=1}^K$ and the basis which we choose to approximate these functions is thus a central question. The form of $f_k(t, \theta)$ could arise from either (not deliberate) natural dynamics, or it could arise from (deliberate) quantum control. In quantum control $H(t, \theta)=H_0+\sum_{k=1}^K f_k(t, \theta)H_k$, 
where $H_0$ is the drift (free) Hamiltonian, $\sum_{k=1}^K f_k(t, \theta)H_k$ is known as the control Hamiltonian, and $f_k(t, \theta)$ are the time-dependent control functions. One key class of control fields is chirped fields, which dynamically modulates frequencies to achieve precise quantum state transitions. For example 
(1) $f_k(t, \theta)=\sum_{i} f^{(i)}_k(\theta)(\Theta(t-t_i)-\Theta(t-t_{i+n_i}))$ are piecewise constant, where the time interval of each non-trivial signal is $t_{i+n_i}-t_i$. The notation $\Theta$ denotes the Heaviside function on the real line.
This is known as bang-bang control and also corresponds to the gate-based circuit picture for quantum simulation; 
(2)$f_k(t, \theta)=\sum_n a^{(k)}_n(\theta) \cos(n \omega_k(\theta) t)+b^{(k)}_n(\theta) \sin (n \omega_k(\theta) t)$ corresponds to harmonic driving, and appears for example in NMR pulse shaping and Rabi oscillations; 
(3) $f_k(t, \theta)=\sum_n a^{(k)}_n (\theta)e^{-i\omega_{n,k}(\theta)t}$ for example used in GRAPE; 
(4) $f_k(t, \theta)=\sum_n c^{(k)}_n (\theta) t^n$ is used in higher-order optimal control; 
(5) $f_k(t, \theta)=\sum_n A^{(k)}_n(\theta) e^{- (t-T/2)^2/2\sigma^2_{n,k}(\theta)}$ corresponds to Gaussian or adiabatic control, and appears in Stimulated Raman adiabatic passage (STIRAP) in quantum state transfer and adiabatic quantum computing.\\

This formulation with simple loss functions in Eq.~\eqref{eq:lossdef} is already relevant to a broad range of applications, including:  
\begin{enumerate}
    \item learning an unknown quantum state preparation and  Hamiltonian learning (Section~\ref{sec:appclosed});
    \item certain learning problems in open quantum systems (Section~\ref{sec:appopen});
    \item learning classical ODEs/PDEs (Sections~\ref{sec:appodepde});
    \item learning non-autonomous systems (Section~\ref{sec:appnonauto}); 
    \item providing a general formulation for a large class of quantum neural networks (Section~\ref{sec:appqnn}) in the continuous-time limit, from which new quantum neural network architectures could emerge from choosing different discretisations.
    \end{enumerate}
We will now examine each application in turn below and provide simulations for toy examples in Section~\ref{sec::example}.

\subsection{Learning in closed quantum systems} \label{sec:appclosed}

\noindent \textit{Quantum state preparation --} The first application we can consider is quantum target state preparation from a known initial state $\rho_0$. Here the task is to approximate a given state $\sigma$, where we choose the loss function as follows:
\begin{align}
\label{eqn::loss_generative}
    \mathcal{L}_{\theta}=1-\tr\big(\sigma \rho(T,\theta)\big) \equiv \tr\big( (\unit - \sigma) \rho(T,\theta)\big),
\end{align}
where $\rho(T, \theta)=U(T, \theta)\rho_0 U(T, \theta)^{\dagger}$ and $\sigma=U\rho_0 U^{\dagger}$ and $U$ may be a black-box. This is a relevant loss function whether $\rho_0$ is a pure or a mixed state. When $\rho_0$ is a pure state, then the loss function is simply $1-F(\rho(T, \theta), \sigma)$ where $F(\rho(T, \theta), \sigma)$ is the quantum fidelity between $\sigma$ and $\rho(T, \theta)$. When $\rho_0$ is a mixed state, then minimising the above loss function is equivalent to minimising the Hilbert-Schmidt norm of quantum states, since
\begin{align*}
\begin{aligned}
   &\ \text{argmin}_{\theta}\ \frac{1}{2}\|\rho(T, \theta) -\sigma\|_{\text{HS}}^2 = \text{argmin}_{\theta}\ \frac{1}{2}\tr\big((\rho(T,\theta) - \sigma)^2\big) \\
   &\qquad = \text{argmin}_{\theta}\ \frac{1}{2}\Big(\tr\big(\rho(T,\theta)^2\big) + \tr\big(\sigma^2\big)\Big) - \tr\big(\sigma\rho(T,\theta) \big) \\
   &\qquad = \text{argmin}_{\theta}\ \tr\big(\rho_0^2) - \tr\big(\sigma\rho(T,\theta) \big) \\
   &\qquad =\text{argmin}_{\theta}\ \mathcal{L}_{\theta},
   \end{aligned}
\end{align*}
where $\| \cdot \|_{\text{HS}}$ denotes the Hilbert-Schmidt norm and the purity of $\rho(T, \theta)$ does not change under unitary transformation $U(T, \theta)$ so $\tr(\rho^2(T, \theta)) = \tr(\rho_0^2)$ and similarly applies to $\sigma = U \rho_0 U^\dagger$. For simulations of a toy example, see Section~\ref{eg::qsp}.\\

We note that this differs from the canonical target state preparation, since we assume the classical description of the target state is not provided. Although the target is a quantum state that cannot be cloned, we assume access to trusted copies of a pure quantum state $\dloss \vert_{\rho=\rho(T)}\propto \sigma$ from a third party, but the state preparation procedure (i.e., the black-box quantum dynamics $U$) is hidden from us. Our goal is to learn this preparation. This can be viewed as a certain quantum analogue of a generative model, whose classical version aims to find a dynamics (or mapping) to reproduce a probability distribution given finite samples. \\

\noindent \textit{Hamiltonian-learning with access to input and output states --} Suppose there is a Hamiltonian dynamics with $H(t)$ and we only have access to the inputs and outputs of the corresponding unitary evolution $U(t)$, and we can control the total evolution time \cite{wiebe_2014_hamiltonian}. The goal is to find a representation of this unknown system by identifying the Hamiltonian $H(t, \theta) \sim H(t)$ that generates this unitary. Such a Hamiltonian may describe either a finite-dimensional quantum system (Hamiltonian learning for ODEs) or an infinite-dimensional (continuous-variable) quantum system (Hamiltonian learning for PDEs). Unlike the previous example, where we assumed access to a single output state $\sigma$, here we assume the access to the entire unitary generated by the unknown Hamiltonian, thus allowing us to use different input states and possibly different evolution time. \\

To probe the system, we prepare a set of initial (pure) states $\{\rho_j\}_{j=1}^{\sample}$, which are input into the black box quantum system, which returns the corresponding quantum states at various times $T_j$, denoted as $\{\sigma_j(T_j) \}_{j=1}^{\sample}$. We can also input these initial states into a controllable and trustful unitary system generated by a parameterised Hamiltonian $H(t, \theta)$, with outputs $\big\{\rho_j(T_j, \theta)\big\}_{j=1}^{\sample}$. Our aim is to find a parameterised Hamiltonian $H(t, \theta)$ that best approximates the corresponding unitary evolution. The loss function is naturally defined as:
\begin{align}
\label{eqn::loss_eg2}
\mathcal{L}_{\theta} = \frac{1}{\sample} \sum_{j=1}^{\sample} \Big( 1 - \tr\big(\sigma_j(T_j) \rho_j(T_j,\theta)\big)\Big),
\end{align}
which is simply a linear combination of loss functions as in Eq.~\eqref{eqn::loss_generative} from the previous example.
We remark that this loss function is similar to the one used in gate synthesis tasks in \cite{leng2022differentiable} mathematically -- but their application settings are different. This approach has a similar flavor as the interactive quantum likelihood evaluation (IQLE) in \cite{wiebe_2014_hamiltonian}. For a numerical example, see Section~\ref{sec::example}.\\

\noindent \textit{Hamiltonian-learning with access to observables of output operators --} In this case we are still in the Hamiltonian-learning framework as in the previous example, except now we assume access only to the expectation values of observables, not the initial and final quantum state outputs. In this case, we minimize
\begin{align}
\label{eqn::loss_observable}
\mathcal{L}_{\theta}=\frac{1}{\sample N} \sum_{i=1}^{\sample} \sum_{j=1}^{N} \abs{\tr\big(\obs_j \sigma_i(T_i)\big) - \tr\big(\obs_j \rho_i(T_i,\theta)\big)}^2, 
\end{align}
where $O_j$ are observables. This loss function aims to match the tomography data of the target and parameterised states \cite{odonnell_efficient_2016}, or, in mathematical terms, estimates the discrepancy of target and parameterised states using the weak formalism in training.\\

For each term in the sum with observable operator $O_j$, fixed input state $\rho_i$, and time $T_i$ (we omit subscripts below for simplicity), the functional derivative is
\begin{align}
\label{eqn::deri_observable}
\frac{\delta \mathcal{L}}{\delta \rho}\Big\rvert_{\rho=\rho(T,\theta)} = 2 \Big(\tr\big(O \rho(T,\theta)\big) - \tr\big(O \sigma(T)\big)\Big) O,
\end{align}
so $\dloss \vert_{\rho=\rho(T,\theta)} \propto O$.  When $\rho(T, \theta)$ is a finite-dimensional quantum system, 
all observables $O$ can be written in terms of weighted sums of states in the computation basis: 
when $\obs$ are local Pauli matrices, e.g., $\obs = \sigma_X\otimes \unit$, then we can decompose this as $\obs/2 = \ketbra{+}\otimes \nicefrac{\unit}{2} - \ketbra{-} \otimes \nicefrac{\unit}{2}$, where each term corresponds to a density matrix; see also Appendix~\ref{appendix::pauli} for other examples. 
Since $\dloss \vert_{\rho=\rho(T,\theta)}$ above is not normalised, there is a prefactor term that affects the contribution of each gradient term when summed, and thus influences the learning rate. See a numerical example and discussion about dealing with the normalisation prefactor in Section~\ref{section::hl_observable}.

\subsection{Learning in open quantum systems}
\label{sec:appopen}

We examine the loss function in Eq.~\eqref{eq:lossdef} where $\rho(t, \theta)=\mathcal{E}(t, \theta)(\rho_0)$ now undergoes open quantum dynamics. This scenario  arises from the ability to probe only a sub-system of a large closed quantum system. 
Our goal is to find an effective closed Hamiltonian dynamics as a surrogate model for the original Hamiltonian. Hence, we parameterise the purification of $\rho(t,\ \theta)$ by defining a parameterised unitary $U(t, \theta)$ and a pre-determined initial ancilla state $|e\rangle \langle e|$ where $\mathcal{E}(t, \theta)(\rho_0)=\tr_e \big(U(t, \theta)(\rho_0 \otimes |e\rangle \langle e|) U(t, \theta)^{\dagger} \big)$. Thus, Eq.~\eqref{eq:lossdef} becomes 
\begin{align*}
    \mathcal{L}_{\theta}=\tr\big(M\ R(T, \theta)\big), \qquad M= O \otimes \mathbf{1}_e, \qquad R(t, \theta)=U(t, \theta)(\rho_0 \otimes |e\rangle \langle e|)U(t, \theta)^{\dagger}, 
\end{align*}
where $\mathcal{L}_{\theta}$ and the unitary dynamics for $R(t, \theta)$ can still be described with a (closed) QNODE/QNPDE. Then the same applications in Section~\ref{sec:appclosed} still apply here, except here we are instead learning the purification of a target state and the Hamiltonians generating the purification.
\\

We can also look at an alternative loss function that is nonlinear in $\rho(t, \theta)$, namely one minus the purity of the state
\begin{align}
\label{eqn::purity}
    \mathcal{L}_{\theta}\ =\ 1 - \tr\big(\rho^{2}(T, \theta)\big), \qquad \rho(t, \theta)=\tr_e \big(U(t, \theta)(\rho_0 \otimes |e\rangle \langle e|) U(t, \theta)^{\dagger}\big).
    \end{align}
This loss function is associated with learning the purification of $\rho(t, \theta)$ itself. Here $$\dloss \Big\vert_{\rho=\rho(T)} \propto \rho(T, \theta),$$
which we assume can be prepared to serve as an input quantum state to the gradient estimation algorithms in Section~\ref{sec:mixed}. 
Another nonlinear loss function is the R{\'e}nyi entropy, $\mathcal{R}_n(\rho) = \frac{1}{1-n} \ln \tr\big(\rho^n)$, which has been used to study the entanglement \cite{wang_calculating_2020} if $\rho$ is a reduced density matrix as above, and the $n=2$ case is related to purity. For $n>3$, $\delta \mathcal{L}/\delta \rho \propto \rho^{n-1}/\tr(\rho^n)$ needs to be prepared, and this could be achieved if one prepares multiple copies of $\rho(T,\theta)$. For simplicity, we will not consider these settings in this paper. 

\subsection{Learning ODEs and PDEs}
\label{sec:appodepde}

Learning ODEs and PDEs is an important application area in machine learning. 
When the system dimension is very large, computing gradients of learning objectives becomes costly because it at least first requires simulating high-dimensional ODEs/PDEs.
Most classical numerical methods, like finite-difference, finite element or spectral methods, generally suffer from the curse of dimensionality. 
Quantum algorithms offer an option: by simulating classical ODEs/PDEs on quantum devices,  it could potentially mitigate the issue of the curse of dimensionality; 
to achieve the training, we could embed these classical learning problem into their corresponding QNODE/QNPDEs framework, so as to employ quantum algorithms also for the gradient computation required (see algorithms in \secref{sec:mixed}). 
This quantum approach could serve as an alternative route to classical machine learning for high-dimensional PDEs \cite{e_deep_2018,raissi_physics-informed_2019}. \\

{\noindent \emph{Part (i): Embed the dynamics into Hamiltonian learning.}}\\

We focus on linear ODEs and PDEs for $\mathbf{u}(t)$ of the form
\begin{align} \label{eq:generalde}
    \frac{d \mathbf{u}(t)}{dt}=-i\mathbf{A}(t) \mathbf{u}(t), \qquad \mathbf{u}(t=0)=\mathbf{u}_0,
\end{align}
where in general $\mathbf{A}(t) \neq \mathbf{A}^{\dagger}(t)$. Eq.~\eqref{eq:generalde} denotes a system of $D$-dimensional linear ODEs when $\mathbf{u}(t)=\sum_{i=1}^D u_i(t)|i\rangle$ is a $D$-dimensional quantum state and $\mathbf{A}(t)$ is a $D \times D$ matrix. Eq.~\eqref{eq:generalde} can also denote a $d+1$-dimensional linear PDE when $\mathbf{u}(t)=\int u(t,x)|x\rangle dx$ is a system of $d$ qumodes with $x \in \mathbb{R}^d$, and $\mathbf{A}(t)$ is composed of quadrature operators $\hat{x}, \hat{p}$ using the correspondence $x \rightarrow \hat{x}$, $\partial/\partial x \rightarrow i \hat{p}$, see \cite{analogPDE}. Eq.~\eqref{eq:generalde} can also denote a system of $D$ linear $d+1$-dimensional PDEs, then $\mathbf{u}(t)=\int \sum_{i=1}^D u_i(t, x)|i\rangle |x\rangle dx$ is a hybrid CV-DV quantum state.\\

It is possible to use the (closed) QNODE/QNPDE formalism for general \textit{non-unitary} evolution in Eq.~\eqref{eq:generalde} if we can map the system in Eq.~\eqref{eq:generalde} in a very simple way to the system in Eq.~\eqref{eq:schrdensity1}, i.e., a simple correspondence between $\mathbf{A}(t)$ and the Hamiltonian $\mathbf{H}(t)$. This can be realized by  Schr\"odingerisation (see \cite{analogPDE, PRLschr} and references therein), where the corresponding system in Eq.~\eqref{eq:schrdensity1} acts on a system of the same size as Eq.~\eqref{eq:generalde} plus one ancilla qumode. With Schro\"odingerization, one can recover $\mathbf{u}(t)$ from Eq.~\eqref{eq:generalde} by first simulating the solution $\mathbf{v}(t)$ following unitary dynamics (equivalent to Eq.~\eqref{eq:schrdensity1} with the association $\rho(t)=\mathbf{v}(t) \mathbf{v}^{\dagger}(t)$ up to a normalisation)
\begin{align}
\label{eqn::schrodingerisation}
\begin{aligned}
  &  \frac{d \mathbf{v}(t)}{dt}=-i \mathbf{H}(t) \mathbf{v}(t), \qquad \mathbf{H}(t)=\mathbf{A}_2(t) \otimes \hat{\eta}+\mathbf{A}_1(t) \otimes \mathbf{1}_{\eta}, \\
    & \mathbf{v}(t=0)=\mathbf{u}_0 \otimes |\Xi\rangle, \quad |\Xi\rangle = \int e^{-\abs{\xi}}\ |\xi\rangle d\xi, \qquad \xi \in \mathbb{R}, \\
    & \mathbf{A}_1(t)=\frac{1}{2}\big(\mathbf{A}(t)+\mathbf{A}^{\dagger}(t)\big) = \mathbf{A}_1^{\dagger}(t), \qquad \mathbf{A}_2(t)=\frac{i}{2}\big(\mathbf{A}(t)-\mathbf{A}^{\dagger}(t)\big) = \mathbf{A}_2^{\dagger}(t),
    \end{aligned}
\end{align}
where $\hat{\xi}, \hat{\eta}$ are conjugate quadrature operators obeying $\comm{\hat{\eta}}{\hat{\xi}}=\ i\mathbf{1}_{\eta}$ and they have corresponding orthonormal bases $\{|\xi\rangle\}_{\xi \in \mathbb{R}}, \{|\eta\rangle \}_{\eta \in \mathbb{R}}$ with $\hat{\xi}|\xi\rangle=\xi |\xi\rangle$ and $\hat{\eta}|\eta\rangle=\eta |\eta \rangle$. Then $\mathbf{u}(t)$ can be recovered from $\mathbf{v}(t)$ by projecting the extra ancilla qumode onto the positive eigenvalues $|\xi>0\rangle$, so that for the normalised state $|u(t)\rangle=\frac{\mathbf{u}(t)}{\|\mathbf{u}(t)\|}$, one has 
\begin{align*}
   \lvert u(t)\rangle\langle u(t) \rvert \propto \tr_{\xi}\Big( \big(\mathbf{1}_x \otimes \Pi_{\xi>0}\big) \mathbf{v}(t)\mathbf{v}(t)^\dagger \big(\mathbf{1}_x \otimes \Pi_{\xi>0}\big)\Big), \qquad \Pi_{\xi>0}= \int_{\xi>0} \ketbra{\xi} d\xi = \Pi^{\dagger}_{\xi>0}.
\end{align*}

Suppose we want to learn a linear ODE/PDE with a known form, but with unknown parameters. Most ODEs and PDEs that arise from applications in scientific computing and engineering  usually have fairly restrictive forms. This means that the universality of representation is in fact not a requirement, and one might be learning only a relatively small number of unknown parameters (for example viscosity and heat conductivity in gas dynamics). We can thus expand 
\begin{align} \label{eq:aexpansion}
    \mathbf{A}(t, \theta)=\sum_{k=1}^K c_k(t, \theta) A_k,
\end{align}
with the known set of operators $\{A_k\}_{k=1}^K$ that is independent of time and tuneable parameters $\theta$. 
For classical ODE/PDE problems, typically we may assume that $c_k\in \Real$.
These are easily determined from the given form of ODE/PDE. The corresponding Hamiltonian in the QNODE/QNPDE formalism can now also parameterised as 
\begin{align} \label{eq:hexpansion}
\begin{aligned}
    \mathbf{H}(t, \theta)
    =&\ \frac{1}{2}\sum_{k=1}^K (c_k(t, \theta) A_k + c_k^*(t,\theta) A_k^\dagger) \otimes \mathbf{1}_{\eta}+ \frac{1}{2} \sum_{k} (c_k(t,\theta) i A_k - i c_k^*(t,\theta) A_k^\dagger)  \otimes \hat{\eta} \\
    =&\ \sum_{k=1}^K \underbrace{c_k(t,\theta)}_{f_k(t,\theta)} \underbrace{\frac{A_k + A_k^\dagger}{2} \otimes \mathbf{1}_{\eta}}_{H_k} + \sum_{k=1}^K \underbrace{c_k(t,\theta)}_{f_{k+K}(t,\theta)} \underbrace{\frac{i A_k - i A_k^\dagger}{2} \otimes \hat{\eta}}_{H_{k+K}} \qquad (\text{ if } c_k\in \Real) \\
    =&\  \sum_{k=1}^{2K} f_k(t,\theta) H_k.
    \end{aligned}
\end{align}

{\noindent \emph{Data availability and the loss function}}:\\

To learn these parameters $\theta$, we assume access to values of observables of the true solution $\sigma(T) := \mathbf{u}(t=T)\mathbf{u}(t=T)^\dagger$, which takes the form $\tr\big(\obs_j \sigma(T)\big) \equiv \tr\big(\obs_j \mathbf{u}(t=T) \mathbf{u}(t=T)^\dagger \big)$, as well as access to the normalisation constants $\tr\big(\sigma(T)\big)$. 
More concretely, 
suppose we have data at collocation points, e.g., $u_j = \mathbf{u}(x_j, t=T)$ for which we can write this observable data as 
\begin{align}
\label{eqn::data}
\abs{u_j}^2 \approx \frac{1}{\abs{\bregion_j}} \tr\big(\int_{\bregion_j}\ \ketbra{x} \sigma(T) dx \big),
\end{align}
where $\bregion_j$ is a small sub-region with Lebesgue measure $\abs{\bregion_j}$. Hence, we can choose $O_j = \frac{1}{\abs{\bregion_j}}\int_{\bregion_j} \ketbra{x}\ dx$. Therefore, we also know the normalisation constant, 
\begin{align*}
\tr\big(\sigma(T)\big) \approx \sum_{j} \abs{\bregion_j} \cdot \abs{u_j}^2.
\end{align*}
For physical problems, e.g., when $u_j$ represents the temperature, then $u_j$ is naturally positive. Assume the access to the above quadratic form in Eq.~\eqref{eqn::data} is then natural.  We will then denote the normalised data as
\begin{align*}
\frac{\tr\big(O_j \sigma(T)\big)}{\tr\big(\sigma(T)\big)} = \bar{u}_j.
\end{align*}

Similar to the PINN loss \cite{raissi_physics-informed_2019}, we can write the corresponding loss function as 
\begin{align} \label{eq:lossode}
\begin{aligned}
   \mathcal{L}_{\theta} &= \frac{1}{N} \sum_{j=1}^N \mathcal{L}_j\big(\rho(T, \theta)\big)
   := \frac{1}{N}\sum_{j=1}^N \big\lvert \frac{\tr\big(O_j\sigma(T)\big)}{\tr\big(\sigma(T)\big)} - \tr\big(O_j \rho(T, \theta)\big) \big\rvert^2 = \frac{1}{N}\sum_{j=1}^N \big\lvert \bar{u}_j - \tr\big(O_j \rho(T, \theta)\big) \big\rvert^2,
   \end{aligned}
   \end{align}
   where the reduced density matrix
\begin{align*}
\rho(T, \theta) = \frac{\tr_{\xi}\Big(\big(\mathbf{1}_x \otimes \Pi_{\xi>0}\big)\ \ketbra{v(T, \theta)}\Big)}{\tr\Big(\big(\mathbf{1}_x \otimes \Pi_{\xi>0}\big)\ |v(T, \theta)\rangle \langle v(T, \theta)|\Big)},
\end{align*}
and where the normalised wave function $ |v(t,\theta)\rangle=\frac{\mathbf{v}(t,\theta)}{\|\mathbf{v}(t,\theta)\|}$.
We can rewrite 
\begin{align*}
\begin{aligned}
   \tr \big(O_j \rho(T, \theta)\big) & = \frac{\tr\Big(\big(O_j \otimes \mathbf{1}_{\xi})(\mathbf{1}_x \otimes \Pi_{\xi>0}) \ketbra{v(T, \theta)} \Big)}{\tr\Big(\big(\mathbf{1}_x \otimes \Pi_{\xi>0}\big)\ \ketbra{v(T, \theta)}\Big)}  \propto \tr\big(M_j |v(T, \theta)\rangle \langle v(T, \theta)|\big),
   \end{aligned}
   \end{align*}
   where $M_j = O_j \otimes \Pi_{\xi>0}$ is Hermitian.
Thus, the loss function $\mathcal{L}_{\theta}$ in Eq.~\eqref{eq:lossode} can be rewritten in a form similar to Eq.~\eqref{eq:lossdef}, and also corresponds to a (closed) QNODE/QNPDE. In this  case, the functional derivative takes the following form
\begin{align*}
&\ \frac{\delta\loss_j}{\delta\ketbra{v(T,\theta)}} \\
=&\ -2 \frac{\Big(\bar{u}_j - \tr\big(O_j \rho(T, \theta)\big) \Big)}{\tr\Big(\big(\mathbf{1}_x \otimes \Pi_{\xi>0}\big)\ \ketbra{v(T, \theta)}\Big)}  
\Big(\big(O_j \otimes \Pi_{\xi>0} \big) - \tr\big(O_j \rho(T,\theta)\big) \big(\unit_x \otimes \Pi_{\xi>0} \big)\Big).
\end{align*}
Since any positive fixed constant could be theoretically absorbed into the learning rate (as will be more clear in later sections), it is sufficient to consider 
\begin{align*}
\frac{\delta\loss_j}{\delta\ketbra{v(T,\theta)}} 
\propto &\ - \Big(\bar{u}_j - \tr\big(O_j \rho(T, \theta)\big) \Big)
\Big(\big(O_j \otimes \Pi_{\xi>0} \big) - \tr\big(O_j \rho(T,\theta)\big) \big(\unit_x \otimes \Pi_{\xi>0} \big)\Big).
\end{align*}
Note that the value $\bar{u}_j$ comes from data and is known; $\tr\big(O_j \rho(T,\theta)\big)$ could be estimated from simulating Eq.~\eqref{eqn::schrodingerisation}.\\

{\noindent \emph{Discussions and remarks:}}\\

Note that the loss function in $\mathcal{L}_{\theta}$ in Eq.~\eqref{eq:lossode} might resemble loss functions in classical machine learning for learning PDEs via values at different collocation points, but here we are not learning the representation of the solution directly, but rather learning the Hamiltonian that generates the solution (which is closer in spirit to finding a generative model). Once the parameters $\theta$ in $\mathbf{H}(t, \theta)$ in Eq.~\eqref{eq:hexpansion} are learned, this is easily translatable to identifying the unknown parameters in the original ODE/PDE with the corresponding $\mathbf{A}(t, \theta)$ in Eq.~\eqref{eq:aexpansion}, and is a simple one-to-one direct correspondence. \\

This is not only useful for learning unknown parameters of known ODEs/PDEs that arise naturally in physical systems, but also potentially in machine learning itself. For instance, in both continuous normalising flows and flow matching the task is to identify the unknown parameters in the flux of a transport equation.\\

We can also extend these methods by using Schr\"odingerisation applied to more general linear ODEs/PDEs with inhomogeneous terms and higher-order time derivatives, and also to nonlinear ODEs and nonlinear Hamilton-Jacobi and nonlinear hyperbolic PDEs \cite{analogPDE} and even ill-posed linear PDEs \cite{illposed}. \\

We emphasise that for many application where we know the form of the underlying dynamics, then the number of local terms $H_k$ is relatively low (the number of unknown parameters may also possibly be low). Thus, we do not require a universal representation for the corresponding Hamiltonian, and the ansatz for the form of the Hamiltonian is directly related to domain knowledge and has the least number of parameters to learn.  This shows the naturalness and power of the QNODE/QNPDE formalism for these problems, since ODEs and PDEs from nature are automatically continuous in time, PDEs are continuous in space and time, and this continuous form can be preserved in the learning procedure. 

\subsection{Learning non-autonomous systems}
\label{sec:appnonauto}

While the above applications can be generally non-autonomous (i.e., the corresponding $H(t, \theta)$ have explicit time-dependence), there is an alternative method of making this into a time-independent problem as well, using the Sambe-Howland clock formalism \cite{sambe_steady_1973,howland_stationary_1974,Nonauto,Nonauto2}. We write down the formulation for unitary evolution, but this can also be applied to non-unitary evolution from Section~\ref{sec:appodepde}. \\

Suppose we have an initial quantum state $\rho_0$ and a time-dependent Hamiltonian ansatz $H(t, \theta)$. We can turn the task of learning an unknown time-dependent Hamiltonian $H(t, \theta)$ (for any of the loss functions in Section~\ref{sec:appclosed}, ~\ref{sec:appopen} and ~\ref{sec:appodepde}) into learning a corresponding time-independent Hamiltonian $\bar{H}(\theta)=\hat{p}_s \otimes \mathbf{1}+H(\hat{s}, \theta)$, where $H(\hat{s}, \theta)$ is $H(t, \theta)$ with $t$ replaced by a quadrature operator $\hat{s}$. If $\rho(t, \theta)$ obeys Eq.~\eqref{eq:schrdensity1}, then the solution $\rho(t, \theta)$ can be reproduced by evolving an augmented system $\sigma(t, \theta)$ (by including one qumode representing a clock mode) with a time-independent Hamiltonian $\bar{H}(\theta)$:
\begin{align*}
\begin{aligned}
\frac{d}{d t}\sigma(t, \theta)) = -i\comm{\bar{H}(\theta)}{\sigma(t, \theta)}, \qquad \sigma(0) = G\otimes \rho_0, \\ G = \int g(s) \ketbra{s}\ ds, \qquad \bar{H}(\theta)=\hat{p}_s \otimes \mathbf{1}+H(\hat{s}, \theta).
\end{aligned}
\end{align*}
When $g(s) \approx \delta(s)$, one has $\rho(t, \theta) \approx \tr_s \big(\sigma(t, \theta)\big)$. This means that the loss functions of the form in Eq.~\eqref{eq:lossdef} can be rewritten in the form suitable for (closed) QNODE/QNPDE
\begin{align*}
    \tr \big(O \rho(t, \theta)\big) \approx \tr\Big(O \tr_s\big(\sigma(t, \theta)\big)\Big)=\tr\big(M \sigma(t, \theta)\big), \qquad M=O \otimes \mathbf{1}_s. 
\end{align*}
Thus, simulating a non-autonomous machine learning problem is, in principle, not fundamentally different from using a time-independent ansatz. The impact of the approximation $g(s)\approx \delta(s)$ has been analyzed in \cite{Nonauto,Nonauto2} for Hamiltonian simulation. Investigating how this bias affects optimisation outcomes is left to future work.

\subsection{Framework for new quantum neural network architectures}
\label{sec:appqnn}

The continuous-time framework for classical dynamics offers advantages in designing novel ResNet-type architectures \cite{lu_beyond_2018, haber_stable_2017, ruthotto_deep_2020}. This connection comes from a perspective that the continuous-time framework reduces to the canonical ResNet architecture under a particular time-discretisation (namely Euler discretisation) and would give rise to other architectures under different discretisation of the time coordinate. Despite significant progress, continuous-time analog classical algorithms remain relatively under-explored.\\

We anticipate a similar approach would  also apply to quantum machine learning. In quantum machine learning, most existing work instead focuses on discrete neural network architectures, leading to discrete-time quantum gates. Most of these architectures based on qubit systems are, a fact, a particular time-dependent schedule $f_k$ to the (closed) QNODE, and similarly for (closed) QNPDE for those systems based on qumodes. \\

\noindent \textit{Connection of QNODE/QNPDE to quantum neural networks:}\\ 

In what follows, we will first revisit and explain why many quantum neural networks are essentially (closed) QNODE and QNPDEs, where we are seeking a functions $f_k(t, z, \theta)$ using a \textit{piecewise-in-time expansion}. It has been briefly mentioned in e.g., \cite{magann_pulses_2021,leng2022differentiable}, and this focus on piecewise-in-time expansion could be one reason behind why the analog approach to quantum machine learning has not been initiated much earlier.  \\

Typically in quantum neural networks one is trying to approximate a function $g(z) \in \mathbb{R}$ for data $z \in \mathbb{R}^D$ in $D$-dimensional feature space which is partly or fully embedded in a quantum state $|\psi(z)\rangle$, using $L$ layers of discrete gates $U_l(z, \theta)$
\begin{align} \label{eq:universalg}
\begin{aligned}
    g(z) \approx \tilde{g}_L(z) &= \tr\left(U(z, \theta)(|\psi(z)\rangle \langle \psi(z)|\otimes |0\rangle \langle 0|)U(z, \theta)^{\dagger} (\mathbf{1}_{\psi} \otimes O)\right),  \\
    &U(z, \theta)=\Pi_{l=1}^L U_l (z, \theta), \qquad O=O^{\dagger}.
\end{aligned}
\end{align}
With appropriate architecture design, one could expect $|g(z)-\tilde{g}_L(z)| \rightarrow 0$ as  the layer depth $L \rightarrow \infty$.
Each $U_\ell$ typically involves the unitary evolution of time-independent Pauli or multi-Pauli matrices, where the rotation angle (or say the time) is the parameter to tune.
Let us denote all such a collection of Pauli or multi-Pauli matrices as $\{H_k\}_{k=1}^K$.
In the case of a quantum classifier for binary classification, one can use $O=\sigma_z$ \cite{farhi2018classification}. Since loss functions in these cases will include terms similar to  
\begin{align*}
    \frac{1}{N}\sum_{i=1}^N \abs{g(z_i)-\tilde{g}_L(z_i)}^2,
\end{align*}
for $N$ training data, this ansatz takes the form of a (closed) QNODE or QNPDE, where the corresponding unitary dynamics
\begin{align*}
\begin{aligned}
& U(z, \theta)= U(T, z, \theta)=\mathcal{T}e^{-i \int_0^T H(\tau, z, \theta)d \tau}=\Pi_{l=1}^L U_l(z, \theta), \\
& U_l(z, \theta)=\mathcal{T}e^{-i \int_{t_l}^{t_{l+1}}H_l(\tau, z, \theta) d \tau}, \qquad t_L=T, 
\end{aligned}
\end{align*}
is generated by the Hamiltonian 
\begin{align}
    H(t, z, \theta)=\sum_{l=1}^L \big(\Theta(t-t_l)-\Theta(t-t_{l+1})\big) h_{\ell}(z, \theta) H_{k(\ell, z)},
\end{align}
where $k(\ell, z)$ is the index of the basis Hamiltonian, which could repeat over the trajectory, namely, we can apply the same operation at multiple times; $\{ t_\ell \}$ generates a sequence of time interval, which could be simply chosen as $t_\ell = \ell$ without loss of generality; $h_\ell(z,\theta)$ is some generic function as coefficients.
Then we can clearly also rewrite this as a sum of Hamiltonians $H_k$:
\begin{align*}
    H(t, z, \theta)=\sum_{k=1}^K f_k(t, z, \theta) H_k, \quad H_k \neq H_{k'}, \quad \forall k \neq k'.
\end{align*}
The coefficient $f_k(t,z,\theta)$ could be expressed in terms of $h_\ell$. The design of quantum neural network could be viewed as determining these piecewise continuous control functions $f_k$.\\

\noindent \textit{Discussion on the complexity:}\\ 

The QNODE/QNPDE formalism in fact could assist in a deeper understanding of the concerns about the complexity scaling for many quantum neural networks.
Since quantum neural networks typically correspond to a piecewise-in-time expansion of the scheduling function, this means that the larger number of time intervals, the more accurate the approximation of each $f_k(t, z, \theta)$, and this translates into a larger $L$. For example, in many quantum neural networks, each layer $U_l(z, \theta)=e^{i \theta_l P_l}V_l$ for fixed Pauli operators $P_l$ and fixed unitaries $V_l$. For certain architectures like \cite{perez-salinas_data_2020}, $z$-dependence appears in each layer, and in other cases data embedding of $z$ is only at the beginning and does not appear at each $l$. Thus the total number of layers $L$ is equivalent to the total number of tuneable parameters $M$. 

From the analysis in Section~\ref{sec:complexity}, we learn that the total cost in estimating the gradient of the loss function quantified using $L_\infty$ norm for (closed) QNODEs could be generally $\bigO(T^2 K^2)$ (neglecting here the error scaling due to the uncertainty in quantum measurements for simplicity). Using the fairly common quantum neural network ansatz form, $K$ is the number of local operators which is thus typically $\text{poly}(n)$ where $n$ is the number of qubits. 
Since quantum neural networks typically uses the piecewise-in-time expansion for $f_k$, this leads to $T = \order{M}$, so the total cost is $\bigO(M^2)$, and this is the typical scaling that is often discussed in the quantum neural network literature. Herein, we could use a continuous-time framework to recover the complexity scaling for a discrete-time quantum neural network. As explained in Section~\ref{sec:complexity}, if we use improved sampling methods, e.g., the shadow tomography, we can expect certain parameters to improve \cite{abbas2305quantum}. However, we note that for many physics-motivated machine learning problems, $K = \text{poly}(n)$ and $T = \order{1}$, which means the quantum gradient algorithm is expected to be efficient. \\ 

From this we see that the concern about the quantum neural network  backpropagation scaling with respect to the number of parameters $M$ is only a result of the particular way that $U_l$ is defined such that $L = \order{M}$. Since here one deals with a discrete number of layers, and improvement in estimation being achieved with increasing $L$ means one is approximating $f_k(t, z, \theta)$ with a piecewise-in-time basis, so the precision of this approximation must therefore grow with the layer depth $L$, and hence also with the number of parameters $M$.  However, for particular problems where we fix the $K$ and we do not use the piecewise-in-time basis for estimating $f_k(t, \theta)$, then in fact, the cost of the gradient estimation scaling in $M$ is not a concern. For general quantum machine learning problems without prior information, a fixed $K$ may raise concern about the universality; however, this is not an intrinsic problem for physics-driven problems such as Hamiltonian learning of some many-body systems, or many quantum control problems.\\

Thus in this continuous-time framework, we see that particular quantum neural networks are particular instances with special choices on the basis of expansion for approximating $f_k(t, z, \theta)$. In other kinds of quantum neural networks like quantum convolutional neural networks \cite{cong_quantum_2019}, for a continuous-time version we will require (open) QNODEs/QNPDEs. \\ 

\noindent \textit{Discussion on potential advantages:}\\ 

The framework of quantum neural ODEs/PDEs could provide a new perspective for designing quantum circuit architectures inspired by continuous-time dynamics, analogous to how classical ODEs guide variants of ResNet design. We can design new quantum neural network architectures by considering both different parameterisation of $f_k(t, z, \theta)$, and applying different time discretisations. We will examine this in more detail in a later work. \\

For universal approximation in the sense of approximating an unknown function $g(z)$ (e.g., Eq.~\eqref{eq:universalg}), it is currently unclear, even for classical algorithms, which architecture -- ODE-based or discrete-time ResNet-based -- is more efficient, as this is likely problem-dependent, as far as we know.
Similarly, it is difficult to claim whether Hamiltonian dynamics-based architectures are generally more efficient than discrete-time quantum neural networks, if one's aim is universal approximation. In this context, this formulation can still serve as a mathematical tool from which one can study more general quantum neural networks and their gradient estimation algorithms, and provide an alternative viewpoint to generate new quantum architectures. \\

\section{Gradient estimation algorithm for QNODE and QNPDE}
\label{sec:mixed}

In this section, we will present quantum algorithms for estimating the gradient of loss function corresponding to the QNODE and QNPDE. This is a generalisation of classical backpropagation algorithm by extending to the continuous-time setting, and we use the idea of the backward adjoint equation used in classical neural ODEs \cite{2018NODE}. We will first revisit the adjoint equation in \secref{sec::revisiting} for the general setting, and then present our main gradient estimation algorithms in \secref{section::quantum_alg} (with time discretisation) and \secref{alg::no_discretization} (without time discretisation). The main quantum circuits for our two algorithms are summarized in the following \figref{fig::alg1}, and their costs will be discussed in Section~\ref{sec:complexity}. 
Finally in \secref{subsec::landscape}, we discuss the energy landscape for a Hamiltonian learning problem, and identify how the hardness of training connects to the quantum correlation matrix for the optimal Hamiltonian dynamics. 

\begin{figure}[h!]
\begin{subfigure}{0.32\textwidth}
\includegraphics[width=\textwidth]{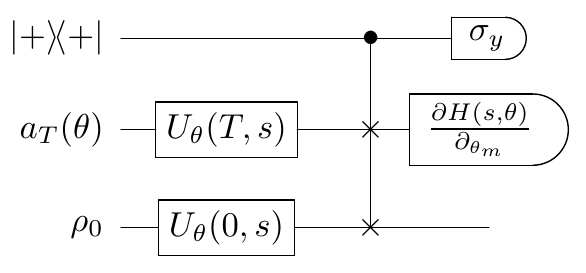}
\caption{Schematic circuit for Theorem~\ref{thm::main}}
\label{fig::alg1::parta}
\end{subfigure}
~\hspace{0.02\textwidth}
\begin{subfigure}{0.55\textwidth}
\includegraphics[width=\textwidth]{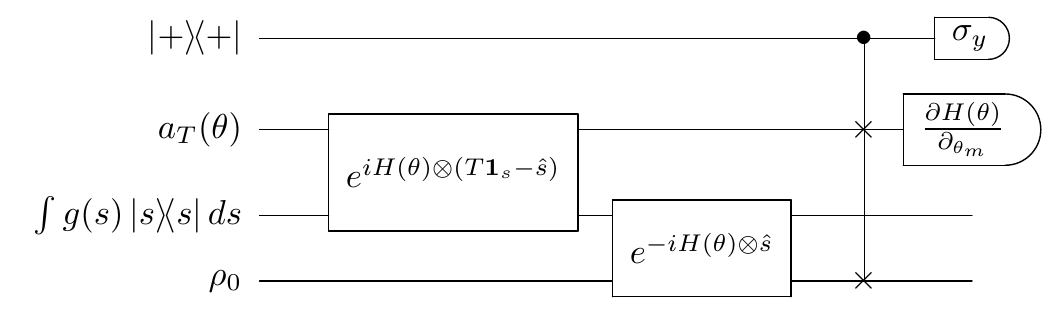}
\caption{Schematic circuit for \thmref{thm5}}
\label{fig::alg1::partb}
\end{subfigure}
\caption{The schematic quantum circuit for the gradient computation scheme with respect to parameter $\theta_m$,  $1\le m\le M$. 
We remark that if we parameterise $H(t,\theta)$ as in Eq.~\eqref{eqn::H_ansatz}, one does not need to perform quantum measurement for each $\frac{\partial H(s,\theta)}{\partial \theta_m}$ where $1\le m\le M$; see \secref{sec::cost}.}
\label{fig::alg1}
\end{figure}

\subsection{Revisiting the adjoint equation: application to QNODEs and QNPDEs}
\label{sec::revisiting}

From Definition~\ref{def:qnode}, we see that a (closed) QNODE and QNPDE evolves with the equation of motion 
\begin{align} \label{eq:schrdensity}
    \frac{\partial \rho(t, \theta)}{\partial t}=-i[H(t, \theta), \rho(t, \theta)], \qquad \rho(0, \theta)=\rho_0,
\end{align}
so an initial state $\rho_0$ evolves via unitary dynamics with respect to the unitary $U_{\theta}(0, t)$ where
\begin{align}
    \rho(t, \theta)=U_{\theta}(0,t)\rho_0U_{\theta}(0,t)^\dagger, \qquad U_{\theta}(0, t)=\mathcal{T} \exp(-i \int_0^t H(\tau, \theta) d \tau). 
\end{align}
Here $\rho(t, \theta)$ is a density matrix which consists of qubits (or qumodes), and this represents a corresponding (closed) QNODE (or QNPDE) once we define a loss function $\mathcal{L}_{\theta}$. 
Its gradient could be expressed in the following way:

\begin{lemma} \label{lem:3}
    The gradient of the loss function $\mathcal{L}_{\theta} = \mathcal{L}\big(\rho(T,\theta)\big)$ with respect to the parameter $\theta_m$ where $m = 1, \cdots, M$ can be written as 
    \begin{align} \label{eq:rhodensitylemma}
       &  \frac{\partial \mathcal{L}_{\theta}}{\partial \theta_m}= i A_T(\theta)\int_0^T \tr\left(\left[\frac{\partial H(s, \theta)}{\partial \theta_m},a(s, \theta)\right] \rho(s, \theta)\right) ds, 
    \end{align}
    when the state $\rho(s, \theta)$ evolves according to Eq.~\eqref{eq:schrdensity}, and $a(s, \theta)$ evolves from a terminal condition with the same unitary dynamics as Eq.~\eqref{eq:schrdensity}
    \begin{align} \label{eq:asmatrix}
    \begin{aligned}
        \frac{\partial a(s, \theta)}{\partial s} = & -i[H(s, \theta), a(s, \theta)], \\
  a(T, \theta) = &\ a_T(\theta) = \frac{1}{A_T(\theta)} \frac{\delta \loss}{\delta \rho} \Big|_{\rho(T, \theta)},  \qquad 
        A_T(\theta) =\tr \Big(\frac{\delta \loss}{\delta \rho} \Big|_{\rho(T, \theta)}\Big).
        \end{aligned}
    \end{align}
\end{lemma}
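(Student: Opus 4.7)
The strategy is a continuous-time analogue of backpropagation: differentiate the loss via the chain rule, use a Dyson-type variation-of-parameters formula to express $\partial_{\theta_m}\rho(T,\theta)$ as a time integral, and then recognise the ``backwards propagated'' object under the trace as the adjoint state $a(s,\theta)$. First I would apply the definition of the functional derivative to write
\begin{align*}
\frac{\partial \mathcal{L}_{\theta}}{\partial \theta_m}
= \tr\!\left(\dloss\Big|_{\rho(T,\theta)} \frac{\partial \rho(T,\theta)}{\partial \theta_m}\right),
\end{align*}
so everything reduces to controlling $\partial_{\theta_m} \rho(T,\theta)$.

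Next I would derive the variation-of-parameters identity for the time-ordered unitary
$U_\theta(0,T)=\mathcal{T}\exp\!\big(-i\int_0^T H(\tau,\theta)\,d\tau\big)$. Passing to the interaction picture (or expanding the Dyson series to first order in a perturbation $H\to H+\epsilon\,\partial_{\theta_m}H$) I expect to obtain
\begin{align*}
\frac{\partial U_\theta(0,T)}{\partial \theta_m}
= -i\int_0^T U_\theta(s,T)\,\frac{\partial H(s,\theta)}{\partial \theta_m}\,U_\theta(0,s)\,ds,
\end{align*}
together with the Hermitian conjugate for $U_\theta^\dagger(0,T)$. Using $U_\theta(0,s)\rho_0 U_\theta^\dagger(0,s)=\rho(s,\theta)$ and combining the two pieces then yields
\begin{align*}
\frac{\partial \rho(T,\theta)}{\partial \theta_m}
= -i\int_0^T U_\theta(s,T)\Big[\tfrac{\partial H(s,\theta)}{\partial \theta_m},\rho(s,\theta)\Big]U_\theta^\dagger(s,T)\,ds.
\end{align*}

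Substituting back into the chain rule and cycling $U_\theta(s,T)$ around the trace puts the factor $U_\theta^\dagger(s,T)\big(\dlossinline\big|_{\rho(T,\theta)}\big)U_\theta(s,T)$ next to the commutator. The remaining conceptual step is to identify this conjugated operator, up to the constant $A_T(\theta)$, as $a(s,\theta)$. To justify this I would verify by direct differentiation that $V(s):=U_\theta^\dagger(s,T)B\,U_\theta(s,T)$ (for any fixed $B$) satisfies $\partial_s V = -i[H(s,\theta),V]$, using $\partial_s U_\theta(s,T)=iU_\theta(s,T)H(s,\theta)$; taking $B=\dlossinline\big|_{\rho(T,\theta)}$ and dividing by $A_T(\theta)$ matches both the PDE and the terminal condition in Eq.~\eqref{eq:asmatrix}, so by uniqueness this is precisely $a(s,\theta)$.

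The final step is the standard commutator-trace identity $\tr(a[B,C])=-\tr([B,a]C)$, which converts the minus sign from $-i$ into the $+i$ appearing in Eq.~\eqref{eq:rhodensitylemma}. The main obstacle I foresee is purely bookkeeping: keeping the Dyson formula's time orderings, the placement of $U_\theta(s,T)$ vs.\ $U_\theta(0,s)$, the Hermitian conjugate in the second half of the product rule, and the sign in the commutator identity all mutually consistent. Nothing deeper than operator calculus is required, and the infinite-dimensional (QNPDE) case goes through verbatim provided $H(t,\theta)$ generates a well-defined strongly continuous unitary group, so no separate argument is needed beyond remarking on domain issues.
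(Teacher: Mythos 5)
Your proposal is correct and follows essentially the same route as the paper: chain rule on the loss, a variation-of-parameters integral representation for the $\theta_m$-derivative of the state, cyclicity of the trace to transfer $U_\theta(s,T)$ onto the functional derivative, identification of the conjugated operator as the adjoint state $a(s,\theta)$, and the commutator-trace identity to land on Eq.~\eqref{eq:rhodensitylemma}. The only cosmetic difference is that you differentiate the time-ordered exponential directly (Dyson formula) rather than writing the ODE for $\sigma_m(s,\theta)=\partial_{\theta_m}\rho(s,\theta)$ and invoking Duhamel, which the paper itself notes are interchangeable.
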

\begin{proof}
This result essentially relies on the adjoint equation techniques used in e.g., classical neural ODEs \cite{2018NODE}, but here specialised to the case of unitary dynamics. We provide a customized and self-contained proof in Appendix~\ref{app:theorem3}.
\end{proof}
 We make two remarks below:
\begin{itemize}
\item {\it Preparing the terminal state $a_T(\theta) \propto \dloss \vert_{\rho=\rho(T)}$}: In this formalism, we require access to the preparation of the quantum state $a_T(\theta)$. It's important to note that this is the \textit{only} place where the form of the loss function appears. For certain loss functions, this state could be difficult to prepare. However, for many important examples that we have discussed in Section~\ref{sec:scopeapps}, it is possible as the functional derivative is directly computable and has a simple form. In fact it is easy to see that $a_T(\theta)$ can also be a linear combination of quantum states that can be prepared, and the gradient in Eq.~\eqref{eq:rhodensitylemma} is then computed as a sum of the contribution to the gradient from each of those quantum states. See Corollary~\ref{eq:cor1}.

\item {\it Normalisation $A_T(\theta)$:} When we use the classical optimisation algorithms to update the parameters, then the magnitude of $A_T$, in fact, does not matter, as it can be absorbed into the learning rate. This also brings certain theoretical convenience: we only need to prepare the quantum state proportional to $\dloss \vert_{\rho=\rho(T)}$ and in this formalism, one does not need to worry about the normalisation. However, we do need to know the sign of $A_T(\theta)$, which is easily computable for many examples discussed in Section~\ref{sec:scopeapps}.
\end{itemize}

For more concrete examples, one may refer to \secref{sec::example} about numerical examples.

\subsection{Quantum gradient algorithm: partial time discretisation}
\label{section::quantum_alg}

To develop a quantum algorithm for computing the gradient of the loss function, we want to rewrite the expression in Eq.~\eqref{eq:rhodensitylemma} in terms of quantum expectation values. Based on Lemma~\ref{lem:3}, we have the following theorem. A schematic quantum circuit is provided in Figure~\ref{fig::alg1::parta}.

\begin{theorem}[Quantum gradient algorithm: partial time discretisation]
\label{thm::main}
Consider the Hamiltonian dynamics of quantum state $\rho_0$ evolving under a parametric Hamiltonian dynamics in Eq.~\eqref{eq:schrdensity} with an accompanying loss function $\mathcal{L}_{\theta} = \mathcal{L}\big(\rho(T,\theta)\big)$ at a terminal time $T$. Assume access to the quantum state $a_T(\theta)\propto  \delta \mathcal{L}/\delta \rho \vert_{\rho=\rho(T,\theta)}$ (see Lemma~\ref{lem:3}). Then the gradient of the loss function can be estimated in terms of the time-integral of the expectation of the Hermitian operator $\obs_m(s,\theta)$ for arbitrary $m$ ($1\le m \le M$)
\begin{align} \label{eq:mixedgradient2}
\begin{aligned}
\frac{\partial \mathcal{L}_{\theta}}{\partial \theta_m} &=  2 A_T(\theta) \int_0^T  \tr\big(\obs_m(s, \theta) \eta(s,\theta)\big)\ ds,  \\
\obs_m(s, \theta) &= \sigma_y \otimes \frac{\partial H(s,\theta)}{\partial \theta_m} \otimes \unit_{\hbt}, \qquad m=1, \cdots, M.
\end{aligned}
\end{align}
The quantum state $\eta(s,\theta)$ can be prepared using a unitary operation $\mathcal{V}_{\theta}(s)$ in the following way:
\begin{align} \label{eq:psisstate2}
\begin{aligned}
  \eta(s,\theta) &= \mathcal{V}_\theta(s) \eta(0,\theta) \mathcal{V}_{\theta}(s)^\dagger, \\
  \eta(0, \theta) &= \ketbra{+} \otimes a_T(\theta) \otimes \rho_0, \\
   \mathcal{V}_{\theta}(s) & := \controlswap
    \big(\unit \otimes U_{\theta}(T, s) \otimes U_{\theta}(0, s)\big), \\
  \controlswap &:= |0\rangle \langle 0|\otimes \mathbf{1}\otimes \mathbf{1} + |1\rangle \langle 1| \otimes \boldsymbol{S},
    \end{aligned}
\end{align}
where $\boldsymbol{S}$ is the SWAP operator and $U_{\theta}(s,s')=\mathcal{T} \exp(-i \int_{s}^{s'} H(\tau, \theta) d \tau)$. The Hilbert space for the quantum circuit consists of a qubit as ancilla $\hbt_{\aux}$, adjoint state in $\hbt_{\adj}$ (with the same size as $\hbt$), and the original Hilbert space $\hbt$.
\end{theorem}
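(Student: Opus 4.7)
The plan is to reduce Eq.~\eqref{eq:mixedgradient2} to a pointwise-in-$s$ identity and then invoke Lemma~\ref{lem:3}. Writing $X:=\partial H(s,\theta)/\partial\theta_m$, $a:=a(s,\theta)$, and $\rho:=\rho(s,\theta)$ for brevity, the identity I would establish is
\begin{equation*}
\tr\!\big(\obs_m(s,\theta)\,\eta(s,\theta)\big)\;=\;\frac{i}{2}\,\tr\!\big([X,a]\,\rho\big),
\end{equation*}
which, when integrated over $s\in[0,T]$ and multiplied by $2A_T(\theta)$, matches exactly the right-hand side of Eq.~\eqref{eq:rhodensitylemma}.

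First I would verify that the forward unitary layer $\unit\otimes U_\theta(T,s)\otimes U_\theta(0,s)$ applied to the initial tripartite state $\eta(0,\theta)=\ketbra{+}\otimes a_T(\theta)\otimes\rho_0$ produces $\ketbra{+}\otimes a(s,\theta)\otimes\rho(s,\theta)$. On the third register this is immediate from the definition of $\rho(s,\theta)$. On the adjoint register, the adjoint equation~\eqref{eq:asmatrix} has the same commutator generator as the forward Schr\"odinger equation, hence its solution with terminal data $a_T(\theta)$ at time $T$ is $a(s,\theta)=U_\theta(T,s)\,a_T(\theta)\,U_\theta(T,s)^\dagger$, where $U_\theta(T,s)=U_\theta(s,T)^\dagger$ undoes the forward propagation from $s$ to $T$. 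The ancilla register is left invariant.

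The core calculation is then the action of $\controlswap$ followed by the measurement of $\obs_m=\sigma_y\otimes X\otimes\unit$. Expanding $\ketbra{+}=\tfrac{1}{2}\sum_{i,j\in\{0,1\}}\ket{i}\!\bra{j}$ and using the identity $\controlswap(\ket{i}\!\bra{j}\otimes A)\controlswap^\dagger=\ket{i}\!\bra{j}\otimes\boldsymbol{S}^{\,i}A\,\boldsymbol{S}^{\,j}$ (with $\boldsymbol{S}^0:=\unit$, $\boldsymbol{S}^1:=\boldsymbol{S}$), the diagonal ancilla terms $i=j$ drop out of the trace because $\tr(\sigma_y\ket{i}\!\bra{i})=0$. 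With $\tr(\sigma_y\ket{0}\!\bra{1})=i$ and $\tr(\sigma_y\ket{1}\!\bra{0})=-i$, what survives is
\begin{equation*}
\tr\!\big(\obs_m\,\eta(s,\theta)\big)=\tfrac{i}{2}\tr\!\big((X\otimes\unit)(a\otimes\rho)\boldsymbol{S}\big)-\tfrac{i}{2}\tr\!\big((X\otimes\unit)\boldsymbol{S}(a\otimes\rho)\big).
\end{equation*}
The SWAP-trace identity $\tr\!\big((A\otimes B)\boldsymbol{S}\big)=\tr(AB)$ collapses the first term to $\tr(Xa\rho)$, and using $\boldsymbol{S}(a\otimes\rho)=(\rho\otimes a)\boldsymbol{S}$ before applying the same identity gives $\tr(X\rho a)$ for the second. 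Their difference is $\tfrac{i}{2}\tr([X,a]\rho)$, which is the desired identity.

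The hard part is really only careful bookkeeping of the four $\controlswap$-conjugation contributions and the correct operator ordering under SWAP conjugation in the second trace; conceptually, the insight is that the choice of $\sigma_y$ on the ancilla selects precisely the off-diagonal coherences with opposite signs, which converts the pair of products $Xa\rho$ and $X\rho a$ into the commutator $[X,a]\rho$ under the trace (a choice of $\sigma_x$ would instead have produced an anti-commutator). No analytic machinery beyond cyclic-trace and the SWAP-trace identity is required.
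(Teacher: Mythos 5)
Your proof is correct and follows essentially the same route as the paper's: expand $\eta(s,\theta)$ into the four controlled-SWAP conjugation blocks, observe that $\sigma_y$ on the ancilla selects the off-diagonal coherences with coefficients $\pm i$, and reduce the two surviving traces to $\tr(Xa\rho)$ and $\tr(X\rho a)$ so that the commutator of Lemma~\ref{lem:3} appears. The only cosmetic difference is that you invoke the SWAP-trace identity $\tr\big((A\otimes B)\boldsymbol{S}\big)=\tr(AB)$ directly, whereas the paper verifies the same reduction by expanding $a$ and $\rho$ in their eigenbases; your explicit check that the forward layer produces $\ketbra{+}\otimes a(s,\theta)\otimes\rho(s,\theta)$ is a welcome addition that the paper leaves implicit.
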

\begin{proof}
See Appendix~\ref{app:theoremmixedtopure}.   
\end{proof}

To turn the above theorem into a more concrete algorithm, we can do the following:
\begin{itemize}
\item It is possible to compute the time integral in Eq.~\eqref{eq:mixedgradient2} by discretising in time, so we can choose $\numtime$ grid points $s_1, s_2, \cdots, s_{\numtime}$ from $[0,T]$. Then each gradient can be computed from  $\numtime$ expectation values.  
\item Alternatively, one may directly sample $s\in \text{Uniform}(0,T)$ which provides a stochastic approximation of the gradient, as is also used in \cite{leng2022differentiable}.
\end{itemize}
A complexity of resource estimate for both cases will be analyzed in \secref{sec::cost}.

\begin{corollary} \label{eq:cor1}
  Suppose $a_T(\theta)  \propto \frac{\delta \loss}{\delta \rho}\vert_{\rho=\rho(T,\theta)}$ is itself not a density matrix, but can be written as a known linear combination of density matrices $\big\{a_{T,j}(\theta)\big\}_{j=1}^J$, so $a_T(\theta)=\sum_{j=1}^J c_j a_{T,j}(\theta)$, where each $a_{T,j}(\theta)$ is proportional to a density matrix and $c_j \in \mathbb{R}$. This can happen for example when $\loss_{\theta} = \sum_{j=1}^{J} c_j \loss_{j, \theta}$; see e.g., Eq.~\eqref{eqn::deri_observable} in \secref{sec:appclosed} for a concrete example. Then
\begin{align*}
\begin{aligned}
& \frac{\partial \loss_{\theta}}{\partial \theta_m} =2 \sum_{j=1}^{J}  c_j A_{T,j}(\theta)\int_0^T  \tr\big(\obs_m(s, \theta) \eta_j(s,\theta)\big)\ ds, \qquad \eta_j(s,\theta) = \mathcal{V}_\theta(s) \eta_j(0,\theta) \mathcal{V}_{\theta}(s)^\dagger,\\
& \eta_j(0, \theta) = \ketbra{+} \otimes a_{T,j}(\theta) \otimes \rho_0, \qquad a_{T,j}(\theta)=\frac{1}{A_{T,j}(\theta)}\frac{\delta \mathcal{L}_{j,\theta}}{\delta\rho}\Big|_{\rho(T, \theta)}, \qquad A_{T, j}(\theta)=\tr\left(\frac{\delta \mathcal{L}_{j,\theta}}{\delta\rho}\Big|_{\rho(T, \theta)} \right).
\end{aligned}
\end{align*}
\end{corollary}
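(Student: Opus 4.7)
The plan is to obtain the corollary as an immediate consequence of \thmref{thm::main} via linearity of the functional derivative, the trace, and time integration. The hypothesis naturally arises when the loss itself decomposes as $\loss_\theta = \sum_{j=1}^J c_j \loss_{j,\theta}$, so that by linearity of $\delta/\delta\rho$ one has $\dloss\big|_{\rho(T,\theta)} = \sum_{j=1}^J c_j\, \frac{\delta \loss_{j,\theta}}{\delta \rho}\big|_{\rho(T,\theta)}$; writing each summand as $A_{T,j}(\theta)\, a_{T,j}(\theta)$ with $A_{T,j}(\theta) = \tr\big(\delta\loss_{j,\theta}/\delta\rho|_{\rho(T,\theta)}\big)$ and $a_{T,j}(\theta)$ a trace-one density matrix recovers exactly the decomposition assumed in the corollary.

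Given this setup, I would apply \thmref{thm::main} separately to each $\loss_{j,\theta}$, obtaining
\begin{align*}
\frac{\partial \loss_{j,\theta}}{\partial \theta_m} = 2 A_{T,j}(\theta) \int_0^T \tr\big(\obs_m(s,\theta)\, \eta_j(s,\theta)\big)\, ds,
\end{align*}
with $\eta_j(s,\theta) = \mathcal{V}_\theta(s)\, \eta_j(0,\theta)\, \mathcal{V}_\theta(s)^\dagger$ and $\eta_j(0,\theta) = \ketbra{+} \otimes a_{T,j}(\theta) \otimes \rho_0$. A point worth emphasising is that the propagator $\mathcal{V}_\theta(s)$ depends only on the Hamiltonian $H(t,\theta)$ and not on the loss, so the \emph{same} quantum circuit is reused for every $j$; only the input state on the adjoint register changes from run to run. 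Summing over $j$ with weights $c_j$ and invoking linearity of $\partial/\partial\theta_m$ then delivers the stated formula.

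The argument is essentially bookkeeping, so there is no substantive obstacle. The only items worth handling with care are (i) ensuring each $a_{T,j}(\theta)$ is genuinely preparable as a density matrix, which is a hypothesis of the corollary and is explicitly verified in concrete applications such as the Pauli-observable decomposition in \secref{sec:appclosed}, and (ii) tracking signs, since both $c_j$ and $A_{T,j}(\theta)$ may be negative but their product $c_j A_{T,j}(\theta)$ carries the correct signed contribution of the $j$-th term with no further adjustment required.
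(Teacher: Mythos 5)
Your proposal is correct and matches the paper's (implicit) reasoning: the corollary is stated as an immediate consequence of Theorem~\ref{thm::main}, obtained exactly as you describe by applying the theorem to each component loss $\loss_{j,\theta}$ and summing with weights $c_j$ using linearity of the functional derivative, the adjoint evolution, and the time integral. Your observation that the circuit $\mathcal{V}_\theta(s)$ is loss-independent and only the adjoint-register input changes per $j$ is consistent with the paper's emphasis on this point after the corollary.
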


When the number of terms $J$ is moderately small, one could simply estimate observables for each term and then sum them up classically. When the size of $J$ is extremely large, SGD or mini-batch SGD are efficient tools to reduce the cost. When $c_j$ has certain multi-scale behavior, we can further consider applying the importance sampling adaptively. Examining how more advanced sampling techniques can be employed can be highly dependent on the problem and we leave this for later work on specific applications. \\

It is important to emphasise that the form of the loss function $\mathcal{L}_{\theta}$ \textit{only appears in the initial state preparation} of the algorithm -- preparation of the state $\eta(0, \theta)$. Apart from this preparation step, the algorithm is itself agnostic to the form of the loss function, and no change to the form of the quantum circuits for the gradient estimation is necessary even if the loss function changes. Thus, once the quantum circuit itself is prepared and the Hamiltonian form is decided upon, then the circuit can be reused for any other loss function, as long as $\eta(0, \theta)$ can be prepared as an initial input. 

\subsection{Quantum gradient algorithm: no time discretisation}
\label{alg::no_discretization}

While the formulation in Theorem~\ref{thm::main} requires a discretisation of $s$ to compute the final integral in Eq.~\eqref{eq:mixedgradient2}, it is also possible to devise a fully analog algorithm where no such discretisation is necessary, and the gradient of the loss function with each tuneable parameter is computed with a single expectation value. To achieve this, we introduce an augmented quantum state
\begin{align} \label{eq:etahat}
\widehat{\eta}(\theta) = \int g(s) \ketbra{s} \otimes \eta(s, \theta)\ ds, \qquad \int\ g(s) ds=1,
\end{align}
where $\int g(s) \ketbra{s} ds$ is a normalised (continuous-variable) quantum state in $L^2(\Real)$, and $\eta(s,\theta)$ is given in Eq.~\eqref{eq:psisstate2}. Ideally, we would like the density function $g(s)$ to be close to $g_{\text{top}}(s)$ which is defined as follows:
\begin{align}
\label{eqn::g_top}
g_{\text{top}}(s) = \left\{\begin{aligned}
1/T, \qquad s\in [0,T];\\
0, \qquad \text{otherwise}.
\end{aligned}\right.
\end{align} 
In a real physical implementation, this exact top-hat function would need to be approximated by a smooth function $g(s)$. We first look at the case where the (closed) QNODE or QNPDE is generated by a time-independent Hamiltonian $\bar{H}(\theta)$. Then we have the following theorem where the gradient of the loss function with respect to each $\theta_m$ can be estimated using a single expectation value.

\begin{theorem}[Quantum gradient algorithm: fully continuous time]
\label{thm5}
Consider the Hamiltonian dynamics of quantum state $\rho_0$ evolving under a parametric time-independent Hamiltonian $\bar{H}(\theta)$ and  an accompanying loss function $\mathcal{L}_{\theta} = \mathcal{L}\big(\rho(T,\theta)\big)$ at a terminal time $T$. Assume access to the quantum state $a_T(\theta) \propto \delta \mathcal{L}_{\theta}/\delta \rho \vert_{\rho=\rho(T,\theta)}$. Then the gradient of the loss function with respect to $\theta_m$ can be estimated in terms of a single expectation value with respect to the state $\widehat{\eta}(\theta)$. The density function $g(s)$ is expected to be close to the normalised top-hat function $g_{top}(s)$. 
\begin{itemize}
\item The gradient of the loss function $\mathcal{L}_{\theta}$ with respect to $\theta_m$ can also be estimated using a single expectation value with respect to the state $\widehat{\eta}$.
The above results easily imply that when $g(s)=g_{top}(s)$,
\begin{align} \label{eq:lem2exact}
   & \frac{\partial \mathcal{L}_{\theta}}{\partial \theta_m}= 2 A_T(\theta) T\ \tr\big((\unit_s\otimes \obs_m(\theta))\ \widehat{\eta}(\theta) \big), \nonumber \\
    & \obs_m(\theta)= \sigma_y \otimes \frac{\partial \bar{H}(\theta)}{\partial \theta_m} \otimes \unit, \qquad m=1, \cdots, M,
\end{align}
where the state
\begin{align}
\begin{aligned}
\widehat{\eta} (\theta) &= \big(\unit_s\otimes \controlswap \big) W \big(\int g(s) \ketbra{s} \otimes  \eta(0, \theta)\big) W^\dagger \big(\unit_s\otimes \controlswap \big), \\ 
 \eta(0, \theta) &= |+\rangle \langle +| \otimes a_T(\theta) \otimes \rho_0,\\
W &= \big(\unit_{\hbt_{\aux}} \otimes e^{-i \bar{H}(\theta)\otimes  \hat{s}}\otimes \unit_{\hbt_{\adj}}\big) \big(\unit_{\hbt_{\aux}} \otimes e^{i \bar{H}(\theta) \otimes (T\mathbf{1}_s-\hat{s})}\otimes \unit_{\hbt}\big).
\end{aligned}
\end{align}
\item When $g(s) \neq g_{top}(s)$, we can estimate the gradient of the loss function to precision 
\begin{align} \label{eq:gradient3}
\begin{aligned}
    & \left| \frac{1}{2T A_T(\theta) }\frac{\partial \mathcal{L}(\rho(T))}{\partial \theta_m}- 
    \tr\big(\unit_s\otimes \obs_m(\theta)\ \widehat{\eta}(\theta) \big)
    \right| \le \norm{\obs_m(\theta)}_{\infty} \norm{g_{top} - g}_{\text{TV}} ,
\end{aligned}
\end{align}
where $\norm{\obs_m(\theta)}_{\infty}$ is the operator norm of $\obs_m(\theta)$, and $\norm{g_{top} - g}_{\text{TV}}$ is the total variation distance of two probability distributions $g_{top}$ and $g$.
\end{itemize}
\end{theorem}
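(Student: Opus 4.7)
The plan is to deduce both statements from Theorem~\ref{thm::main} by specialising to the time-independent Hamiltonian setting and then packaging the required time integral into a single expectation value via the clock mode. Since $\bar{H}(\theta)$ carries no explicit time dependence, the operator $\obs_m(s,\theta)=\sigma_y\otimes\partial \bar{H}(\theta)/\partial\theta_m\otimes\unit$ appearing in Theorem~\ref{thm::main} reduces to the $s$-independent operator $\obs_m(\theta)$, and Eq.~\eqref{eq:mixedgradient2} becomes $\partial\mathcal{L}_\theta/\partial\theta_m = 2 A_T(\theta)\int_0^T \tr\bigl(\obs_m(\theta)\,\eta(s,\theta)\bigr)\,ds$. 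The goal is to rewrite the right-hand side as a single expectation of $\unit_s\otimes\obs_m(\theta)$ against the augmented state $\widehat{\eta}(\theta)$ defined in Eq.~\eqref{eq:etahat}.

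The main technical step is to verify that the quantum circuit $(\unit_s\otimes\controlswap)\,W$ applied to $\int g(s)\ketbra{s}\,ds\otimes\eta(0,\theta)$ prepares $\widehat{\eta}(\theta)$. Here I would exploit that the continuous-variable factor $e^{-i\bar{H}(\theta)\otimes\hat{s}}$ is block-diagonal in the position eigenbasis $\{\ket{s}\}$: within the $\ket{s}$-block it acts on $\hbt$ as $e^{-i\bar{H}(\theta)s}=U_\theta(0,s)$, while $e^{i\bar{H}(\theta)\otimes(T\unit_s-\hat{s})}$ acts on $\hbt_{\text{adj}}$ as $e^{i\bar{H}(\theta)(T-s)}=U_\theta(T,s)$. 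Applying $W$ therefore yields, block-by-block in $s$, the state $\ketbra{s}\otimes\bigl(\unit\otimes U_\theta(T,s)\otimes U_\theta(0,s)\bigr)\eta(0,\theta)\bigl(\unit\otimes U_\theta(T,s)\otimes U_\theta(0,s)\bigr)^\dagger$, and the subsequent $\controlswap$ on the auxiliary qubit completes the construction $\mathcal{V}_\theta(s)\eta(0,\theta)\mathcal{V}_\theta(s)^\dagger=\eta(s,\theta)$. The overall state is therefore $\int g(s)\ketbra{s}\otimes\eta(s,\theta)\,ds=\widehat{\eta}(\theta)$, from which a direct calculation gives $\tr\bigl((\unit_s\otimes\obs_m(\theta))\widehat{\eta}(\theta)\bigr)=\int g(s)\tr\bigl(\obs_m(\theta)\eta(s,\theta)\bigr)\,ds$. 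Setting $g=g_{\text{top}}$ and multiplying by $2TA_T(\theta)$ reproduces the exact identity~\eqref{eq:lem2exact}.

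For the bound in Eq.~\eqref{eq:gradient3}, I would write the discrepancy as $\int\bigl(g_{\text{top}}(s)-g(s)\bigr)\tr\bigl(\obs_m(\theta)\eta(s,\theta)\bigr)\,ds$ and control the integrand pointwise by Hölder's inequality, $\bigl|\tr(\obs_m(\theta)\eta(s,\theta))\bigr|\le\norm{\obs_m(\theta)}_\infty\norm{\eta(s,\theta)}_1=\norm{\obs_m(\theta)}_\infty$, where the last equality uses that $\eta(s,\theta)$ is a density matrix. Integrating yields $\norm{\obs_m(\theta)}_\infty\int|g_{\text{top}}-g|\,ds$, which matches the stated bound once $\int|g_{\text{top}}-g|$ is identified with $\norm{g_{\text{top}}-g}_{\text{TV}}$ under the total variation convention adopted in the paper. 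The most delicate point I anticipate is the rigorous handling of the unnormalisable position eigenstates $\ket{s}$ appearing formally in $\int g(s)\ketbra{s}\,ds$; this can be addressed by working with a pure-state purification $\int\sqrt{g(s)}\ket{s}\,ds$ of the clock mode (which leaves all expectation values of observables of the form $\unit_s\otimes\obs$ unchanged), or equivalently by approximating $g$ by a sequence of narrow wavepackets and taking a limit, both of which preserve the block-diagonal action of $W$ and hence the final bound.
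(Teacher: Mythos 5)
Your proposal is correct and follows essentially the same route as the paper's own proof: reduce to Theorem~\ref{thm::main} with $\obs_m$ now $s$-independent, verify that $(\unit_s\otimes\controlswap)\,W$ prepares $\widehat{\eta}(\theta)$ via the block-diagonal action of $e^{-i\bar{H}(\theta)\otimes\hat{s}}$ in the $\ket{s}$ basis, and bound the $g\neq g_{\text{top}}$ error by $\abs{\tr(\obs_m(\theta)\eta(s,\theta))}\le\norm{\obs_m(\theta)}_{\infty}$ integrated against $\abs{g_{\text{top}}-g}$. Your closing remark on regularising the unnormalisable $\ket{s}$ eigenstates is a point the paper does not address explicitly, but it does not change the argument.
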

\begin{proof}
See Appendix~\ref{app:cor1}. 
\end{proof}
It is straightforward to generalise this for time-dependent Hamiltonians $H(t, \theta)$ by employing the method described in Section~\ref{sec:appnonauto}. In this case we set the time-independent Hamiltonian $\bar{H}(\theta)=\hat{p}_s \otimes \mathbf{1}+H(\hat{s}, \theta)$ where $H(\hat{s}, \theta)$ is just $H(t, \theta)$ with the replacement $t \rightarrow \hat{s}$. 
Then from Section~\ref{sec:appnonauto}, we see that we can recover the solution $\rho(t, \theta) \approx \tr_s(\sigma(t, \theta))$ where $\sigma(t, \theta)$ evolves unitarily with respect to the time-independent Hamiltonian $\bar{H}(\theta)$. If we rewrite loss functions of the form in Eq.~\eqref{eq:lossdef} in the form suitable for (closed) QNODE/QNPDE (that evolves with a time-independent Hamiltonian), we have 
\begin{align}
    \tr(O \rho(t, \theta)) \approx \tr\big(O \tr_s(\sigma(t, \theta))\big)=\tr\big(M \sigma(t, \theta)\big), \qquad M=O \otimes \mathbf{1}_s.
\end{align}
Then Theorem~\ref{thm5} can be directly applied with the modification of the original $a_T(\theta) \propto O \otimes \mathbf{1}_s$ for this case.

\subsection{Cost for gradient estimation} \label{sec:complexity}
\label{sec::cost}

We have two gradient estimation algorithms, in Theorems~\ref{thm::main} and ~\ref{thm5}. We will discuss their complexity and the situations where our quantum algorithms could be beneficial.\\

\noindent \textit{Discussion on the complexity of Theorem~\ref{thm::main}}:\\

We first consider the formalism in \thmref{thm::main} and recall that QNODE or QNPDE evolves via unitary evolution generated by the Hamiltonian in Eq.~\eqref{eqn::H_ansatz} also listed as follows:
\begin{align*}
H(t, \theta) = \sum_{k=1}^K f_k(t, \theta) H_k, \qquad H_k \neq H_{k'}, \quad  \forall k \neq k'.
\end{align*}
Then we can estimate the gradient using a discretisation in time to estimate the integration over time $s \in [0, T]$
\begin{align}
\label{eqn::discretise}
\begin{aligned}
\frac{\partial \mathcal{L}_{\theta}}{\partial \theta_m} \propto&\ \int_{0}^{T} \tr\Big(\big(\sigma_Y \otimes \sum_{k=1}^K \frac{\partial f_k(t, \theta)}{\partial \theta_m} H_k \otimes \unit\big)\  \eta(s, \theta)\Big)\ ds \\
\approx&\ \sum_{j=1}^{\numtime} \sum_{k=1}^K \Delta s_j\ \tr\Big(\big(\sigma_Y \otimes  \frac{\partial f_k(s_j, \theta)}{\partial \theta_m} H_k \otimes \unit\big)\ \eta(s_j, \theta)\Big)\\
=&\ \sum_{j=1}^{\numtime} \sum_{k=1}^K \Delta s_j  \frac{\partial f_k(s_j, \theta)}{\partial \theta_m} \tr\big(\sigma_Y \otimes H_k \otimes \unit\ \eta(s_j, \theta)\big),
\end{aligned}
\end{align}
where we have discretised the time at grid points $s_1, s_2, \cdots, s_\numtime$ and $\Delta s_j$ is the weight from the quadrature scheme.  Now each of the $\numtime K$ terms $\partial f_k(s_j, \theta)/\partial \theta_m$ can be computed using purely classical processing, so this does incur any cost for the quantum simulator. Thus, even if there are $M$ different tuneable parameters, we only need quantum algorithms to compute $\order{\numtime K}$ different expectation values $\tr\big(\sigma_Y \otimes H_k \otimes \unit\ \eta(s_j, \theta)\big)$ with possible repetitive  measurements. {We can also estimate the time integral via Monte Carlo scheme by picking grid points $s_j$ randomly.}\\

{In the following Table~\ref{table::2}, we summarize quantum resources required for estimating the gradient measured in both $L_\infty$ norm and $L_2$ norm. The estimate in $L_\infty$ norm is provided in a probabilistic argument and the one for $L_2$ is stated as the expected error. It is clear that for $L_\infty$ norm, the gradient estimate does not depend on the number of parameters $M$ directly, and the midpoint scheme may provide a slightly better scaling. In general, for the machine learning training part, the key scaling with respect to $M$, $\Delta$, $T$ and $K$ are typically the same for both midpoint scheme and Monte Carlo scheme in handing the time integral, which is very likely stem from the intrinsic nature of Monte Carlo type estimate for quantum measurements.} Though some preliminary estimate were discussed \cite{leng2022differentiable}, to the best of our knowledge, no previous literature has rigorously estimated and also proved the complexity scaling in quantum gradient estimation for continuous-time quantum processes. \\

\begin{table}[h!]
\caption{Summary of the quantum measurements required for various error types ($L_\infty$ and $L_2$) as well as for the optimization. To distinguish gradient estimate and training part, we use $\delta$ as error for gradient estimate, and $\Delta$ for  training. The bound may not be tight for some parameters; the overall scaling  with respect to our main focus (namely, $M$ and $T$) should be expected. $\Lipschitzmax$ is the Lipschitz constant for the loss, and $\mu$ is the \PL{} constant. 
}
\vspace{\baselineskip}
\begin{NiceTabular}[width=15cm]{X[3,l,m]X[7,l,m]X[5,l,m]}[]
\hline 
{\bf Algorithm } & {\bf Midpoint in time \eqref{eqn::approxiamte_Im}}  & \medskip {\bf Monte Carlo in time \eqref{eqn::mc_time}}  \medskip \\
\hline
{\bf Assumptions} & Assumptions~\ref{assume::norm_Hk_2}, \ref{assume::PL}, \ref{assume::AT}  & \medskip Assumptions~\ref{assume::norm_Hk}, \ref{assume::PL}, \ref{assume::AT}  \medskip \\
\hline
{\bf ${\bf L_\infty}$ error} & 
\medskip $\max\Big\{ \ordertilde{(\frac{T K}{\delta})^2}, \order{T^{\frac{3}{2}} K^{\frac{5}{2}} \delta^{-\frac{1}{2}}} \Big\}$ \par \smallskip
(Proposition~\ref{prop::midpoint_inf}) \medskip
& 
\medskip $\ordertilde{(\frac{T K}{\delta})^2 K}$ \par  \smallskip (Proposition~\ref{prop::mc_inf})\medskip
\\
\hline
{\bf ${\bf L_2}$ error} & \medskip $\max\big\{\order{(\frac{T K}{\delta})^2 M}, \order{M^{1/4} K^{5/2} T^{3/2}\delta^{-1/2}}\big\}
$ \par  \smallskip (Proposition~\ref{prop::midpoint_L2}) \medskip &  \medskip $\order{(\frac{T K}{\delta})^2 M K}
$ \par  \smallskip (Proposition~\ref{prop::mc_L2}) \medskip  \\
\hline
{\bf Training} \par (Limit $M\gg 1$,\par\ \  $\Delta \ll 1$) & \medskip $\widetilde{\mathcal{O}}\bigg(\frac{(T K)^2 \Lipschitzmax M}{\mu^2 \Delta} \bigg)$ \par  \smallskip
(Proposition~\ref{prop::ml::midpoint})
\medskip
& \medskip $\ordertilde{(\frac{T K}{\mu})^2 \frac{\Lipschitzmax M K}{\Delta}}$ \par   \smallskip (Proposition~\ref{prop::mc_train}) \medskip 
 \\
 \hline
\end{NiceTabular}
\label{table::2}
\end{table}

The linear dependence on $M$ for training cost is perhaps inevitable, as it is a typical situation for stochastic gradient descent and $M$ indicates the dimension of the optimization problem.
For the detailed statement, please refer to Appendix~\ref{sec::complexity}. Given the current analysis, the midpoint scheme appears to have slightly better scaling; this probably comes from a slightly stronger assumption on the regularity of $f_k$ for midpoint scheme. It remains to investigate whether the above analysis is tight for some parameters e.g., the \PL constant $\mu$, which will be left as continuing work. In Appendix, we include the discussion of the scaling with more parameters; please refer to the detailed explanation therein.
Note that the scaling can be improved if cleverer measurement schemes are applied, for instance in using shadow tomography \cite{huang_information-theoretic_2021,huang2022learning} and related methods. Using these methods, the estimation of $K$ expectation values would only require $\log(K)$ measurements, thus, in principle, we can expect an improvement over the scaling of $K$.
\\ 

The above scaling suggests the potential efficacy of quantum neural ODEs/PDEs, at least for certain families of quantum machine learning problems. 
Importantly, this cost, regardless of measurement scheme, \textit{does not} a priori  depend on the total number of tuneable parameters $M$ (or at most logarithmic dependence on $M$, which is generally inessential). 
The state preparation for $\eta(s,\theta)$ depends on the given QNODE and QNPDE: the analog quantum simulation, the unitary gates $U_{\theta}(s, s')$ and the controlled-SWAP operator. It is possible to design the analog quantum simulation for the problem at hand. For instance the simulation of specific ODEs/PDEs in Section~\ref{sec:appodepde}, for example see \cite{analogPDE, JC}. 
If we resort to digital quantum simulation, it is already clear that for whenever simulating  $U_{\theta}(s, s')$ digitally is advantageous over the classical simulation of $U_{\theta}(s,s')$, the quantum algorithm for gradient estimation will be more efficient than its classical counterpart. \\

We can already note a few cases where QNODE/QNPDE could be beneficial, and leave a more systematic study for future work on more specialised applications:
\begin{itemize}

\item This formulation appears to be well-suited to learning the time-dependence of Hamiltonians for some small $K$ and moderately small $T$, since the quantum resource cost does not depend on the complexity of the parameterisation of each $f_k(t, \theta)$. This is applicable, for example, to learning non-autonomous ODEs/PDEs. 

\item If the Hamiltonian model is the quantum Ising model, then $K \sim \text{poly}(n)$ where $n$ is the number of qubits. This scaling is desirable when learning a large-scale quantum system (e.g., $n \sim 10^2$) using our algorithm, compared with classical machine learning. For the short-time Hamiltonian learning problem $T=\order{1}$, the above scaling is polynomial in the number of qubits $\order{\text{poly}(n)}$, which suggests that the gradient estimate could be effective.

\end{itemize}

Although heuristic, this discussion already outlines the promising scaling of our approach; a more comprehensive mathematical analysis is left for future work.\\

\noindent \textit{Discussion on the complexity of Theorem~\ref{thm5}}:\\

In using the fully continuous-time algorithm in Theorem~\ref{thm5}, we do not need to discretise in time at all. In that case, we only need to consider the sum of $K$ different expectation values.  Thus, not including the cost of $\widehat{\eta}(\theta)$ preparation (the cost here is not significantly different from the cost in preparing $\eta(s, \theta)$ except with the addition of a qumode), we only need, in the naive case $\order{ \text{poly}(K) \delta^{-2}}$
measurements, and this can in principle be improved to $ \order{ \log(K) \delta^{-2}}$ with shadow tomography. 
For non-universal Hamiltonians, this $K$ can be very small. \\

In principle, even for \textit{near universal} Hamiltonians, $K$ may not need to be very large, but $f_k(t, \theta)$ may need to contain \textit{high frequency} terms in time, since we would be rapidly alternating between the application of different Hamiltonians in time, for instance in a similar setting as quantum approximate optimisation algorithm (QAOA), where we alternate between the application of the cost and the mixer Hamiltonians \cite{qaoa}. \\

We note that while these algorithms can offer up to exponential cost efficiency by delegating the relevant classical dynamics simulation to the corresponding quantum dynamics simulation and simulating this on a quantum device (especially useful for very high-dimensional problems), we do need to prepare many copies of either $\eta(s, \theta)$ using the algorithm in Theorem~\ref{thm::main}. This is purely due to the quantum non-cloning theorem.  We need to prepare $\eta(s, \theta)$ for each $s=s_1, \cdots, s_{\numtime}$ using at least $\numtime$ copies of $\eta(s, \theta)$ since we cannot keep $\eta(s, \theta)$ while also using it to creare $\eta(s'>s, \theta)$. On the other hand, by using the algorithm in Theorem~\ref{thm5}, we do not have the same quantum non-cloning theorem issue and we just need to prepare a time-independent state $\widehat{\eta}(\theta)$. However, this algorithm requires a continuous-variable treatment of the extra $s$-mode and it remains to be seen what kinds of implementations are possible in this setting.\\

\subsection{On the local energy landscape}
\label{subsec::landscape}

The shape of the energy landscape plays a crucial role in the efficiency of training quantum neural networks \cite{ge_optimization_2022}.
For analog quantum machine learning, these landscape theories merit reexamination, because QNODE/QNPDE offers a continuous perspective for parameterizing the dynamics, in contrast to the widely studied (discrete) QNNs.\\

Consider the loss function for the time-dependent Hamiltonian $H(t, \theta)=\sum_{k=1}^{\numop} f_k(t, \theta) H_k$ from Eq.~\eqref{eqn::H_ansatz},
where, in principle, each $f_k$ can be parameterized in various ways (e.g., polynomials or classical neural networks).
Let the exact function be $f_k^\star$, and define the optimal Hamiltonian as
$H^\star(t)=\sum_{k=1}^{K} f_k^\star(t) H_k$.
For now, we ignore the \emph{generalization error} arising from any mismatch between the chosen parametric family $\{f_k\}$ and the exact function $f_k^\star$:

\begin{assumption}
\label{assume::parametric_family}
 We assume that there is a unique $\theta^\star$ such that $f_k(t, \theta^\star) = f_k^\star(t)$.
\end{assumption}
This assumption could also be understood as a restriction of parameter near the exact $\theta^\star$, and we study the local energy landscape rather than the global landscape.\\

In the following section, we shall explore sufficient conditions to ensure that the \PL{} condition to hold  locally. 
In case the \PL{} condition does hold, we can ensure an efficient training of this machine learning problem using QNODE/QNPDE as outlined in Table~\ref{table::2}. 
Whether a problem satisfying local \PL{} condition in general must be problem dependent, and for the time being, we shall consider the state learning problem from Eq.~\eqref{eqn::loss_generative} with the expression of the loss function as follows:
\begin{align}
\label{eqn::loss_for_landscape}
\begin{aligned}
\mathcal{L}_{\theta}(\rho_0, T) &= 1 - \tr\big(\sigma(T) \rho(T, \theta)\big),
\end{aligned}
\end{align}
where $\rho_0$ and $T$ are temporarily fixed and we consider $\rho_0$ as a pure state only. $\sigma(T)$ is obtained by evolving $\rho_0$ under the exact Hamiltonian $H^\star$, and $\rho(T,\theta)$ is obtained by evolving the same $\rho_0$ under parameterized $H(t, \theta)$. 
Under Assumption~\ref{assume::parametric_family}, it is clear that the global minimum is always zero, achievable when $\theta = \theta^\star$. \\

\begin{lemma}[Hessian for QNODEs for state learning]
\label{lemma::Hessian_QNODE}
The Hessian matrix for QNODEs for the problem in Eq.~\eqref{eqn::loss_for_landscape} with fixed pure state $\rho_0$ and $T$, at the exact parameter $\theta^\star$ is given as follows, 
\begin{align*}
\nabla^2 \loss_\theta\rvert_{\theta^\star} =&\ 2 \sum_{k=1}^{\numop}\sum_{k'=1}^{\numop} \int_{0}^T \int_{0}^T \nabla_{\theta^{\star}} f_k(t) \nabla_{\theta^{\star}} f_{k'}^\top(s) G_{k,k'}(t,s)\ ds dt \\
=&\ 2  \int_{0}^T \int_{0}^T \nabla_{\theta^{\star}} \vect{f}(t) G(t,s)  \nabla_{\theta^{\star}} \vect{f}(s)^\top\ ds dt,
\end{align*}
where the (quantum) correlation matrix is defined as 
\begin{align*}
G_{k,k'} (t,s)\ :=&\ \bra{\psi(t)} H_{k} U_{\theta^\star}(s,t) H_{k'} \ket{\psi(s)} - \langle H_k \rangle_{\psi(t)} \langle H_{k'}\rangle_{\psi(s)}\\
=&\ \bra{\psi(0)} \tilde{H}_k(t) \tilde{H}_{k'}(s) \ket{\psi(0)}  - \langle \tilde{H}_k(t) \rangle_{\psi(0)} \langle \tilde{H}_{k'}(s) \rangle_{\psi(0)},
\end{align*}
and $\vect{f} = \begin{bmatrix} f_1 & f_2 & \cdots & f_K \end{bmatrix}$, and $\ket{\psi(s)}\bra{\psi(s)} = \rho(s, \theta^\star)$ is the optimal trajectory, and $\tilde{H}_k(t) := U(t,0) H_k U(0,t)$ is the Heisenberg picture of $H_k$.  
\end{lemma}
The above form of $G$ is essentially the quantum correlation, and existing analytical and numerical tools related to this quantity may provide help to estimate the Hessian of the energy landscape near the global minimum.\\

\begin{definition}
Let us define an integral operator $\mathcal{G}$ for any function $\phi: [0,T]\to \Real^\numop$ via 
\begin{align}
\label{eqn::G}
\mathcal{G} \phi(t) = \int_{0}^T G(t, s) \phi(s)\ ds,
\end{align}
and denote the smallest eigenvalue of $\mathcal{G}$ as $\muG$.
\end{definition}

Then acting on the test vector $\nu\in \Real^{M}$, one has
\begin{align}
\label{eqn::Hessian_acting_nu}
\begin{aligned}
\nu^\top \nabla^2 \loss_\theta\rvert_{\theta^\star} \nu =&\ 2 \int_{0}^T \int_{0}^T \nu^\top \nabla_{\theta^\star} \vect{f}(t) G(t,s) \nabla_{\theta^\star} \vect{f}(s)^\top \nu\ ds\ dt \\
\ge&\ 2\mu_\mathcal{G} \int_{0}^T \nu^\top \nabla_{\theta^\star} \vect{f}(t) \nabla_{\theta^\star} \vect{f}(t)^\top \nu\ dt.
\end{aligned}
\end{align}
With this tool, for some parametric family (or say architectures), we can estimate the Hessian using the information of quantum correlation function.

\begin{example}[Time-independent Hamiltonians] Suppose that the Hamiltonian is time-independent, namely, we parameterize $f_k(\theta,t) = \theta_k$.
By Eq.~\eqref{eqn::Hessian_acting_nu}, since the term $\nabla_{\theta} \vect{f} \equiv \unit$, the Hessian lower bound is $2 \muG T \unit$.
\end{example}

\begin{example}[Quantum ResNet, and QNN] Let us consider the family of control schedule $f_k$, which is a piecewise linear function, 
\begin{align*}
f_k(\theta, t) = \theta_{k,j}, \qquad \text{ if } t_j \le t \le t_{j+1}, 
\end{align*}
where $t_j = j \Delta t$ are uniformly chosen time grids.
Then by Eq.~\eqref{eqn::Hessian_acting_nu},
we can straightforwardly show that
\begin{align*}
\nu^\top \nabla^2 \loss_\theta\rvert_{\theta^\star} \nu\ \ge&\ 2\mu_\mathcal{G} \sum_{j}\int_{t_j}^{t_{j+1}} \nu^\top \nabla_{\theta^\star} \vect{f}(t) \nabla_{\theta^\star} \vect{f}(t)^\top \nu\ dt \\
=&\ 2\mu_\mathcal{G} \Delta t \sum_{\ell} \nu_\ell^2\ =\ 2\mu_G \Delta t \norm{\nu}^2.
\end{align*}
namely, the local Hessian lower bound is $2\muG \Delta t \unit$ (regardless of how large the $\Delta t$ is chosen). \\

This result reveals the difficulty of training quantum ResNet with very small time steps.
In reality, if the optimal schedule is indeed smooth, then one may consider approximating it using piecewise continuous step function with small $\Delta t$. However, this higher representability may come with the burden of more parameters, and more importantly the hardness of training, as suggested in the above example. One may instead consider Fourier series.

\end{example}

\begin{example}[Quantum Fourier Net] Suppose that we consider control schedule $f_k$ via Fourier series, or say sinusoidal functions, e.g., let us take 
\begin{align*}
f_k(\theta, t) = \sum_{j=0}^{\numtime} \theta_{k,j} \cos\big(2 \pi j t/T\big),
\end{align*}
then by Eq.~\eqref{eqn::Hessian_acting_nu}, one can easily compute that
\begin{align*}
\nu^\top \nabla^2 \loss_\theta\rvert_{\theta^\star} \nu \ge &\ 2\muG \int_{0}^T \nu^\top \nabla_{\theta^\star} \vect{f}(t) \nabla_{\theta^\star} \vect{f}(t)^\top \nu\ dt \\
=&\  2 \muG \int_{0}^T \sum_k \big(\sum_{j} \nu_{j,k} \cos(2 \pi j t/T)\big)^2 dt \\
=&\ 2 \muG \sum_k \sum_{j} \nu_{j,k}^2 \frac{T}{2}  \\
=&\ \muG T \norm{\nu}^2,
\end{align*}
so that the Hessian lower bound is $\muG T \unit$. In this case, the total number of parameters is $K (\numtime+1)$ and we denote the element $\nu_{j,k}$ as the adjoint of the parameter $\theta_{k,j}$ inside the test vector $\nu$, which takes the form $\nu^\top = \begin{bmatrix}\nu_{0,1} & \nu_{1,1} & \cdots & \nu_{\numtime,1} & \nu_{0,2} & \nu_{1,2} & \cdots & \nu_{\numtime,2} & \cdots \end{bmatrix}$.
\end{example}

\section{Numerical examples}
\label{sec::example}

We have presented several types of loss functions suitable for Hamiltonian learning in \secref{sec:appclosed}, and in this section, we will provide numerical demonstrations and validations. 
We mainly consider the following examples: (a) the hydrogen molecule; (b) the quantum Ising model.
The Hamiltonians for these two examples are described in Appendix~\ref{app::Hamil}.
Details of the training hyper-parameters can be found in Appendix~\ref{app::trainning_details}.
Note that here we provide simulation results using Theorem~\ref{thm::main}. At this moment, we have not done the same thing for Theorem~\ref{thm5} since we do not currently have a fully continuous-time general purpose emulation device particularly for learning ODEs and PDEs. Demonstrations using Theorem~\ref{thm5} on fully analog quantum devices is left to future work. 
All results reported below are based on classical emulation using PennyLane. In the following, we consider the noise from quantum measurements to better reflect the practice; \enquote{shots} means the number of quantum measurements performed each time; \enquote{shots = inf} means exact quantum measurements without fluctuation. We will use a small batch of data for training. We will report the test error, namely the error measured by the fidelity between the exact quantum state and the approximated quantum state; for Hamiltonian learning problems, the reported error has been averaged over different random initial states; see Eq.~\eqref{eqn::test_error} in Appendix~\ref{app::trainning_details} for full details.

\subsection{Learning state preparation}
\label{eg::qsp}

A key example is the quantum state preparation for a target state $\sigma$, where we choose the loss function as follows:
\begin{align*}
    \mathcal{L}(\rho(T, \theta))=1-\tr\big(\sigma \rho(T,\theta)\big) \equiv \tr\big( (\unit - \sigma) \rho(T,\theta)\big).
\end{align*}
See also Section~\ref{sec:appclosed}. To demonstrate \thmref{thm::main}, we consider a toy example: a single qubit with target state $\sigma = \ketbra{+} = a_T(\theta)$, and it is easy to see that $A_T(\theta) = -1$. We begin with $\rho_0=|0\rangle \langle 0|$, and use the following ansatz: $H(\theta) = \theta_1 \sigma_X + \theta_2 \sigma_Y + \theta_3 \sigma_Z$, and set the time period $T = 1$. The time integral is discretised using the trapezoidal rule with $\Delta s = 0.1$, and we take different numbers of shots to estimate the observable for each $s$. One feasible solution is $\theta^\star = [0, \pi/4, 0]$, corresponding to a Pauli-Y rotation, and we compare the optimised $\theta$ to this value $\theta^\star$. Although the solution is not unique (due to the periodicity in $\theta_2$), we initialize $\theta = [0,0,0]$ so that this $\theta^\star$ is the closest global minimizer. As shown in \figref{fig::eg1}, even with a few shots, the error decays exponentially with respect to the training step in the early stage. When $\theta$ is close to $\theta^\star$, quantum measurement noise prevents further improvement in accuracy, as expected. With precise quantum measurement (see the black curve), the error consistently exponentially decays to zero, demonstrating good convergence for this example using the above loss function and our quantum algorithms. This example clearly demonstrates the feasibility and promise of our approach.

\begin{figure}[h!]
\includegraphics[width=0.85\textwidth]{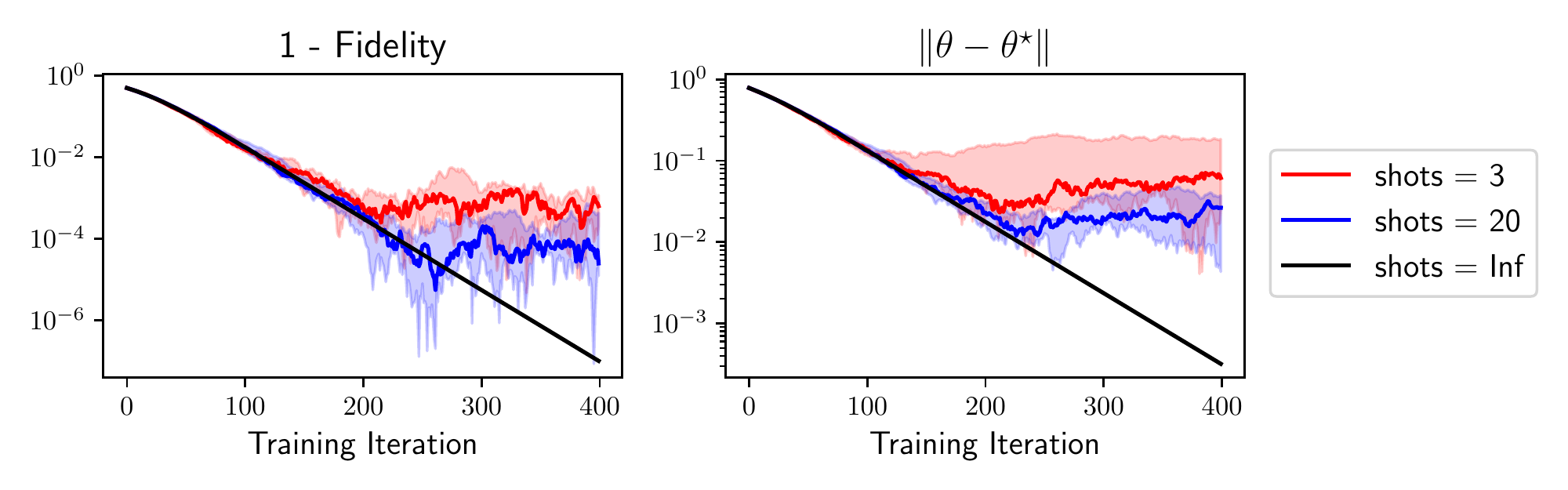}
\caption{The error $1-\abs{\bra{+} U(T,\theta) \ket{0}}^2$ versus training iterations for different numbers of shots. Shots refer to the number of repeated measurements to estimate the observables for each $s$ in Theorem~\ref{thm::main}. Each curve represents the median error from five independent experiments; the shaded region (with the same color) indicates the corresponding minimum and maximum error during training. For the case of exact quantum measurement (black curve), there is no fluctuation in optimisation.}
\label{fig::eg1}
\end{figure}

\subsection{Hamiltonian learning using quantum state}
\label{eg::hl}

This problem was discussed in  Section~\ref{sec:appclosed}. Its  goal is to learn the unknown parameters of a Hamiltonian when we are only given the known input and output states of the dynamics generated by this Hamiltonian. To probe the system, we prepare a set of initial pure states $\{\rho_j\}_{j=1}^{\sample}$ at time zero, which are inputs into the black box. The black-box system returns the corresponding quantum states at various times $T_j$, denoted as $\big\{\sigma_j(T_j) \big\}_{j=1}^{\sample}$. We can also input these states into a unitary system generated by a parameterised Hamiltonian $H(t, \theta)$, with outputs $\big\{\rho_j(T_j, \theta)\big\}_{j=1}^{\sample}$. Our aim is to find a parameterised Hamiltonian $H(t, \theta)$ that best approximates the corresponding black-box unitary evolution. The loss function is naturally defined as in Eq.~\eqref{eqn::loss_eg2}, which is also copied here 
$\loss_{\theta} = \frac{1}{\sample} \sum_{j=1}^{\sample} \Big(1 - \tr\big(\sigma_j(T_j) \rho_j(T_j,\theta)\big)\Big).$\\

Simulation results for time-independent cases are shown in \figref{fig::eg2}. We use the trapezoidal rule to discretise the time integral and take different numbers of shots to estimate the observable for each $s$, obtaining the gradient estimate in Eq.~\eqref{eq:mixedgradient2}.
During training, we use only one sample $\rho_i$ per parameter update, so each iteration reduces to a single unknown state learning task as in \secref{eg::qsp}. This explains the noisy fluctuations in optimisation, even for infinite shots (i.e., exact quantum measurements) in the black curve.
For the hydrogen molecule and Ising model Hamiltonians, the training is very stable even when using one sample per parameter update, as long as quantum measurement accuracy is sufficient. Even with only $10$ shots per quantum estimate, the loss error decays to around $10^{-3}$ in terms of fidelity measurements. Notably, as the number of sites in the quantum Ising model increases, the decay rate of the loss remains nearly unchanged, even for larger systems, provided that one is given precise quantum measurements. With noisy quantum estimate, the training performance is only slightly worsen as the dimension increases. This indicates that the training problem remains efficient, provided that trustful and certified quantum hardware is available.\\

Moreover, we have considered learning a time-dependent Ising model whose Hamiltonian and model parameterization are explained in Appendix~\ref{app::Hamil}. We consider a two-site model for simplicity and the training results are visualized in Figure~\ref{fig::eg2_td}. The result is overall similar to the time-independent case. We also show a typical training result for the time-schedule in Figure~\ref{fig::eg2_td} to validate nice accuracy of the training using our gradient estimates.

\begin{figure}[h!]
\begin{subfigure}[b]{\textwidth}
\includegraphics[width=0.7\textwidth]{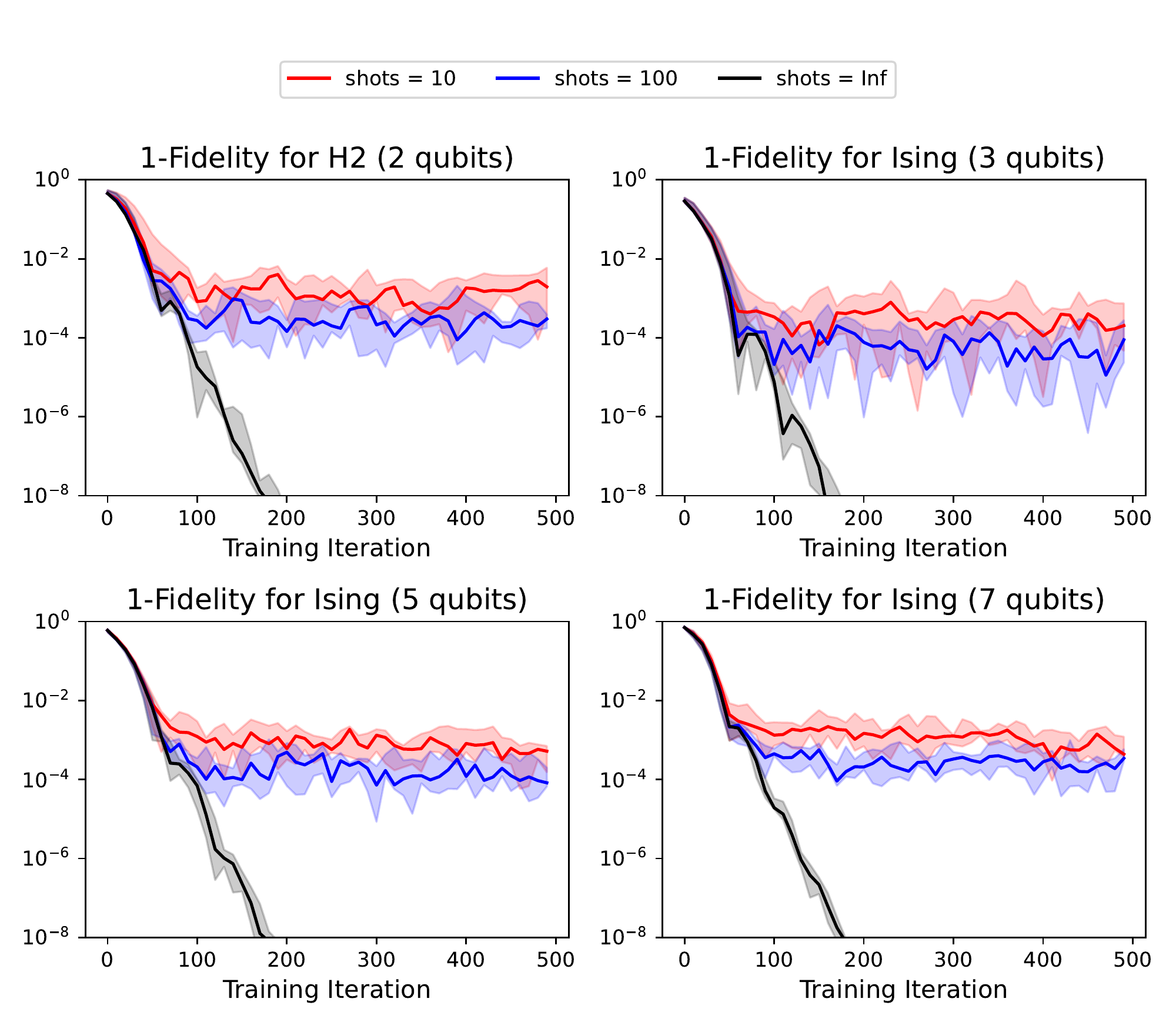}
\end{subfigure}
\caption{Training results for \secref{eg::hl} for different time-independent Hamiltonians using the loss function in Eq.~\eqref{eqn::loss_eg2}. Each curve represents the median error from five independent training runs, and the shaded region indicates the maximum and minimum error. The y-axis range is truncated for better visualization.}
\label{fig::eg2}
\end{figure}

\begin{figure}[h!]
\begin{subfigure}[b]{0.4\textwidth}
\includegraphics[width=\textwidth]{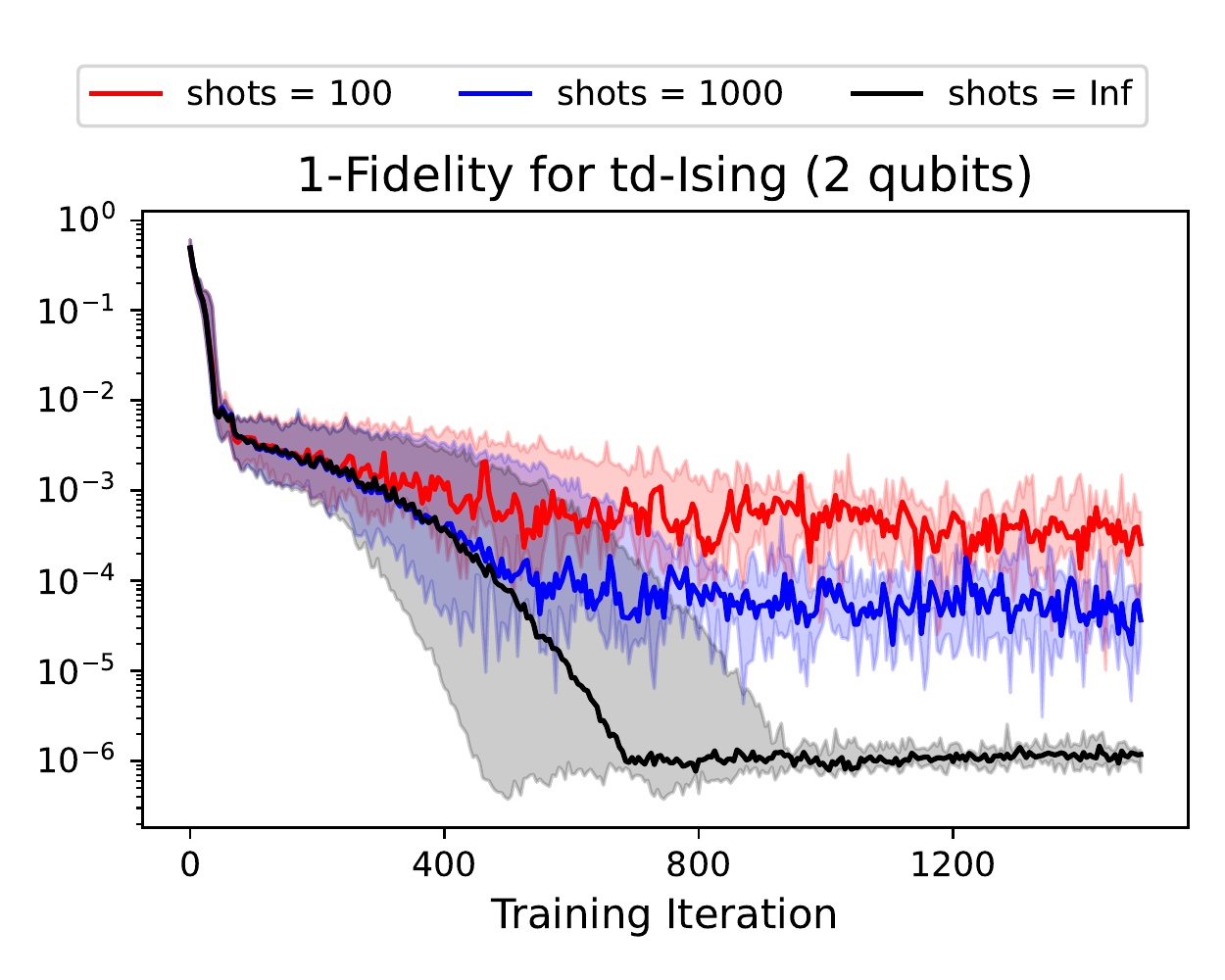}
\caption{Test error during training}
\end{subfigure}
~
\begin{subfigure}[b]{0.55\textwidth}
\includegraphics[width=\textwidth]{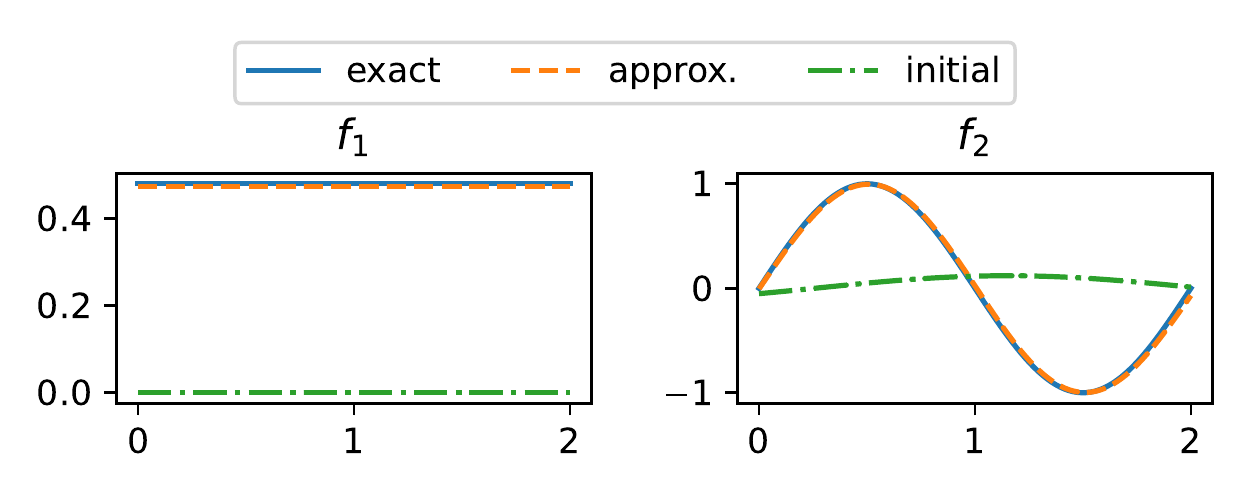}
\caption{A typical training result of time-schedule}
\end{subfigure}

\caption{Training results for \secref{eg::hl} for two-site time-dependent quantum Ising chain (td-Ising) using the loss function in Eq.~\eqref{eqn::loss_eg2}. (a) Each curve represents the median error from five independent training runs, and the shaded region indicates the maximum and minimum error. (b) We show a typical training result of time-schedule using $10^3$ shots per quantum estimate.}
\label{fig::eg2_td}
\end{figure}

\subsection{Hamiltonian learning using observables}
\label{section::hl_observable}

We consider the same Hamiltonian learning problem as in \secref{eg::hl}, except now we assume access only to the expectation values of observables, not the final quantum state outputs. This is also discussed in Section~\ref{sec:appclosed}. In this case, we minimize
\begin{align*}
\frac{1}{M N} \sum_{i=1}^{M} \sum_{j=1}^{N} \abs{\tr\big(\obs_j \sigma_i(T_i)\big) - \tr\big(\obs_j \rho_i(T_i,\theta)\big)}^2, 
\end{align*}
where $O_j$ are observables. We consider the same examples as in \secref{eg::hl}, except we use observable data for Eq.~\eqref{eqn::loss_observable}, which is also listed above for convenience. 
Due to the relatively higher cost of classically emulating mixed states, we consider smaller systems for this example. However, for quantum analog machines, evolving a mixed state is not fundamentally different from evolving a pure state, which brings potential advantages to run our algorithm directly on quantum devices.
For each parameter update, we randomly choose one-site observables $\big\{\sigma_{X,Y,Z}^{(k)}\big\}_{k=1}^{n}$ (where $n$ is the number of qubits), initialize the state at $ \ketbra{+}^{\otimes n}$ followed by random rotations using Rotation-(X,Y,Z), and also pick a random time $T_i$. 
These random initial sampling of input state $\rho_i$ and $T_i$ can help break the symmetry and overcome the appearance of new local minima when using tomography data as the loss function in Eq.~\eqref{eqn::loss_observable}.
Therefore, there is stochastic fluctuation in the datasets $\big(\rho_i, T_i\big)$, as well as in the random selection of observables. 
Simulation results are shown in \figref{fig::eg3}, demonstrating the effectiveness and efficiency of our approach. Although the loss function in Eq.~\eqref{eqn::loss_observable} is effective, we  empirically note that, using tomography data as the loss function appears comparatively harder to train than using an unknown state.

\begin{figure}[h!]
\begin{subfigure}[b]{0.6\textwidth}
\includegraphics[width=\textwidth]{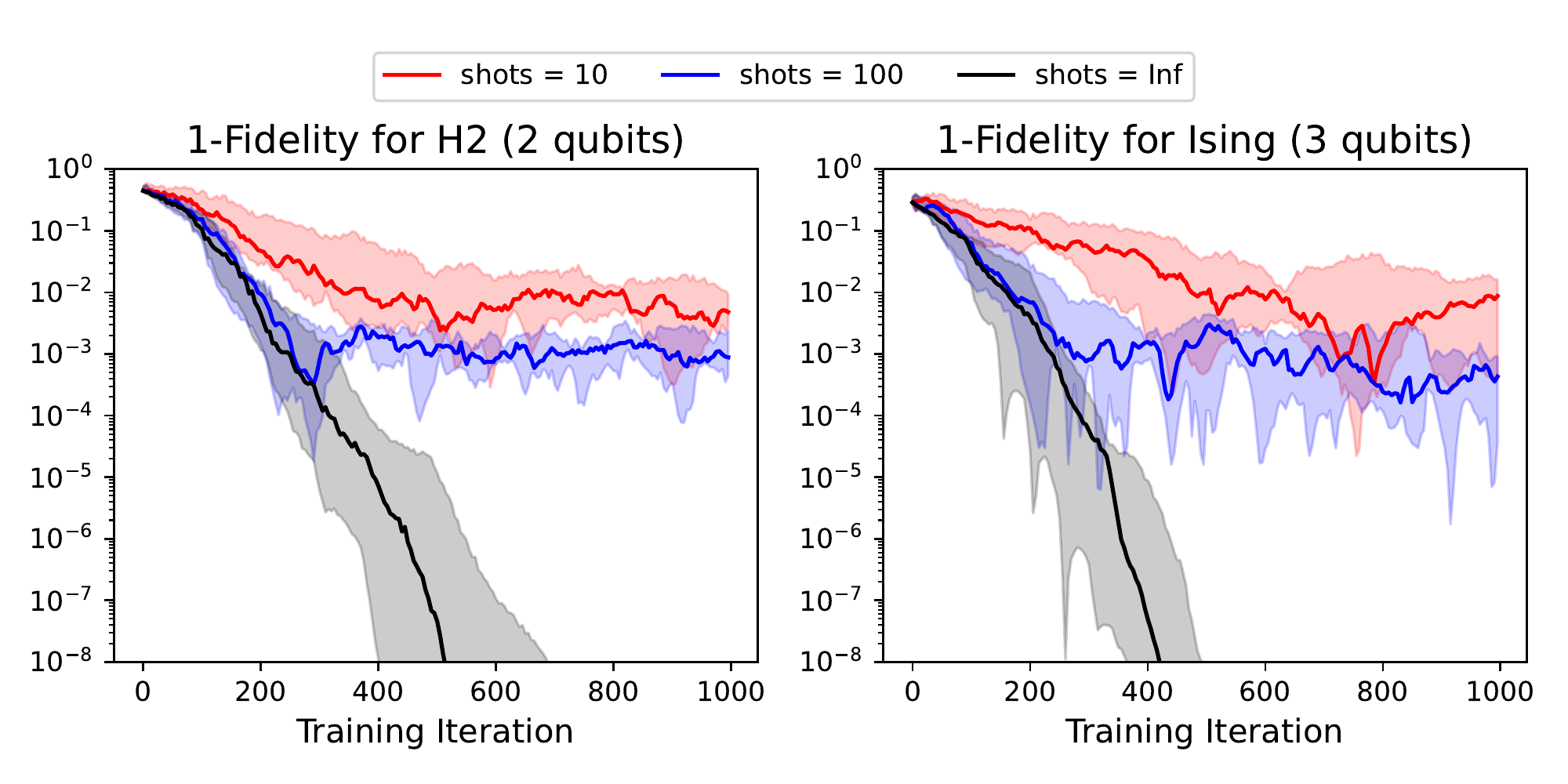}
\caption{For time-independent problems}
\label{fig::eg3}
\end{subfigure}
~
\begin{subfigure}[b]{0.37\textwidth}
\includegraphics[width=\textwidth]{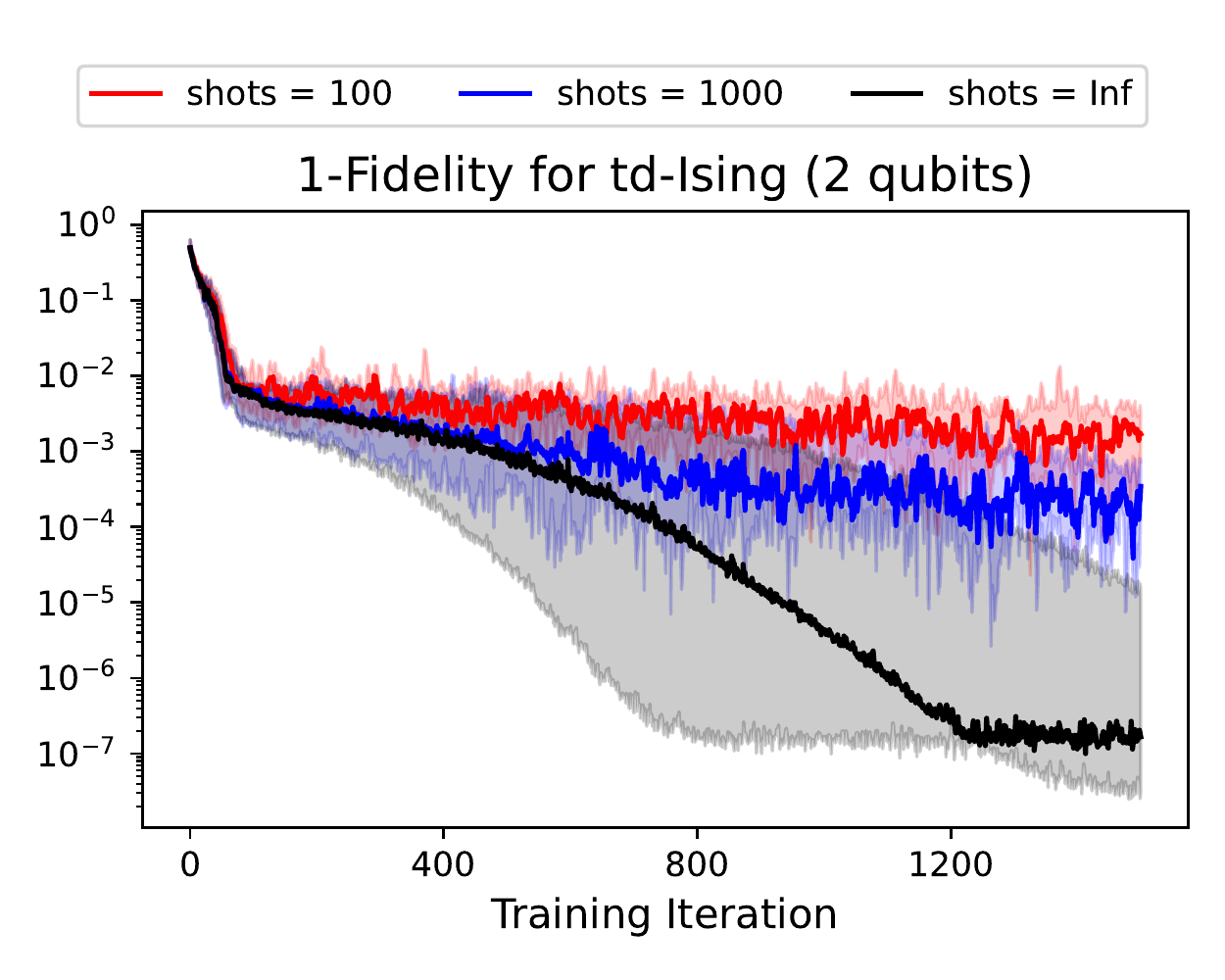}
\caption{For time-dependent problem}
\label{fig::eg3_td}
\end{subfigure}
\caption{Training results for \secref{section::hl_observable} using the loss function in Eq.~\eqref{eqn::loss_observable}: (a) for different time-independent Hamiltonians; (b) for two-site time-dependent quantum Ising chain (td-Ising). Each curve represents the median error from five independent training runs, and the shaded region indicates the maximum and minimum error.}
\end{figure}

\section{Summary and Outlook}

The QNODE and QNPDE framework is applicable for two classes of problems: 
one where a universal Hamiltonian is not required, and one where it is required. In the former case -- when the dynamics that needs to be learned can naturally be mapped onto a corresponding Hamiltonian -- QNODEs and QNPDEs are particularly suitable and allow for better problem-inspired ansatzes. These applications include learning problems in both closed and some open quantum systems, learning of (classical) ODEs and PDEs, in both autonomous and and non-autonomous settings. In the latter case -- when the learning involves data generated from dynamics too complicated for simple forms, and therefore a universal representation is needed -- then the QNODE and QNPDE can serve as a mathematical framework that can recover existing models for quantum neural networks (which are discrete time models). \\

 We introduce two quantum algorithms for gradient estimation for QNODEs and QNPDEs: one based on partial time discretisation and another that operates entirely in continuous time. These algorithms generalise backpropagation to the quantum regime and support a wide range of applications. For problems that do not require universal representation of Hamiltonians, these gradient estimation algorithms can be very efficient and the costs do not a priori depend on the number of unknown parameters to be learned. Each of these applications requires more in-depth study in their own right for specific scenarios, and we leave this to future work for specific models. \\

Looking forward, QNODEs and QNPDEs provide a powerful lens through which quantum machine learning models can be viewed or constructed, especially in contexts where physical dynamics are naturally continuous. Beyond efficiency gains in gradient estimation, this framework opens the door to systematic architectural design inspired by physical control theory and could catalyse the development of continuous-depth quantum models. Future works will explore rigorous complexity bounds, hardware implementations, and applications to real-world quantum control, generative modeling tasks, and extensions to open QNODEs/QNPDEs. 

\section*{Acknowledgements}
The authors would like to thank Lei Wang (Chinese Academy of Sciences) for helpful discussions.
The authors  acknowledge funding from Shanghai Pilot Program for Basic Research, the Science and Technology Commission of Shanghai Municipality (STCSM) grant no. 24LZ1401200 (21JC1402900), the NSFC grant No. 12341104, the Shanghai Jiao Tong University 2030 Initiative, and the Fundamental Research Funds for the Central Universities.
YC was also supported by NSFC grant No. 12401573.
 SJ was also partially supported by the NSFC grant No. 12426637. NL is in addition supported by grant NSFC No. 12471411.

\bibliography{RefQGF}

\appendix 

\section{Proofs for Sections~\ref{sec::revisiting} - \ref{alg::no_discretization}}

\subsection{Proof of Lemma~\ref{lem:3}} \label{app:theorem3}
Without loss of generality, we can only consider the time-derivative with respect to a particular index $m$.
Given the loss function $\mathcal{L}\big(\rho(T, \theta)\big)$, the chain rule gives 
\begin{align} \label{eq:losssigma}
    \frac{\partial \mathcal{L}(\rho(T, \theta))}{\partial \theta_m}=\tr\left(\frac{\delta \loss(\rho(T, \theta))}{\delta \rho} \sigma_m(T, \theta)\right),
\end{align}
where we define 
\begin{align}
     \sigma_m(s, \theta) := \frac{\partial \rho(s, \theta)}{\partial \theta_m}, \qquad s\in [0,T].
\end{align}
To solve for $\sigma_m(T, \theta)$, we write down the dynamics of $\sigma_m(s, \theta)$
\begin{align} \label{eq:sigmaevo}
   &  \frac{\partial \sigma_m(s, \theta)}{\partial s}=\frac{\partial}{\partial \theta_m}\frac{\partial \rho(s, \theta)}{\partial s}=-i\frac{\partial}{\partial \theta_m}[H(s, \theta), \rho(s, \theta)] \nonumber \\
   &=-i[H(s, \theta), \sigma_m(s, \theta)]-i \left[\frac{\partial H(s, \theta)}{\partial \theta_m}, \rho(s, \theta)\right], \qquad \sigma_m(0, \theta)=0,
\end{align}
where the initial condition $\sigma_m(0, \theta)=0$ is due to the fact that the initial state $\rho_0$ is independent of the parameter $\theta$ inside Hamiltonians. 
Note that the homogeneous part of Eq.~\eqref{eq:sigmaevo} has solution $\sigma_m^h(T, \theta)=U_{\theta}(s, T) \sigma_m^h(s, \theta)U_{\theta}(s, T)^{\dagger}$ where $U_{\theta}(s, T)=\mathcal{T} \exp(-i\int_s^T H(\tau, \theta) d \tau)$. The inhomogeneous part of Eq.~\eqref{eq:sigmaevo} is $-i[\frac{\partial H(s,\theta)}{\partial \theta_m}, \rho(s, \theta)]$, which can be viewed as an additional force term. Either by introducing $U_\theta(s,T)$ as integrating factors or by applying Duhamel's principle directly, Eq.~\eqref{eq:sigmaevo} can be expressed in the integral form as follows:
\begin{align} \label{eq:sigmafinal}
    \sigma_m(T, \theta)=i \int_0^T U_{\theta}(s, T)\left[\rho(s, \theta), \frac{\partial H(s, \theta)}{\partial \theta_m}\right] U_{\theta}(s, T)^{\dagger} ds.
\end{align}
Inserting Eq.~\eqref{eq:sigmafinal} into Eq.~\eqref{eq:losssigma} gives 
\begin{align}
    &  \frac{\partial \mathcal{L}(\rho(T, \theta))}{\partial \theta_m}=i \int_0^T \tr\left(\frac{\delta \loss(\rho(T, \theta))}{\delta \rho}U_{\theta}(s, T)\left[\rho(s, \theta), \frac{\partial H(s, \theta)}{\partial \theta_m}\right]U_{\theta}(s, T)^{\dagger} \right) ds \nonumber \\
    &=i \int_0^T \tr\left(U_{\theta}(s, T)^{\dagger}\frac{\delta \loss(\rho(T, \theta))}{\delta \rho} U_{\theta}(s, T)\left[\rho(s, \theta), \frac{\partial H(s, \theta)}{\partial \theta_m}\right] \right) ds \nonumber \\
    &= i \int_0^T \tr\left(U_{\theta}(T, s)\frac{\delta \loss(\rho(T, \theta))}{\delta \rho} U_{\theta}(T, s)^{\dagger} \left[\rho(s, \theta), \frac{\partial H(s, \theta)}{\partial \theta_m}\right] \right) ds \nonumber \\
    &=i\int_0^T A_T(\theta) \tr\left(a(s, \theta)\left[\rho(s, \theta), \frac{\partial H(s, \theta)}{\partial \theta_m}\right] \right) ds,
\end{align}
where in the last line, we introduce the notation
\begin{align*}
a(s,\theta) :=  \frac{U_{\theta}(T, s)\frac{\delta \loss(\rho(T, \theta))}{\delta \rho}U_{\theta}(T, s)^{\dagger}}{A_T(\theta)}, \qquad A_T(\theta) := \tr\big(\delta \loss(\rho(T, \theta))/\delta \rho\big).
\end{align*}
Clearly, the state $a(s, \theta)$ undergoes unitary evolution according to Eq.~\eqref{eq:asmatrix} with a terminal condition, and $\tr\big(a(s,\theta)\big) = 1$ is preserved for all $s \in [0, T]$. One can also rewrite
\begin{align}
    \tr\left(a(s, \theta)\left[\rho(s, \theta), \frac{\partial H(s, \theta)}{\partial \theta_m}\right]\right)=\tr\left(\left[\frac{\partial H(s, \theta)}{\partial \theta_m}, a(s, \theta)\right] \rho(s, \theta)\right),
\end{align}
and the results of the lemma follows. 

\subsection{Proof of Theorem~\ref{thm::main}} \label{app:theoremmixedtopure}

It is clear that the state
\begin{align*}
\eta(s,\theta) &= \big(\ketbra{0} \otimes \mathbf{1}\otimes \mathbf{1} + \ketbra{1} \otimes \boldsymbol{S}\big) \big(\ketbra{+}\otimes a(s,\theta) \otimes \rho(s,\theta)\big)  \big(\ketbra{0} \otimes \mathbf{1}\otimes \mathbf{1} + \ketbra{1} \otimes \boldsymbol{S}\big)\\
&= \frac{1}{2}\left(\begin{aligned} & \ketbra{0} \otimes a(s,\theta)\otimes \rho(s,\theta) + \ket{0}\bra{1} \otimes \big(a(s,\theta)\otimes \rho(s,\theta)\big) \boldsymbol{S}  \\
&\qquad + \ket{1}\bra{0} \otimes \boldsymbol{S}\big(a(s,\theta) \otimes \rho(s,\theta) \big) + \ketbra{1} \otimes \rho(s,\theta)\otimes a(s,\theta) \end{aligned}\right).
\end{align*}

It can be observed that the integrand in Eq.~\eqref{eq:rhodensitylemma} can be written as
\begin{align*}
& i A_T(\theta)  \tr\left(\left[\frac{\partial H(s, \theta)}{\partial \theta_m},a(s, \theta)\right] \rho(s, \theta)\right) \\
=& i A_T(\theta) \tr\big(\frac{\partial H(s, \theta)}{\partial \theta_m} a(s,\theta)\rho(s,\theta) - \frac{\partial H(s, \theta)}{\partial \theta_m} \rho(s,\theta) a(s,\theta)\big) \\
=& 2 A_T(\theta) \tr\big(\obs_m(s,\theta) \eta(s,\theta)\big).
\end{align*}

To validate the last equality explicitly, let us decompose $\rho(s,\theta) = \sum_{k} \lambda_k \ketbra{b_k}$ and $a(s,\theta) = \sum_{k} \mu_k \ketbra{a_k}$. Then by direct decomposition, one can verify that 
\begin{align*}
& 2 \tr\big(\obs_m(s,\theta) \eta(s,\theta)\big) \\
=& -i \tr\Big(\big(\frac{\partial H(s, \theta)}{\partial \theta_m}\otimes \mathbf{1} \big) \boldsymbol{S} \big(a(s,\theta)\otimes \rho(s,\theta)\big)\Big)  + i \tr\Big(\big(\frac{\partial H(s, \theta)}{\partial \theta_m}\otimes \mathbf{1} \big) \big(a(s,\theta)\otimes \rho(s,\theta)\big)\boldsymbol{S} \Big) \\
=& -i \sum_{j,k} \mu_j \lambda_k \bra{b_k}{a_j}\rangle \bra{a_j} \frac{\partial H(s, \theta)}{\partial \theta_m}\ket{b_k}   + i \sum_{j,k} \mu_j \lambda_k \bra{b_k}  \frac{\partial H(s, \theta)}{\partial \theta_m} \ket{a_j} \bra{a_j} {b_k}\rangle \\
=& -i \sum_{k} \lambda_k \bra{b_k} a(s,\theta)  \frac{\partial H(s, \theta)}{\partial \theta_m} \ket{b_k} + i \sum_{k} \lambda_k \bra{b_k}  \frac{\partial H(s, \theta)}{\partial \theta_m} a(s,\theta) \ket{b_k}\\
=& -i \tr\big( \frac{\partial H(s, \theta)}{\partial \theta_m} \rho(s,\theta) a(s,\theta)\big)  + i \tr\big( \frac{\partial H(s, \theta)}{\partial \theta_m} a(s,\theta) \rho(s,\theta)\big).
\end{align*}
Thus the proof is complete.

\subsection{Proof of \thmref{thm5}} \label{app:cor1}

The eigenstates $\{|s\rangle\}_{s\in \Real}$ form an orthonormal basis with $\int |s\rangle \langle s|\ ds=\mathbf{1}_s$ and the position operator $\hat{s} :=\int s|s\rangle \langle s|\ ds$.
Recall from Eq.~\eqref{eq:etahat} that
\begin{align*}
\widehat{\eta}(\theta) = \int g(s) \ketbra{s} \otimes \eta(s, \theta)\ ds, \qquad \int\ g(s) ds=1,
\end{align*}

By confining to time-independent Hamiltonians,  the operator (from Theorem~\ref{thm::main}) $O_m(s, \theta) \rightarrow O_m(\theta)$ is independent of $s$. Then one easily observes that 
\begin{align*}
\tr\big((\unit_s\otimes \obs_m(\theta))\ \widehat{\eta} (\theta)\big) = \int  g(s) \tr\big(\obs_m(\theta) \eta(s,\theta)\big) ds .
\end{align*}
When $g(s)=g_{top}(s)$, then clearly $\int  g(s) \tr\big(\obs_m(\theta) \eta(s,\theta)\big) ds= \frac{1}{T} \int_0^{T} \tr\big(\obs_m(\theta) \eta(s,\theta)\big) ds$. \\

{\noindent \emph{Proof for the error bounds of imperfect $g(s)$:}}\\

Since the quantum $\int g_{{top}}(s) \ketbra{s} ds$ cannot be exactly prepared (if we keep $s$ continuous) due to discontinuity at $s=0$ and $s=T$, we  instead approximate it using a qumode $\int g(s) \ketbra{s} ds$. 
By Eq.~\eqref{eq:lem2exact} and the triangle inequality,
\begin{align*}
 & \left| \frac{1}{2T A_T(\theta) }\frac{\partial \mathcal{L}_\theta}{\partial \theta_m}- 
    \tr\big(\unit_s\otimes \obs_m(\theta)\ \widehat{\eta}(\theta) \big) \right| \\
    \myeq{\eqref{eq:lem2exact}}&\ \abs{\tr\Big(\unit_s\otimes \obs_m(\theta) \int \big(g_{top}(s) - g(s)\big)\ketbra{s}\otimes \eta(s,\theta)\ ds \Big)}\\
    = &\ \abs{\int \big(g_{top}(s) - g(s)\big)\ ds \tr\big( \obs_m(\theta) \eta(s,\theta)\big)}  \\
    \le &\ \int \abs{g_{top}(s) - g(s)}\ \abs{\tr\big( \obs_m(\theta) \eta(s,\theta)\big)} ds.
 \end{align*}
Since $\eta(s,\theta)$ is a density matrix which is positive semi-definite and $\tr\big(\eta(s,\theta)\big) = 1$, one has
\begin{align*}
\abs{\tr\big( \obs_m(\theta)\eta(s,\theta)\big)} \le \norm{\obs_m(\theta)}_{\infty},
\end{align*}
where $\norm{\obs_m(\theta)}_{\infty}$ is the operator norm of $\obs_m(\theta)$, or equivalently, the larges eigenvalue of $\abs{\obs_m(\theta)}$. 
More specifically, if we decompose $O_m(\theta) = \sum_k \lambda_k \ket{k}\bra{k}$, then 
$\abs{\tr\big( \obs_m(\theta)\eta(s,\theta)\big)} \le \sum_{k} \abs{\lambda_k} \bra{k}\eta(s,\theta)\ket{k} \le \norm{\obs_m(\theta)}_{\infty}\sum_k \bra{k} \eta(s,\theta)\ket{k} = \norm{\obs_m(\theta)}_{\infty} \tr\big(\eta(s,\theta)\big) = \norm{\obs_m(\theta)}_{\infty}$.
Therefore, 
\begin{align*}
& \left| \frac{1}{2T A_T(\theta) }\frac{\partial \mathcal{L}_\theta}{\partial \theta_m}- 
    \tr\big(\unit_s\otimes \obs_m(\theta)\ \widehat{\eta}(\theta) \big) \right| \\
    &\qquad \le\ \norm{\obs_m(\theta)}_{\infty}  \int \abs{g_{top}(s) - g(s)} ds \\
    &\qquad =  \norm{\obs_m(\theta)}_{\infty}\ \norm{g_{top} - g}_{\text{TV}}.
\end{align*}

{\noindent \emph{Proof for the realization for time-independent case:}}

\begin{align*}
\widehat{\eta}(\theta)  &\myeq{\eqref{eq:etahat}} \int g(s) \ketbra{s} \otimes \eta(s, \theta)\ ds \\
&= \int g(s) \ketbra{s} \otimes \mathcal{V}_\theta(s) \eta(0, \theta)\mathcal{V}_\theta(s)^\dagger\ ds \\
&\myeq{\eqref{eq:psisstate2}}  \int g(s) \ketbra{s} \otimes \controlswap \big(\unit \otimes e^{i H(\theta) (T-s)} \otimes e^{-i H(\theta) s}\big) \eta(0, \theta)  \big(\unit \otimes e^{-i H(\theta) (T-s)} \otimes e^{i H(\theta) s}\big) \controlswap ds \\
&= \big(\unit_s\otimes \controlswap \big) W \big(\int g(s) \ketbra{s} \otimes  \eta(0, \theta) ds\big) W^\dagger \big(\unit_s\otimes \controlswap \big) ,
\end{align*}
where the unitary gate $W$ acting on the $s$-mode, auxiliary qubit $\hbt_{\aux}$, the adjoint mode $\hbt_{\adj}$ and the original mode $\hbt$, consists of the application of two unitary gates sequentially 
\begin{align*}
W =  \big(\unit_{\hbt_{\aux}} \otimes e^{-i H(\theta)\otimes  \hat{s}}\otimes \unit_{\hbt_{\adj}}\big) \big(\unit_{\hbt_{\aux}} \otimes e^{i H(\theta) \otimes (T\unit_s -\hat{s})}\otimes \unit_{\hbt}\big).
\end{align*}

\section{Complexity scaling of quantum measurements for \secref{sec::cost}}
\label{sec::complexity}

\subsection{A general setup and assumptions}

The approximation of the gradient is denoted as $\mathscr{X}_m$:
\begin{align*}
\frac{1}{2 A_T(\theta)} \frac{\partial \mathcal{L}_{\theta}}{\partial \theta_m} \approx \mathscr{X}_m,
\end{align*}
where $\mathscr{X}_m$ is a general notation with possibly different implementations. Since the gradient value takes the integral form, we will consider both midpoint scheme in Eq.~\eqref{eqn::approxiamte_Im} and direct Monte Carlo sampling in Eq.~\eqref{eqn::mc_time} to approximate the integral. 
One may also improve the Monte Carlo sampling via adopting variance reduction techniques like the stratified sampling \cite{liu_monte_2004}, which we will not pursue herein. For simplicity, we consider the above two simple approximations below.\\

Then the classical (batch) stochastic gradient descent algorithm is
\begin{align}
\label{eqn::sgd}
\theta^{(k+1)}  = \theta^{(k)} - h 2 A_T(\theta^{(k)}) \mathscr{X},
\end{align}
where $h$ is the learning rate or say the step size, and $\mathscr{X}$ is a vector whose components are $\mathscr{X}_m$ ($m=1,2,\cdots,M$).
Technically, since we will approximate $\mathscr{X}$ via quantum measurement outcomes with a limited number of repetitive samples, it is typically a \emph{batch SGD algorithm}. The approximation $\mathscr{X}$ depends on the current stage $k$, but for simplicity, we shall not indicate that explicitly for ease of notations. \\

In what follows, we will discuss two approximations, and estimate the resources required to ensure that 
\begin{align}
\label{eqn::sgd_error}
\ee\big[\loss_{\theta^{(\Niter)}} - \loss_{\min}\big] \le \Delta,
\end{align}
where $\Delta$ is some prescribed error tolerance, and 
\begin{align*}
 \loss_{\min} = \inf_{\theta} \loss_\theta
\end{align*}
is the (globally) minimum function value for the loss.\\

To find the quantum resources required for $L_\infty$ error approximation, we will  assume the following for convenience:
\begin{assumption}
\label{assume::norm_Hk}
Without loss of generality, we assume that  each term $H_k$ has operator norm $\norm{H_k}_{\infty}=1$.
Suppose that for each $s\in [0,T]$ and index $1\le k\le K$, $1\le m\le M$, we have uniform bounds 
$
\abs{\frac{\partial f_k(s, \theta)}{\partial \theta_m}} \le \alpha.
$
\end{assumption}

An \emph{alternative} stronger assumption is the following, in case $f_k$ are relatively smoother,
\begin{assumption}
\label{assume::norm_Hk_2}
Without loss of generality, we assume that  each term $H_k$ has operator norm $\norm{H_k}_{\infty}=1$.
Suppose that for each $s\in [0,T]$ and index $1\le k\le K$, $1\le m\le M$, we have uniform bounds 
\begin{align*}
\abs{f_k(s, \theta)},\ \abs{\frac{\partial f_k(s, \theta)}{\partial s}},\ \abs{\frac{\partial f_k(s, \theta)}{\partial \theta_m}},\ \abs{\frac{\partial^2 f_k(s, \theta)}{\partial \theta_m\partial s}},\ \abs{\frac{\partial^3 f_k(s, \theta)}{\partial \theta_m\partial s^2}} \le \alpha.
\end{align*}
\end{assumption}
Sine the scaling of this upper bound is not our priority, we slightly abuse the notation and use the notation $\alpha$ for both assumptions.
To analyse the performance of min-batch gradient descent, we make two additional assumptions.

\begin{assumption}
\label{assume::PL}
$\loss_\theta$ satisfies the $\mu$-\PL{} condition (or $\mu$-PL in short), meaning that
\begin{align}
\label{eqn::PL}
\mathcal{L}_\theta - \loss_{\min} \le \frac{1}{2\mu} \norm{\nabla \mathcal{L}_\theta}^2_{2},\ \  \forall \theta.
\end{align}
\end{assumption}

\begin{assumption}
\label{assume::AT}
Assume that $A_T(\theta)$ is known a priori with uniform upper bound $\mathsf{A}$ for any $\theta$, namely,
\begin{align*}
\abs{A_T(\theta)} \le \mathsf{A}, \qquad \forall \theta.
\end{align*}
\end{assumption}

The assumption that $A_T(\theta)$ is known and bounded holds e.g., for quantum state preparation in \secref{eg::qsp}. The condition for $\mu$-PL condition needs to be validated case by case.
We acknowledge that the error bound below may not be tight for some parameters. In this work, we mainly focus on the scaling with respect to the number of local operator $K$, the time span $T$ and the number of parameters $M$.
A summary of main results could be found in Table~\ref{table::2}.

       \subsection{Gradient estimate with time discretization}
    
    Let us denote the time-discretised expression in Eq.~\eqref{eqn::discretise} as 
    \begin{align}
    \label{eqn::approxiamte_Im}
    \mathscr{I}_m :=  \sum_{j=1}^{\numtime} \sum_{k=1}^K \Delta s  \frac{\partial f_k(s_j, \theta)}{\partial \theta_m} \tr\big(\sigma_Y \otimes H_k \otimes \unit\ \eta(s_j, \theta)\big), \qquad s_j = (j-1/2) \Delta s.
    \end{align}
    For simplicity, we shall simply choose the midpoint rule with $\Delta s = \frac{T}{\numtime}$.
    Suppose we use $L$ samples to estimate the quantum observable $\tr\big(\sigma_Y \otimes H_k \otimes \unit\ \eta(s_j, \theta)\big)$ for each $k$ and $j$, and let its quantum measurement outcome be $X_{k,j}$, which are independent random variables. Let 
    \begin{align*}
    \mathscr{X}_m :=  \sum_{j=1}^{\numtime} \sum_{k=1}^K \Delta s \frac{\partial f_k(s_j, \theta)}{\partial \theta_m} X_{k,j},
    \end{align*}
    which is an unbiased estimator of $\mathscr{I}_m$. The quantity $\mathscr{I}_m$ characterized the approximation of gradient with some bias, and $\mathscr{X}_m$ is the stochastic realization of $\mathscr{I}_m$ due to random nature of quantum measurements. \\
    
    We need the following lemma about the moment generating function for quantum measurements:
    \begin{lemma}
    \label{lemma::Hoeffding}
    Under Assumption~\ref{assume::norm_Hk}, suppose that $X$ is a random variable for the averaged outcome of quantum measurement of $\tr\big(\sigma_Y \otimes H_k \otimes \unit \eta(s, \theta)\big)$ for some fixed $s$ and $k$, using in total $L$ samples, 
    its moment generating function is bounded by
    \begin{align}
    \label{eqn::Hoeffding}
    \mathbb{E}\big( e^{\lambda (X - \mathbb{E}(X))}\big) \le e^{\frac{\lambda^2}{2 L^2}}, \qquad \forall \lambda\in \Real.
    \end{align}
    \end{lemma}
    
    \begin{proof}
    Suppose $Y_1, Y_2, \cdots, Y_L$ are i.i.d. random samples for $X$, then due to the given assumptions, we know that $Y_i\in [-1,1]$. The estimator $X = \frac{1}{L}\sum_{i=1}^L Y_i$ is the average. By independence of $Y_i$, we have
    \begin{align*}
    \mathbb{E}\big( e^{\lambda (X - \mathbb{E}(X))}\big) = \prod_{i=1}^{L} \ee\big(e^{\frac{\lambda}{L}\big(Y_i - \mathbb{E}(Y_i)\big)}\big) \le \prod_{i=1}^L e^{\frac{\lambda^2}{2L^2}} =  e^{\frac{\lambda^2}{2 L}},
    \end{align*}
    where we used Hoeffding’s lemma to obtain the above inequality.
    \end{proof}
    
    \begin{lemma}
    \label{lemma::moment_bound}
    Under Assumptions~\ref{assume::norm_Hk}, let us fix $m$, and fix the choice of grid points $s_j$ where $\Delta s = \frac{T}{\numtime}$. suppose that $X_{k,j}$ is a random variable for measuring $\tr(\sigma_Y \otimes H_k \otimes \unit \eta(s_j, \theta)\big)$ with $L$ samples for each index pair $(k,j)$, then we have
    \begin{align*}
     \ee\bigg[e^{\lambda \abs{\sum_{j=1}^{\numtime}\sum_{k=1}^K \Delta s \frac{\partial f_k(s_j,\theta)}{\partial \theta_m} (X_{k,j} - \ee(X_{k,j}))} }\bigg]  \le 2 \exp\big(\frac{\lambda^2 T^2 \alpha^2 K}{2 \numtime L}\big).
    \end{align*}
    \end{lemma}
    
    \begin{proof}
    This conclusion can be obtained by the last lemma:
    \begin{align*}
     &\ \ee\bigg[e^{\lambda \abs{\sum_{j=1}^{\numtime}\sum_{k=1}^K \Delta s \frac{\partial f_k(s_j,\theta)}{\partial \theta_m} (X_{k,j} - \ee(X_{k,j}))} }\bigg]  \\
     \le &\  \ee\bigg[e^{\lambda \sum_{j=1}^{\numtime}\sum_{k=1}^K \Delta s \frac{\partial f_k(s_j,\theta)}{\partial \theta_m} (X_{k,j} - \ee(X_{k,j}))} \bigg] +  \ee\bigg[e^{-\lambda \sum_{j=1}^{\numtime}\sum_{k=1}^K \Delta s \frac{\partial f_k(s_j,\theta)}{\partial \theta_m} (X_{k,j} - \ee(X_{k,j}))} \bigg] \\
     \myle{\eqref{eqn::Hoeffding}} &\  \prod_{j=1}^{\numtime} \prod_{k=1}^K e^{\lambda \Delta s \frac{\partial f_k(s_j,\theta)}{\partial \theta_m} (X_{k,j} - \ee(X_{k,j}))} + \prod_{j=1}^{\numtime} \prod_{k=1}^K e^{-\lambda \Delta s \frac{\partial f_k(s_j,\theta)}{\partial \theta_m} (X_{k,j} - \ee(X_{k,j}))}  \\
     \le &\ 2 \prod_{j=1}^{\numtime} \prod_{k=1}^K \exp\big(\frac{(\lambda \Delta s \frac{\partial f_k(s_j,\theta)}{\partial \theta_m})^2}{2L}\big) \\
     \le&\ 2 \prod_{j=1}^{\numtime} \prod_{k=1}^K \exp\big(\frac{(\lambda \Delta s \alpha)^2}{2L}\big) = 2 \exp\big(\frac{\lambda^2 T^2 \alpha^2 K}{2 \numtime L}\big) .
    \end{align*}
   We used $e^{\lambda \abs{x}} \le e^{\lambda x}+e^{-\lambda x}$ for the first inequality.
    \end{proof}

    \subsubsection{$L^\infty$ error estimate}
    
    In what follows, we will estimate quantum measurements required to ensure the closeness of gradient approximation in $L_\infty$ norm:
\begin{align}
\label{eqn::prob_error}
\mathbb{P}\big(\norm{\frac{1}{2 A_T(\theta)} \nabla \loss_\theta - \mathscr{X}}_{\infty} \ge \delta \big) \le \varepsilon,
\end{align}
where $\delta$ and $\varepsilon$ are some prescribed small parameters. Though this is not needed below for estimating quantum resources required in machine learning training, such a probabilistic type estimate is common in e.g., developing quantum algorithms for gradient estimate, and hence we provide a rigorous estimate for comparison purposes. Then we have the following conclusion:
    \begin{proposition}
    \label{prop::midpoint_inf}
    Under Assumption~\ref{assume::norm_Hk_2},
    \begin{itemize}
    \item[(i)] Suppose that we choose midpoint scheme with uniform grid points for Eq.~\eqref{eqn::discretise}. We only need to choose 
    \begin{align*}
    \numtime \ge  \order{(\alpha T K)^{3/2} \delta^{-1/2}},
    \end{align*}
    then one could ensure that $\abs{\frac{1}{2 A_T(\theta)} \frac{\partial \mathcal{L}_\theta}{\partial \theta_m} - \mathscr{I}_m}\le \delta/2$ for any $1\le m\le M$. 
    
    \item[(ii)] To ensure that
    \begin{align*}
    \mathbb{P}\Big(\norm{\mathscr{X} - \frac{1}{2 A_T(\theta)} \nabla_\theta \mathcal{L}_\theta}_{\infty} \ge \delta\Big) \le  \varepsilon,
    \end{align*}
    it is sufficient to choose the total number of quantum measurements as
    \begin{align*}
     \max\Big\{ \order{(\frac{\alpha T K}{\delta})^2 \log\big(\frac{M}{\varepsilon})}, \order{(\alpha T)^{3/2} K^{5/2} \delta^{-1/2}}\Big\}.
    \end{align*}
    \end{itemize}
    \end{proposition}
    
    \begin{proof}
    
    \noindent \textit{Proof of part (i):} 
    \begin{align*}
    \frac{1}{2 A_T(\theta)} \frac{\partial \mathcal{L}_{\theta}}{\partial \theta_m} =  \int_0^T  \tr\big(\obs_m(s, \theta) \eta(s,\theta)\big)\ ds = \sum_{k=1}^{K} \int_{0}^{T} \frac{\partial f_k(s,\theta)}{\partial \theta_m} \tr\big(\sigma_Y \otimes H_k \otimes \unit\ \eta(s, \theta)\big) ds.
    \end{align*}
    To ensure that the error from time-discretization is bounded by $\delta/2$, it is sufficient to ensure that the midpoint scheme for each term $k$ has error
    \begin{align*}
    \abs{\int_{0}^{T}  \frac{\partial f_k(s,\theta)}{\partial \theta_m} \tr\big(\sigma_Y \otimes H_k \otimes \unit \eta(s, \theta)\big)\ ds - \sum_{j=1}^{\numtime} \Delta s  \frac{\partial f_k(s_j, \theta)}{\partial \theta_m} \tr\big(\sigma_Y \otimes H_k \otimes \unit\ \eta(s_j, \theta)\big)}\le \frac{\delta}{2K}.
    \end{align*}
    We can directly verify that for $h(s, \theta) =  \frac{\partial f_k(s,\theta)}{\partial \theta_m} \tr\big(\sigma_Y \otimes H_k \otimes \unit \eta(s, \theta)\big)$, its second-order derivative with respect to time is bounded by
    \begin{align*}
    \abs{\frac{\partial^2 h(s,\theta)}{\partial s^2}} \le &\ \abs{\frac{\partial^3 f_k(s,\theta)}{\partial \theta_m\partial s^2} \tr\big(\sigma_Y \otimes H_k \otimes \unit\ \eta(s, \theta)\big)}  \\
    & + 2\abs{\frac{\partial^2 f_k(s,\theta)}{\partial \theta_m\partial s} \tr\big(\sigma_Y \otimes H_k \otimes \unit \partial_s \eta(s, \theta)\big)} \\
    &+ \abs{\frac{\partial f_k(s,\theta)}{\partial \theta_m} \tr\big(\sigma_Y \otimes H_k \otimes \unit\  \partial_{s}^2 \eta(s, \theta)\big)} \\
    \le &\ \alpha + 2\alpha \abs{\tr\big(\sigma_Y \otimes H_k \otimes \unit \partial_s \eta(s, \theta)\big)} + \alpha \abs{\tr\big(\sigma_Y \otimes H_k \otimes \unit \partial_{s}^2 \eta(s, \theta)\big)} \\
    \le&\ \alpha + 2 \alpha (4 \alpha K) + \alpha \big(16 \alpha^2 K^2 + 4 \alpha K\big) \\
    \le&\ \alpha (1 + 6 \alpha K)^2.
    \end{align*}
   To get the second last line, we need to first use the estimate that
   \begin{align*}
     \partial_s \mathcal{V}_{\theta}(s) \myeq{\eqref{eq:psisstate2}} & \controlswap (-i) \big(\unit \otimes H_\theta(s) \otimes \unit + \unit \otimes \unit \otimes H_\theta(s) \big)
    \big(\unit \otimes U_{\theta}(T, s) \otimes U_{\theta}(0, s)\big)\\
         \partial_{ss} \mathcal{V}_{\theta}(s) \myeq{\eqref{eq:psisstate2}} & \controlswap (-i) \big(\unit \otimes \partial_s H_\theta(s) \otimes \unit + \unit \otimes \unit \otimes \partial_s H_\theta(s) \big)
    \big(\unit \otimes U_{\theta}(T, s) \otimes U_{\theta}(0, s)\big) \\
    &\qquad - \controlswap \big(\unit \otimes \partial_s H_\theta(s) \otimes \unit + \unit \otimes \unit \otimes \partial_s H_\theta(s) \big)^2 
    \big(\unit \otimes U_{\theta}(T, s) \otimes U_{\theta}(0, s)\big),
    \end{align*}
    so that 
    \begin{align*}
    \norm{\partial_s \mathcal{V}_{\theta}(s)}_{\infty} \le &\ \norm{\unit \otimes H_\theta(s) \otimes \unit + \unit \otimes \unit \otimes H_\theta(s)}_{\infty} \le 2 \alpha K,
    \end{align*}
    and similarly 
    \begin{align*}
    \norm{\partial_{ss} \mathcal{V}_{\theta}(s)}_{\infty} \le &\  2 \alpha K + (2\alpha K)^2.
    \end{align*}
    Hence,
    \begin{align*}
\abs{\tr\big(\sigma_Y \otimes H_k \otimes \unit \partial_s \eta(s, \theta)\big)} = &\  \abs{\tr\big((\sigma_Y \otimes H_k \otimes \unit) (\partial_s \mathcal{V}_\theta(s) \eta(0,\theta) \mathcal{V}_\theta(s)^\dagger + \mathcal{V}_\theta(s) \eta(0,\theta) \partial_s \mathcal{V}_\theta(s)^\dagger)\big)}\\
\le &\ \norm{\mathcal{V}_\theta(s)^\dagger (\sigma_Y \otimes H_k \otimes \unit) \partial_s \mathcal{V}_\theta(s)}_{\infty} + \norm{\partial_s \mathcal{V}_\theta(s)^\dagger (\sigma_Y \otimes H_k \otimes \unit) \mathcal{V}_\theta(s)}_{\infty} \\
\le &\ 4 \alpha K
   \end{align*}
   and similarly,
   \begin{align*}
   \abs{\tr\big(\sigma_Y \otimes H_k \otimes \unit \partial_{ss} \eta(s, \theta)\big)} \le 16 \alpha^2 K^2 + 4 \alpha K
   \end{align*}
    
    Therefore, it is easy to show that for the midpoint scheme with step size $\Delta s$, the time-discretization error is bounded by 
    \begin{align}
    \label{eqn::error_midpoint}
    \frac{1}{24} \Delta s^2 T \alpha (1 + 6 \alpha K)^2,
    \end{align}
    which is required to be smaller than $\frac{\delta}{2K}$. Then it is easy to show that
    \begin{align*}
    \numtime \ge \sqrt{\frac{T^3 \alpha (1 + 6 \alpha K)^2 K}{12\delta}} = \order{(\alpha T K)^{3/2} \delta^{-1/2}}.
    \end{align*}
    
    \noindent \textit{Proof of part (ii):}
    We assume that the conclusion in part (i) holds as a starting point.
    By triangle inequality, we obtain
    \begin{align*}
    & \mathbb{P}\Big(\text{max}_{m=1}^{M} \abs{\mathscr{X}_m - \frac{1}{2 A_T(\theta)} \frac{\partial \mathcal{L}_\theta}{\partial \theta_m}} \ge \delta\Big) \\
    \le&\ \mathbb{P}\Big(\text{max}_{m=1}^{M} \abs{\mathscr{X}_m - \mathscr{I}_m} \ge \delta/2\Big)\\
    \le&\ \sum_{m=1}^{M} \mathbb{P}\Big(\abs{\mathscr{X}_m - \mathscr{I}_m} \ge \delta/2\Big)\\
    =& \sum_{m=1}^M \mathbb{P}\Big(\abs{\sum_{j=1}^{\numtime}\sum_{k=1}^K \Delta s \frac{\partial f_k(s_j,\theta)}{\partial \theta_m} (X_{k,j} - \ee(X_{k,j}))} \ge \delta/2\Big)\\
    \le &\ \sum_{m=1}^M e^{-\lambda \delta/2} \ee\bigg[e^{\lambda \abs{\sum_{j=1}^{\numtime}\sum_{k=1}^K \Delta s \frac{\partial f_k(s_j,\theta)}{\partial \theta_m} (X_{k,j} - \ee(X_{k,j}))} }\bigg] \qquad \text{ (Markov inequality for any $\lambda > 0$)}\\
    \le &\ 2 M\ e^{-\lambda \delta/2} \exp\big(\frac{\lambda^2 T^2 \alpha^2 K}{2 \numtime L}\big).
    \end{align*}
    By optimizing the upper bound over all $\lambda$, we have
    \begin{align*}
     \mathbb{P}\Big(\text{max}_{m=1}^{M} \abs{\mathscr{X}_m - \frac{1}{2 A_T(\theta)} \frac{\partial \mathcal{L}_\theta}{\partial \theta_m}} \ge \delta\Big) \le 2 M\ \exp\big(-\frac{\delta^2}{16 A}\big), \qquad A = \frac{T^2 \alpha^2 K}{2 \numtime L}.
     \end{align*}
     Since we want the probability to be less than $\varepsilon$, we require
     \begin{align*}
     \numtime L K \ge 8 T^2 \alpha^2 K^2 \delta^{-2} \log\big(\frac{2 M}{\varepsilon}) = \order{ (\alpha T K \delta^{-1})^2 \log\big(\frac{M}{\varepsilon})},
     \end{align*}
     where $N_t L K$ is the number of measurements required to estimate to $\partial \mathcal{L}_{\theta}/\partial \theta_m$. $L$ is the number of copies of the state required to compute the expectation value of $X_{k, j}$, \textit{for each $(k,j)$ pair}, so that the gradient of the loss function can be estimated to high precision with high probability, and there are $N_t K$ such pairs. \\
     
     We could simply pick $L = 1$, which is a special strategy. By part (i), we require $
     \numtime L K \ge \order{(\alpha T)^{3/2} K^{5/2} \delta^{-1/2}}.
$
     By combing the above results, we know that one at most needs
     \begin{align*}
     \max\Big\{ \order{(\frac{\alpha T K}{\delta})^2 \log\big(\frac{M}{\varepsilon})}, \order{(\alpha T)^{3/2} K^{5/2} \delta^{-1/2}}\Big\}.
     \end{align*}
    \end{proof}

\subsubsection{Bias estimate}

In below, we will need the following estimate:
\begin{lemma}
\label{lem::bias}
    Under Assumption~\ref{assume::norm_Hk_2}, and suppose we choose the midpoint scheme, one has
\begin{align*}
\ee \big[\frac{1}{2 A_T(\theta)}\nabla_\theta \loss_\theta \cdot (\frac{1}{2 A_T(\theta)} \nabla_\theta \loss_\theta - \mathscr{X})\big] \le \frac{1}{A_T(\theta)} \order{\frac{M K^4 T^4 \alpha^4}{\numtime^2}}.
\end{align*}
\end{lemma}

\begin{proof}
This is the error coming from numerical discretization only, 
\begin{align*}
&\ \ee \big[\frac{1}{2 A_T(\theta)} \nabla_\theta \loss_\theta \cdot (\frac{1}{2 A_T(\theta)} \nabla_\theta \loss_\theta - \mathscr{X})\big] = \frac{1}{2 A_T(\theta)} \nabla_\theta \loss_\theta \cdot \big(\frac{1}{2 A_T(\theta)} \nabla_\theta \loss_\theta - \ee\big[\mathscr{X}\big]\big) \\
\le &\  \norm{\frac{1}{2 A_T(\theta)} \nabla_\theta \loss_\theta}_{1} \times \norm{\frac{1}{2 A_T(\theta)} \nabla_\theta \loss_\theta - \ee\big[\mathscr{X}\big]}_{\infty} \\
\le &\ \frac{1}{2 A_T(\theta)}  M (K \alpha T) \times \frac{K \alpha (1+ 6 \alpha K)^2 T^3}{24 \numtime^2} = \frac{1}{A_T(\theta)} \order{\frac{M K^4 T^4 \alpha^4}{\numtime^2}}.
\end{align*}
The inequality mainly comes from the error of midpoint scheme; see e.g., Eq.~\eqref{eqn::error_midpoint}.
\end{proof}

\subsubsection{$L^2$ error estimate}

Next, we consider the $L^2$ error estimate
\begin{proposition}
\label{prop::midpoint_L2}
The error is bounded by
\begin{align*}
\ee\ \norm{\frac{1}{2 A_T(\theta)} \nabla_\theta \loss_\theta -  \mathscr{X}}_2^2 \lesssim \frac{M K T^2 \alpha^2}{\numtime L} + M (\frac{(K T\alpha)^6}{\numtime^4}).
\end{align*}
and hence to ensure $\delta$ in $L_2$ error, one generally requires at most
$$
\max\{\order{M K^2 T^2 \alpha^2 \delta^{-2}}, \order{M^{1/4} K^{5/2} T^{3/2} \alpha^{3/2} \delta^{-1/2}}\}
$$
quantum measurements.
\end{proposition}

\begin{proof}
\begin{align*}
\ee\ \norm{\frac{1}{2 A_T(\theta)} \nabla_\theta \loss_\theta -  \mathscr{X}}_2^2 \le 2 \ee\ \norm{\frac{1}{2 A_T(\theta)} \nabla_\theta \loss_\theta -  \ee \mathscr{X}}_2^2 + 2 \ee\ \norm{\ee[\mathscr{X}] -  \mathscr{X}}_2^2.
\end{align*}
The first term comes from numerical discretization and is thus 
\begin{align*}
\mathcal{O}\bigg({2 M \Big(K \frac{1}{24} \frac{T^2}{\numtime^2} T \alpha \alpha^2 K^2\Big)^2}\bigg) = \order{M (\frac{(K T\alpha)^6}{\numtime^4})}.
\end{align*}
The second term comes from sampling and is
\begin{align*}
 \lesssim&\ \sum_{m=1}^M \ee \Big[ \sum_{j=1}^{\numtime} \sum_{k=1}^K \Delta s  \frac{\partial f_k(s_j, \theta)}{\partial \theta_m} (X_{k,j} - \ee[X_{k,j}]) \big]^2  \\
=&\ \sum_{m=1}^M \ee \Big[ \sum_{j=1}^{\numtime} \sum_{k=1}^K \Delta s^2   (\frac{\partial f_k(s_j, \theta)}{\partial \theta_m})^2 \ee\big[(X_{k,j} - \ee[X_{k,j}])^2\big] \\
\lesssim &\ M \numtime K (\frac{T}{\numtime})^2 \alpha^2 \frac{1}{L} = \frac{M K T^2 \alpha^2}{\numtime L}.
\end{align*}
To ensure that the error is bounded by $\delta$, and minimizing the total sample cost $\numtime L K$, one obtains the above estimate. 

\end{proof}

\subsection{Optimization efficiency for time discretization case}

\begin{proposition}
\label{prop::ml::midpoint}
Under Assumptions~\ref{assume::norm_Hk_2}, \ref{assume::PL}, and \ref{assume::AT}, if we use constant learning rate, it is enough to use the following amount of samples to ensure Eq.~\eqref{eqn::sgd_error}
\begin{align}
\label{eqn::ml_cost_midpoint}
\widetilde{\mathcal{O}}\bigg(\max\bigg\{\frac{\Lipschitzmax K\mathsf{A}^{1/2} M^{1/2} (K T \alpha)^2}{\mu^{3/2} \Delta^{1/2}},  \frac{\Lipschitzmax \mathsf{A}^2 M (K T \alpha)^2}{\mu^2 \Delta} \bigg\}\bigg).
\end{align}
\end{proposition}

\begin{proof}
By Assumption~\ref{assume::norm_Hk_2}, it is easy to know that the exact value of the gradient $\frac{1}{2 A_T(\theta)} \frac{\partial \mathcal{L}_\theta}{\partial \theta_m}$ are uniformly bounded by $\alpha T K$, as well as any such approximations arising from quantum measurements, for each fixed index $m$. Let us denote the approximated gradient as
\begin{align*}
g := 2 A_T(\theta^{(k)}) \mathscr{X}.
\end{align*}

By the $\Lipschitzmax$-Lipschitz condition, 
\begin{align*}
&\ \ee \big[\mathcal{L}_{\theta^{(k+1)}}\big] \\
\le &\ \ee\big[\mathcal{L}_{\theta^{(k)}} - h \nabla \mathcal{L}_{\theta^{(k)}} \cdot g + \frac{\Lipschitzmax h^2}{2} \norm{g}_2^2\big] \\
=&\ \ee\big[\mathcal{L}_{\theta^{(k)}} - h \norm{\nabla \mathcal{L}_{\theta^{(k)}} }^2_2 - h \nabla \mathcal{L}_{\theta^{(k)}} \cdot (g - \nabla \mathcal{L}_{\theta^{(k)}}) + \frac{\Lipschitzmax h^2}{2} \norm{g}_2^2\big] \\
\le&\ \ee\big[\mathcal{L}_{\theta^{(k)}} - h \norm{\nabla \mathcal{L}_{\theta^{(k)}} }^2_2 - h \nabla \mathcal{L}_{\theta^{(k)}} \cdot (g - \nabla \mathcal{L}_{\theta^{(k)}}) + \Lipschitzmax h^2 (\norm{g - \nabla \loss_{\theta^{(k)}}}_2^2 + \norm{\nabla \loss_{\theta^{(k)}}}_2^2)  \big]\\
=&\ \ee\big[\mathcal{L}_{\theta^{(k)}}\big] - h (1 - \Lipschitzmax h) \ee \big[\norm{\nabla \mathcal{L}_{\theta^{(k)}} }^2_2\big] - h \ee\big[\nabla \mathcal{L}_{\theta^{(k)}} \cdot (g - \nabla \mathcal{L}_{\theta^{(k)}}) \big] + \Lipschitzmax h^2 \ee\big[\norm{g - \nabla \loss_{\theta^{(k)}}}_2^2\big].
\end{align*}
By choosing $\Lipschitzmax h\le \frac{1}{2}$, and by Assumption~\ref{assume::PL}, one has 
\begin{align*}
\ee \big[\mathcal{L}_{\theta^{(k+1)}} - \loss_{\min} \big] \le (1 - h \mu) &\ee \big[\mathcal{L}_{\theta^{(k)}} - \loss_{\min} \big] + \mathscr{E}_K
\end{align*}
where by Lemma~\ref{lem::bias} and \propref{prop::midpoint_L2},
\begin{align*}
& \mathscr{E}_K\lesssim  h \mathsf{A} \frac{M K^4 T^4 \alpha^4}{\numtime^2} + \Lipschitzmax h^2 \mathsf{A}^2 \big(\frac{M K T^2 \alpha^2}{\numtime L} + M (\frac{(K T\alpha)^6}{\numtime^4})\big).
\end{align*}
It is straightforward to show that
\begin{align*}
\ee \big[\mathcal{L}_{\theta^{(\Niter)}} - \loss_{\min} \big] \lesssim (1 - h \mu)^{\Niter} \ee \big[\mathcal{L}_{\theta^{(0)}} - \loss_{\min} \big] +  \mathsf{A} \frac{M K^4 T^4 \alpha^4}{\numtime^2\mu } +  \frac{\Lipschitzmax \mathsf{A}^2 h}{\mu}  \big(\frac{M K T^2 \alpha^2}{\numtime L}\big) +  \frac{\Lipschitzmax \mathsf{A}^2 h}{\mu} \big( M (\frac{(K T\alpha)^6}{\numtime^4})\big).
\end{align*}

To ensure that the error is bounded by $\Delta$, one requires
\begin{align*}
\Niter = \ordertilde{\frac{1}{h \mu}},
\end{align*}
and also 
\begin{align*}
\mathsf{A} \frac{M K^4 T^4 \alpha^4}{\numtime^2\mu } \lesssim \Delta \qquad 
\frac{\Lipschitzmax \mathsf{A}^2 h}{\mu}  \big(\frac{M K T^2 \alpha^2}{\numtime L}\big) \lesssim \Delta \qquad 
\frac{\Lipschitzmax \mathsf{A}^2 h}{\mu} \big( M (\frac{(K T\alpha)^6}{\numtime^4})\big) \lesssim \Delta.
\end{align*}
Then we require 
\begin{align}
\label{eqn::midpoint_sampletime}
\numtime \gtrsim \mathsf{A}^{1/2} \frac{M^{1/2} K^2 T^2 \alpha^2}{\mu^{1/2} \Delta^{1/2}},
\end{align}
and 
\begin{align*}
\frac{1}{h \mu} \gtrsim \mathcal{O}\bigg(\max\bigg\{\frac{\Lipschitzmax}{\mu},  \frac{\Lipschitzmax \mathsf{A}^2}{\mu^2 \Delta} \big(\frac{M K T^2 \alpha^2}{\numtime L}\big), \frac{\Lipschitzmax \mathsf{A}^2 M}{\mu^2 \Delta} \frac{(K T\alpha)^6}{\numtime^4}\bigg\}\bigg).
\end{align*}
In general, we may simply pick $L = 1$, so that the total number of quantum measurements is
\begin{align*}
\frac{1}{h \mu} \numtime K L \gtrsim &\ \mathcal{O}\bigg(\max\bigg\{\frac{\Lipschitzmax}{\mu} \numtime K,  \frac{\Lipschitzmax \mathsf{A}^2 M K^2 T^2 \alpha^2}{\mu^2 \Delta}, \frac{\Lipschitzmax \mathsf{A}^2 M K}{\mu^2 \Delta} \frac{(K T\alpha)^6}{\numtime^3}\bigg\}\bigg).
\end{align*}
By plugging the estimate of $\numtime$, the last term is
\begin{align*}
\frac{\Lipschitzmax \mathsf{A}^2 M K}{\mu^2 \Delta} \frac{(K T\alpha)^6}{\numtime^3} \lesssim \frac{\Lipschitzmax \mathsf{A}^2 M K}{\mu^2 \Delta} \frac{\mu^{3/2}\Delta^{3/2}}{\mathsf{A}^{3/2} M^{3/2}} = \frac{\Lipschitzmax \mathsf{A}^{1/2} K \Delta^{1/2}}{\mu^{1/2} M^{1/2}}.
\end{align*}
We will generally consider the region $\Delta \ll 1$, $M \gg 1$, so that this term is negligible with respect to the first term, so that the total quantum samples required typically behaves like \eqref{eqn::ml_cost_midpoint}.
\end{proof}

\subsection{Gradient estimate with sampling in time}

Besides, we can also estimate the gradient without time-discretization, but instead uses the Monte Carlo sampling in time:
\begin{align}
\label{eqn::mc_time}
\begin{aligned}
\frac{1}{2 A_T(\theta)} \frac{\partial \mathcal{L}_{\theta}}{\partial \theta_m} = &\ \int_{0}^{T} \tr\Big(\big(\sigma_Y \otimes \sum_{k=1}^K \frac{\partial f_k(t, \theta)}{\partial \theta_m} H_k \otimes \unit\big)\  \eta(s, \theta)\Big)\ ds \\
\approx &\ \frac{T}{\numtime} \sum_{j=1}^{\numtime} \sum_{k=1}^K \frac{\partial f_k(t_j, \theta)}{\partial \theta_m} \tr\Big(\big(\sigma_Y \otimes  H_k \otimes \unit\big)\  \eta(t_j, \theta)\Big)\ \\
\approx&\ \frac{T}{\numtime} \sum_{j=1}^{\numtime} \sum_{k=1}^K \frac{\partial f_k(t_j, \theta)}{\partial \theta_m} X_{k,j} \\
=:&\ \mathscr{X}_m,
\end{aligned}
\end{align}
where we first randomly generate time $t_j$ uniformly from $[0,T]$, and then for each index pair $(j, k)$, use $L$ samples to estimate $\tr\Big(\big(\sigma_Y \otimes  H_k \otimes \unit\big)\  \eta(t_j, \theta)\Big)$ whose random variable is denoted as $X_{k,j}$. Of course, it is possible to adopt importance sampling, but for the time being, let us consider the most straightforward way to sample the time. Unlike the above time-discretized case, this estimator is theoretically always unbiased. We will similarly denote the gradient approximation as $\mathscr{X}_m$, and the number of samples in time as $\numtime$ for simplicity of notations and the meaning of such gradient estimate (either via time-discretization or sampling in time) could be inferred from the context.

\subsubsection{$L^\infty$ error estimate}

For direct Monte Carlo scheme in time, we can similarly estimate the $L_\infty$ error in the probabilistic setting.

\begin{proposition}
\label{prop::mc_inf}
To ensure \eqref{eqn::prob_error} using Monte Carlo scheme \eqref{eqn::mc_time}, namely, to ensure that
\begin{align*}
\mathbb{P}\big(\norm{\frac{1}{2 A_T(\theta)} \nabla \loss_\theta - \mathscr{X}}_{\infty} \ge \delta \big) \le \varepsilon,
\end{align*}
it is sufficient to choose $\ordertilde{\alpha^2 T^2 K^3 \delta^{-2}}$ quantum measurements.
\end{proposition}

\begin{proof}
Since we used Monte Carlo sampling, we easily have $\frac{1}{2 A_T(\theta)} \nabla \loss_\theta = \ee\big[\mathscr{X}\big]$
\begin{align*}
&\mathbb{P}\big(\norm{\frac{1}{2 A_T(\theta)} \nabla \loss_\theta - \mathscr{X}}_{\infty} \ge \delta \big) \\
\le & \sum_{m=1}^M \mathbb{P}\big(\abs{\mathscr{X}_m  - \ee\big(\mathscr{X}_m\big)} \ge \delta \big) \\
\le & \sum_{m=1}^M e^{-\lambda \delta} \mathbb{E}\Big[e^{\lambda \abs{\mathscr{X}_m  - \ee\big(\mathscr{X}_m\big)}}\Big] \qquad \text{ (by Markov inequality)}\\
\le & \sum_{m=1}^M e^{-\lambda \delta} \mathbb{E}\Big[e^{\lambda \big({\mathscr{X}_m  - \ee\big(\mathscr{X}_m\big)}\big) }\Big] + e^{-\lambda \delta} \mathbb{E}\Big[e^{(-\lambda) \big({\mathscr{X}_m  - \ee\big(\mathscr{X}_m\big)}\big) }\Big].
\end{align*}

Next, we shall focus on one term with fixed $m$,
\begin{align*}
 & \mathbb{E}\Big[e^{\lambda \big({\mathscr{X}_m  - \ee\big(\mathscr{X}_m\big)}\big) }\Big]  \\
 =& \ \mathbb{E}\Big[\exp\Big( \frac{T\lambda}{\numtime} \sum_{j=1}^{\numtime} \sum_{k=1}^K \frac{\partial f_k(t_j, \theta)}{\partial \theta_m} X_{k,j}  - \ee\big(\frac{\partial f_k(t_j, \theta)}{\partial \theta_m} X_{k,j}\big)\Big)\Big]\\
 =&\ \prod_{j=1}^{\numtime}  \mathbb{E}\Big[\exp\Big( \frac{T\lambda}{\numtime}  \sum_{k=1}^K \big(\frac{\partial f_k(t_j, \theta)}{\partial \theta_m} X_{k,j}  - \ee\big(\frac{\partial f_k(t_j, \theta)}{\partial \theta_m} X_{k,j}\big)\big) \Big)\Big].
\end{align*}
The above holds due to the independence of sampling for time $t_j$. The above terms $\frac{\partial f_k(t_j, \theta)}{\partial \theta_m} X_{k,j}$ are generally dependent for pair $(j,k)$ due to the appearance of the same $t_j$ in different $k$. However, conditioned on $t_j$, the above terms becomes independent.
\begin{align*}
& \mathbb{E}\Big[e^{\lambda \big({\mathscr{X}_m  - \ee\big(\mathscr{X}_m\big)}\big) }\Big] \\
=&\ \prod_{j=1}^{\numtime}  \mathbb{E}\bigg[\mathbb{E}\Big[\exp\Big( \frac{T\lambda}{\numtime}  \sum_{k=1}^K \big(\frac{\partial f_k(t_j, \theta)}{\partial \theta_m} X_{k,j}  - \ee\big(\frac{\partial f_k(t_j, \theta)}{\partial \theta_m} X_{k,j}\big)\big) \Big) \big\rvert t_j \Big]\bigg] \\
=&\ \prod_{j=1}^{\numtime}  \mathbb{E}\left[\mathbb{E}\left[
\begin{aligned}
& e^{\frac{T\lambda}{\numtime}  \sum_{k=1}^K \big(\frac{\partial f_k(t_j, \theta)}{\partial \theta_m} X_{k,j}  - \ee\big(\frac{\partial f_k(t_j, \theta)}{\partial \theta_m} X_{k,j}\rvert t_j \big)} \\
& \hspace{2em} \cdot e^{\frac{T\lambda}{\numtime} \sum_{k=1}^K \ee\big(\frac{\partial f_k(t_j, \theta)}{\partial \theta_m} X_{k,j}\rvert t_j\big) - \ee\big(\frac{\partial f_k(t_j, \theta)}{\partial \theta_m} X_{k,j}\big)}
\end{aligned}
 \bigg\rvert t_j \right]\right]\\
 =&\ \prod_{j=1}^{\numtime}  \mathbb{E}\left[e^{\frac{T\lambda}{\numtime} \sum_{k=1}^K \ee\big(\frac{\partial f_k(t_j, \theta)}{\partial \theta_m} X_{k,j}\rvert t_j\big) - \ee\big(\frac{\partial f_k(t_j, \theta)}{\partial \theta_m} X_{k,j}\big)}
\mathbb{E}\left[
e^{\frac{T\lambda}{\numtime}  \sum_{k=1}^K \big(\frac{\partial f_k(t_j, \theta)}{\partial \theta_m} X_{k,j}  - \ee\big(\frac{\partial f_k(t_j, \theta)}{\partial \theta_m} X_{k,j}\rvert t_j \big)}
 \bigg\rvert t_j \right]\right].
\end{align*}
Conditioned on $t_j$, if we estimate the term inside 
\begin{align*}
&\ \mathbb{E}\Big[\exp\Big( \frac{T\lambda}{\numtime}  \sum_{k=1}^K \big(\frac{\partial f_k(t_j, \theta)}{\partial \theta_m} X_{k,j}  - \ee\big(\frac{\partial f_k(t_j, \theta)}{\partial \theta_m} X_{k,j}\rvert t_j \big)  \big) \Big) \big\rvert t_j \Big] \\
= &\ \prod_{k=1}^K \mathbb{E}\Big[\exp\Big( \frac{T\lambda}{\numtime} \big(\frac{\partial f_k(t_j, \theta)}{\partial \theta_m} X_{k,j}  - \ee\big(\frac{\partial f_k(t_j, \theta)}{\partial \theta_m} X_{k,j}\rvert t_j \big)  \big) \Big) \big\rvert t_j \Big] \\
= &\ \prod_{k=1}^K \prod_{\ell=1}^L \mathbb{E}\Big[\exp\Big( \frac{T\lambda}{\numtime L } \big(\frac{\partial f_k(t_j, \theta)}{\partial \theta_m} Y_{k,j}^{(\ell)}  - \ee\big(\frac{\partial f_k(t_j, \theta)}{\partial \theta_m} Y_{k,j}^{(\ell)}\rvert t_j \big)  \big) \Big) \big\rvert t_j \Big] \\
=&\ \prod_{k=1}^K \prod_{\ell=1}^L \exp\big(\frac{T^2 \lambda^2\alpha^2}{2\numtime^2 L^2}\big) = \exp\big(\frac{T^2 \lambda^2 \alpha^2 K}{2 \numtime^2 L}\big).
\end{align*}
Since we use $L$ samples to approximate $X_{k,j} = \frac{1}{L} \sum_{\ell=1}^L Y_{k,j}^{(\ell)}$, and recall that by Assumption~\ref{assume::norm_Hk}, each $Y$ is supposed to approximate $\tr\big(\sigma_Y \otimes H_k \otimes \unit \eta(t_j, \theta)\big)$ which comes from $[-1,1]$, we have the above by applying Hoeffding’s lemma.

By plugging it back,
\begin{align*}
&\ \mathbb{E}\Big[e^{\lambda \big({\mathscr{X}_m  - \ee(\mathscr{X}_m)}\big) }\Big] \\
\le&\ \prod_{j=1}^{\numtime} \ee\bigg[\exp\big(\frac{T^2 \lambda^2 \alpha^2 K}{2 \numtime^2 L}\big) \exp\Big(\frac{T \lambda}{\numtime} \sum_{k=1}^{\numop} \ee\big(\frac{\partial f_k(t_j, \theta)}{\partial \theta_m} X_{k,j}\rvert t_j\big) - \ee\big(\frac{\partial f_k(t_j, \theta)}{\partial \theta_m} X_{k,j}\big)\Big) \bigg] \\
\le&\ \prod_{j=1}^{\numtime} \ee\bigg[\exp\big(\frac{T^2 \lambda^2 \alpha^2 K}{2 \numtime^2 L}\big) \exp\Big(\frac{2 T^2\lambda^2}{\numtime^2} (\alpha K)^2\Big) \bigg] \\
=&\ \exp\big(\frac{T^2 \lambda^2 \alpha^2 K}{2 \numtime L} + \frac{2 T^2\lambda^2 \alpha^2 K^2}{\numtime} \big),
\end{align*}
where to get the second inequality, we applied Hoeffding’s lemma again due to the uniform bound $\abs{\sum_{k=1}^K \ee\big(\frac{\partial f_k(t_j, \theta)}{\partial \theta_m} X_{k,j}\rvert t_j\big) - \ee\big(\frac{\partial f_k(t_j, \theta)}{\partial \theta_m} X_{k,j}\big)}\le 2 \alpha K$. \\

By combing previous results together,
\begin{align*}
&\mathbb{P}\big(\norm{\frac{1}{2 A_T(\theta)} \nabla \loss_\theta - \mathscr{X}}_{\infty} \ge \delta \big) \le 2 M e^{-\lambda \delta} \exp\big(\frac{T^2 \lambda^2 \alpha^2 K}{2 \numtime L} + \frac{2 T^2\lambda^2 \alpha^2 K^2}{\numtime} \big).
\end{align*}
Since $\lambda > 0$ is arbitrary, we can minimizing the upper bound over all $\lambda\ge 0$, and obtain
\begin{align*}
\mathbb{P}\big(\norm{\frac{1}{2 A_T(\theta)} \nabla \loss_\theta - \mathscr{X}}_{\infty} \ge \delta \big)  \le 2 M e^{-\frac{1}{4} \frac{\delta^2}{A}}, \qquad A = \frac{T^2 \alpha^2 K}{2 \numtime L} + \frac{2 T^2 \alpha^2 K^2}{\numtime}.
\end{align*}
To ensure that the error is bounded by $\varepsilon$, one needs
\begin{align*}
\frac{T^2 \alpha^2 K}{\numtime L} & \le \order{\frac{\delta^2}{\log(M/\varepsilon)}}\qquad 
\frac{T^2 \alpha^2 K^2}{\numtime} \le \order{\frac{\delta^2}{\log(M/\varepsilon)}}.
\end{align*}
This leads into 
\begin{align*}
\numtime \ge \order{T^2 \alpha^2 K^2 \delta^{-2} \log\big(M/\varepsilon)}, \qquad L \ge \order{{K}^{-1}}.
\end{align*}
Since the sample size $L \ge 1$, overall, we at most need sample size
$\numtime K L \ge \order{T^2 \alpha^2 K^3 \delta^{-2} \log\big(M/\varepsilon)}$.\end{proof}

\subsubsection{$L^2$ error estimate}

\begin{lemma}
Under Assumption~\ref{assume::norm_Hk}, the variance for the Monte Carlo estimator
\begin{align*}
\text{Var}(\mathscr{X}_m) \le \frac{\alpha^2 T^2}{\numtime} \big(\frac{K}{L} + K^2).
\end{align*}
\end{lemma}

\begin{proof}
Recall that for a fixed $m$, 
\begin{align*}
\mathscr{X}_m = \frac{T}{\numtime} \sum_{j=1}^{\numtime} \sum_{k=1}^K \frac{\partial f_k(t_j, \theta)}{\partial \theta_m} X_{k,j}, \qquad \text{ and } \qquad \ee \mathscr{X}_m = \frac{1}{2 A_T(\theta)} \frac{\partial \mathcal{L}_{\theta}}{\partial \theta_m}. 
\end{align*}
Then since terms associated with each $j$ are independent, we know that 
\begin{align*}
\text{Var}(\mathscr{X}_m) = \frac{T^2}{\numtime} \text{Var}\big(\sum_{k=1}^\numop \frac{\partial f_k(t,\theta)}{\partial \theta_m} X_{k, t}\big),
\end{align*}
where $t\sim \text{Uniform}(0,T)$ and $X_{k,t}$ is a random variable representing the quantum measurement outcomes of $\tr\Big(\big(\sigma_Y \otimes  H_k \otimes \unit\big)\  \eta(t, \theta)\Big)$ using independent $L$ samples, conditioned on the value of $t$. By the law of total variance,
\begin{align*}
\text{Var}(\mathscr{X}_m) =&\ \frac{T^2}{\numtime} \bigg(\ee\Big[ \text{Var}\big(\sum_{k=1}^\numop \frac{\partial f_k(t,\theta)}{\partial \theta_m} X_{k, t} \mid t \Big)\Big] + \text{Var}\Big(\ee\Big[ \sum_{k=1}^{\numop} \frac{\partial f_k(t,\theta)}{\partial \theta_m} X_{k, t} \mid t \Big]\Big) \bigg).
\end{align*}

For the first term, conditioned on $t$, $\frac{\partial f_k(t,\theta)}{\partial \theta_m} X_{k, t}$ are independent random variables, so that
\begin{align*}
\ee\Big[ \text{Var}\big(\sum_{k=1}^\numop \frac{\partial f_k(t,\theta)}{\partial \theta_m} X_{k, t} \mid t \Big)\Big] =&\ \ee\Big[ \sum_{k=1}^\numop (\frac{\partial f_k(t,\theta)}{\partial \theta_m})^2 \text{Var}\big(X_{k, t} \mid t \Big)\Big]
\le\  \ee\Big[ \sum_{k=1}^\numop \alpha^2 \frac{1}{L} \Big] = \frac{\alpha^2 K}{L}.
\end{align*}
To get the inequality in the last line, we used the fact that $X_{k,t}$ is a random variable representing the quantum measurement outcomes of $\tr\Big(\big(\sigma_Y \otimes  H_k \otimes \unit\big)\  \eta(t, \theta)\Big)$ using independent $L$ samples, and also the Assumption~\ref{assume::norm_Hk}. This quantity comes from the stochastic uncertainty of quantum measurement conditioned on the time.\\

As for the second term,
\begin{align*}
\text{Var}\Big(\ee\Big[ \sum_{k=1}^{\numop} \frac{\partial f_k(t,\theta)}{\partial \theta_m} X_{k, t} \mid t \Big]\Big) &= \text{Var}\Big( \sum_{k=1}^{\numop} \frac{\partial f_k(t,\theta)}{\partial \theta_m} \ee\Big[X_{k, t} \mid t \Big]\Big) \\
&= \text{Var}\Big( \sum_{k=1}^{\numop} \frac{\partial f_k(t,\theta)}{\partial \theta_m} \tr\big(\sigma_Y\otimes H_k\otimes \unit \eta(t, \theta)\big) \Big) \\
&\le \alpha^2 K^2.
\end{align*}
This term characterized the stochastic uncertainty from choosing $t$.\\

Therefore, overall, one has
\begin{align*}
\text{Var}(\mathscr{X}_m) \le \frac{T^2}{\numtime} \big(\frac{\alpha^2 K}{L} + \alpha^2 K^2 \big) = \frac{\alpha^2 T^2}{\numtime} \big(\frac{K}{L} + K^2).
\end{align*}

\end{proof}

\begin{proposition}
\label{prop::mc_L2}
Under the Assumption~\ref{assume::norm_Hk}, to ensure the $L_2$ error for gradient estimate using sampling in time to be bounded by $\delta$, it is enough to choose
\begin{align}
\label{eqn::mc_L2}
\order{\alpha^2 T^2 K^3 \delta^{-2} M}
\end{align}
amount of quantum measurements.
\end{proposition}

\begin{proof}
By the above lemma, one has
\begin{align}
\label{eqn::sum_var}
\ee \norm{\mathscr{X} - \frac{1}{2 A_T(\theta)} \nabla \mathcal{L}_{\theta}}_2^2 = \sum_{m=1}^M \text{Var}(\mathscr{X}_m) \le \frac{\alpha^2 T^2 M}{\numtime}\big(\frac{K}{L} + K^2\big).
\end{align}
Since this quantity needs to be bounded by $\delta^2$, one need
\begin{align*}
\numtime \ge \frac{\alpha^2 T^2 M K }{\delta^2} \big(\frac{1}{L} + K).
\end{align*}
The total number of samples one need is 
\begin{align*}
\numtime K L \ge \frac{\alpha^2 T^2 M K^2 }{\delta^2} \big(1 + KL). 
\end{align*}
Since $L\ge 1$ could be chosen freely, the optimal choice is to choose $L = 1$ and thus
the total number of samples required scales like Eq.~\eqref{eqn::mc_L2}.
\end{proof}

\subsection{Optimization efficiency for sampling in time}

\begin{proposition}
\label{prop::mc_train}
Under Assumptions~\ref{assume::norm_Hk}, \ref{assume::PL}, \ref{assume::AT}, we use batch-SGD with uniform time step. In order to ensure \eqref{eqn::sgd_error}, the total number of quantum measurements required is at most
\begin{align*}
\max\Big\{\ordertilde{\frac{\Lipschitzmax \mathsf{A}^2 \alpha^2 T^2 M K^3}{\mu^2 \Delta}}, \ordertilde{\frac{K}{\mu \min\{\mu, \Lipschitzmax\}}}\Big\}.\end{align*}
In the limit of small $\Delta$, the complexity is
\begin{align*}
\ordertilde{\Lipschitzmax \mathsf{A}^2 \alpha^2 \mu^{-2} T^2 K^3 M \Delta^{-1}}.
\end{align*}
\end{proposition}

\begin{proof}
Let us denote the approximated gradient as
$
g := 2 A_T(\theta^{(k)}) \mathscr{X}.
$
By the $L$-Lipschitz condition, 
\begin{align*}
\ee \big[\mathcal{L}_{\theta^{(k+1)}}\big] \le &\ \ee\big[\mathcal{L}_{\theta^{(k)}} - h \nabla \mathcal{L}_{\theta^{(k)}} \cdot g + \frac{\Lipschitzmax h^2}{2} \norm{g}_2^2\big] \\
=&\ \ee\big[\mathcal{L}_{\theta^{(k)}} - h \norm{\nabla \mathcal{L}_{\theta^{(k)}} }^2_2 - h \nabla \mathcal{L}_{\theta^{(k)}} \cdot (g - \nabla \mathcal{L}_{\theta^{(k)}}) + \frac{\Lipschitzmax h^2}{2} \norm{g}_2^2\big] \\
=&\ \ee\big[\mathcal{L}_{\theta^{(k)}}\big] - h \ee\big[\norm{\nabla \mathcal{L}_{\theta^{(k)}} }^2_2\big] + 0 + \frac{\Lipschitzmax h^2}{2} \ee\big[\sum_{m} \text{Var}(g_m\mid \theta^{(k)})\big] + \frac{\Lipschitzmax h^2}{2} \ee\big[\norm{\nabla \loss_{\theta^{(k)}}}^2\big] \\
=&\ \ee\big[\mathcal{L}_{\theta^{(k)}}\big] - h (1 - \Lipschitzmax h/2) \ee\big[\norm{\nabla \mathcal{L}_{\theta^{(k)}} }^2_2\big] + \frac{\Lipschitzmax h^2}{2} \sum_{m} \ee\big[\text{Var}(g_m \mid \theta^{(k)})\big] \\
\myle{\eqref{eqn::sum_var}}&\ \ee\big[\mathcal{L}_{\theta^{(k)}}\big] - h (1 - \Lipschitzmax h/2) \ee\big[\norm{\nabla \mathcal{L}_{\theta^{(k)}} }^2_2\big] + \frac{\Lipschitzmax h^2}{2} 4 \mathsf{A}^2 \frac{\alpha^2 T^2 M}{\numtime}\big(\frac{K}{L} + K^2).
\end{align*}
In the third line above, we used the fact that $g$ is an unbiased estimator of the gradient.
We assume that $\Lipschitzmax h\le 1$, and by $\mu$-PL condition we have
\begin{align*}
\ee \big[\mathcal{L}_{\theta^{(k+1)}} - \loss_{\min}\big] &\le \ee\big[\mathcal{L}_{\theta^{(k)}} - \loss_{\min}\big] - \frac{h}{2} \ee\big[\norm{\nabla \mathcal{L}_{\theta^{(k)}} }^2_2\big] +  \frac{2 \Lipschitzmax h^2 \mathsf{A}^2 \alpha^2 T^2 M}{\numtime}\big(\frac{K}{L} + K^2\big) \\
&\myle{\eqref{eqn::PL}} (1 - h \mu) \ee\big[\mathcal{L}_{\theta^{(k)}} - \loss_{\min}\big] +  
 \frac{2 \Lipschitzmax h^2 \mathsf{A}^2 \alpha^2 T^2 M}{\numtime}\big(\frac{K}{L} + K^2\big).
\end{align*}

For a general training iteration $\Niter$, by choosing $h \mu < 1$, 
\begin{align*}
\ee\big[\loss_{\theta^{(\Niter)}} - \loss_{\min} \big] \le&\  (1 - h \mu)^{\Niter} \ee\big[\mathcal{L}_{\theta^{(0)}} - \loss_{\min} \big] + \frac{1}{h \mu} \cdot  \frac{2 \Lipschitzmax h^2 \mathsf{A}^2 \alpha^2 T^2 M}{\numtime}\big(\frac{K}{L} + K^2\big) \\
=&\ (1 - h \mu)^{\Niter} \ee\big[\mathcal{L}_{\theta^{(0)}} - \loss_{\min} \big] +   \frac{2 \Lipschitzmax h \mathsf{A}^2 \alpha^2 T^2 M}{\mu \numtime}\big(\frac{K}{L} + K^2\big).
\end{align*}
In order for the error to be bounded by $\Delta$, one needs 
\begin{align*}
\frac{1}{h} =&\ \max\{\order{\frac{\Lipschitzmax \mathsf{A}^2 \alpha^2 T^2 M}{\mu \numtime \Delta}\big(\frac{K}{L} + K^2\big)}, \frac{1}{\Lipschitzmax}, \frac{1}{\mu} \}, \\
\Niter =&\ \order{\frac{1}{h\mu} \log(\frac{\ee\big[\mathcal{L}_{\theta^{(0)}} - \loss_{\min} \big]}{\Delta})} = \ordertilde{\frac{1}{h\mu}} = \max\Big\{\ordertilde{\frac{\Lipschitzmax \mathsf{A}^2 \alpha^2 T^2 M}{\mu^2 \Delta \numtime}\big(\frac{K}{L}+K^2\big)}, \ordertilde{\frac{1}{\mu^2}}, \ordertilde{\frac{1}{\mu \Lipschitzmax}}\Big\}.
\end{align*}
Hence, the total number of quantum samples required is at most
\begin{align*}
\Niter \times \big(\numtime K L\big) =  \max\Big\{\ordertilde{\frac{\Lipschitzmax \mathsf{A}^2 \alpha^2 T^2 M K}{\mu^2 \Delta}\big(K+K^2 L\big)}, \ordertilde{\frac{\numtime K L}{\mu \min\{\mu, \Lipschitzmax\}}}\Big\}.
\end{align*}
Since we want to minimize the total cost, we may as well choose $\numtime = 1$, $L = 1$, so that the total cost is given as in the proposition.
\end{proof}

\section{Proofs for \secref{subsec::landscape}}

\emph{Proof of Lemma~\ref{lemma::Hessian_QNODE}}:\\

{\noindent {\bf Step 1:}} It is sufficient to prove the following: For arbitrary $\nu\in \Real^M$, 
\begin{align}
\label{eqn::hess}
\begin{aligned}
\nu^\top \big(\nabla^2 \loss_\theta\rvert_{\theta^\star}\big) \nu=&\ 2\int_{0}^T \int_{0}^T\ \bra{\psi(t)} A(t) U_\theta(s,t) A(s) \ket{\psi(s)} - \langle A(t) \rangle_{\psi(t)} \langle A(s) \rangle_{\psi(s)} ds\ dt, \\
\end{aligned}
\end{align}
where we assume that $\rho_0$ is a pure state with state $\psi(0)$, adopt the notation $\rho(s) = \ketbra{\psi(s)}$, and define
\begin{align}
\label{eqn::Ar}
A(r) &:= \sum_{\ell=1}^M \nu_\ell \partial_{\theta_\ell} H(r) \myeq{\eqref{eqn::H_ansatz}} \sum_{\ell=1}^M \sum_{k=1}^K \nu_\ell \partial_{\theta_\ell} f_k(r) H_k =  \sum_{k=1}^K \big(\nu^\top \nabla_\theta f_k(r)\big) H_k.
 \end{align}
 By plugging the expression of $A$, one has
\begin{align*}
&\ \nu^\top \big(\nabla^2 \loss_\theta\rvert_{\theta^\star}\big) \nu \\
=&\ 2 \sum_{k, k'=1}^K \int_{0}^T \int_{0}^T \nu^\top \nabla_\theta f_k(t) \nu^\top \nabla_\theta f_{k'}(s) \big(\bra{\psi(t)} H_{k} U_\theta(s,t) H_{k'} \ket{\psi(s)} - \langle H_k \rangle_{\psi(t)} \langle H_{k'}\rangle_{\psi(s)} \big)ds dt \\
=&\ {\nu}^\top \bigg(2 \sum_{k, k'} \int_{0}^T \int_{0}^T \nabla_\theta f_k(t) \nabla_\theta f_{k'}^\top(s) G_{k,k'}(t,s) ds dt\bigg) {\nu}.
\end{align*}
 This gives the expression of Lemma~\ref{lemma::Hessian_QNODE}.\\
 
{\noindent {\bf Step 2:}} Let us begin by fixing index $j, k \in \{1, 2, \cdots, M\}$, and the problem set up tells us that 
\begin{align*}
\partial_t \rho(t) &= -i \comm{H(t)}{\rho(t)}\\
\partial_t (\partial_{\theta_j}\rho)(t) &=  -i \comm{\partial_{\theta_j} H(t)}{\rho(t)} -i \comm{H(t)}{\partial_{\theta_j} \rho(t)} \\
\partial_t (\partial_{\theta_j, \theta_k}\rho)(t) &=  -i \comm{\partial_{\theta_j, \theta_k} H(t)}{\rho(t)} -i \comm{\partial_{\theta_j} H(t)}{\partial_{\theta_k} \rho(t)} -i \comm{\partial_{\theta_k} H(t)}{\partial_{\theta_j} \rho(t)} -i \comm{H(t)}{\partial_{\theta_j, \theta_k} \rho(t)},
\end{align*}
By Durhamel's principle and the initial condition $\partial_{\theta_j} \rho(0) = \partial_{\theta_j, \theta_k} \rho(0) = 0$, one has
\begin{align}
\label{eqn::rho_k}
\partial_{\theta_j} \rho(t) &= \int_{0}^t U_\theta(s,t)  (-i) \comm{\partial_{\theta_j} H(s)}{\rho(s)} U_\theta(s,t)^\dagger\ ds.
\end{align}
and
\begin{align*}
(\partial_{\theta_j, \theta_k}\rho)(T) &= \int_{0}^T U_\theta(t, T) \Big(-i \comm{\partial_{\theta_j, \theta_k} H(t)}{\rho(t)} -i \comm{\partial_{\theta_j} H(t)}{\partial_{\theta_k} \rho(t)} -i \comm{\partial_{\theta_k} H(t)}{\partial_{\theta_j} \rho(t)} \Big) U_\theta(t,T)^\dagger\ dt.
\end{align*}

If we evaluate gradient at $\theta = \theta^\star$, 
\begin{align*}
\frac{\loss_\theta}{\partial \theta_j}\Big\rvert_{\theta = \theta^\star} =& - \tr\big(\sigma(T) \partial_{\theta_j} \rho(t)\big) = - \tr\big(\sigma(T) \int_{0}^T U_\theta(s,T)  (-i) \comm{\partial_{\theta_j} H(s)}{\rho(s)} U_\theta(s,T)^\dagger\ ds \big) \\
=&\  i \int_{0}^T \tr\big(\sigma(T) U_\theta(s,T) \big(\partial_{\theta_j} H(s){\rho(s)} - \rho(s) \partial_{\theta_j} H(s)\big) U_\theta(s,T)^\dagger \big) \ ds\\
=&\ i \int_{0}^T \tr\Big(\rho(s) \big(\partial_{\theta_j} H(s){\rho(s)} - \rho(s) \partial_{\theta_j} H(s)\big) \Big)\ ds = 0.
\end{align*}
This is expected.
Next, we perform similar calculations to the second order derivative at $\theta = \theta^\star$ and obtain:
\begin{align*}
&\ \frac{\loss_\theta}{\partial \theta_j\theta_k} \Big\rvert_{\theta = \theta^\star} = -\tr\big(\sigma(T) \partial_{\theta_j, \theta_k} \rho(T)\big) \\
=&\ - \tr\bigg(\sigma(T) \int_{0}^T U_\theta(t, T) \Big(-i \comm{\partial_{\theta_j, \theta_k} H(t)}{\rho(t)} -i \comm{\partial_{\theta_j} H(t)}{\partial_{\theta_k} \rho(t)} -i \comm{\partial_{\theta_k} H(t)}{\partial_{\theta_j} \rho(t)} \Big) U_\theta(t,T)^\dagger\ dt \bigg)\\
=&\ i\ \tr\bigg(\sigma(T) \int_{0}^T U_\theta(t, T) \Big(\comm{\partial_{\theta_j} H(t)}{\partial_{\theta_k} \rho(t)} + \comm{\partial_{\theta_k} H(t)}{\partial_{\theta_j} \rho(t)} \Big) U_\theta(t,T)^\dagger\ dt \bigg),
\end{align*}
for the same reason as above. Next, we plug the expression of $\partial_{\theta_k} \rho(t)$ from Eq.~\eqref{eqn::rho_k}, we have
\begin{align*}
&\ \frac{\loss_\theta}{\partial \theta_j\theta_k}\Big\rvert_{\theta = \theta^\star} \\
=&\ i\ \tr\bigg(\sigma(T) \int_{0}^T U_\theta(t, T) \Big(\comm{\partial_{\theta_j} H(t)}{\partial_{\theta_k} \rho(t)} + \comm{\partial_{\theta_k} H(t)}{\partial_{\theta_j} \rho(t)} \Big) U_\theta(t,T)^\dagger\ dt \bigg)\\
=&\  i\ \tr\bigg(\sigma(T) \int_{0}^T U_\theta(t, T) \Big({\partial_{\theta_j} H(t)}{\partial_{\theta_k} \rho(t)} + {\partial_{\theta_k} H(t)}{\partial_{\theta_j} \rho(t)} \Big) U_\theta(t,T)^\dagger\ dt \bigg) + c.c.\\
=&\ \tr\bigg(\sigma(T) \int_{0}^T \int_{0}^t U_\theta(t, T) \Big({\partial_{\theta_j} H(t)}{U_\theta(s,t) \comm{\partial_{\theta_k} H(s)}{\rho(s)} U_\theta(s,t)^\dagger}  \\
&\qquad\qquad \qquad + {\partial_{\theta_k} H(t)}{U_\theta(s,t) \comm{\partial_{\theta_j} H(s)}{\rho(s)} U_\theta(s,t)^\dagger} \Big) U_\theta(t,T)^\dagger\ ds\ dt \bigg) + c.c.\\
=&\ \int_{0}^T \int_{0}^t\ \tr\bigg(\rho(t) {\partial_{\theta_j} H(t)}{U_\theta(s,t) \comm{\partial_{\theta_k} H(s)}{\rho(s)} U_\theta(s,t)^\dagger}  \\
&\qquad\qquad \qquad + \rho(t)  {\partial_{\theta_k} H(t)}{U_\theta(s,t) \comm{\partial_{\theta_j} H(s)}{\rho(s)} U_\theta(s,t)^\dagger}) \bigg)\ ds\ dt + c.c.\\
=&\ \int_{0}^T \int_{0}^T\ \bra{\psi(t)} \partial_{\theta_j} H(t) U_\theta(s,t) \partial_{\theta_k} H(s) \ket{\psi(s)} - \langle \partial_{\theta_j} H(t)\rangle_{\psi(t)} \langle \partial_{\theta_k} H(s)\rangle_{\psi(s)} \\
&\qquad \qquad \qquad + \bra{\psi(t)} \partial_{\theta_k} H(t) U_\theta(s,t) \partial_{\theta_j} H(s) \ket{\psi(s)} - \langle \partial_{\theta_k} H(t)\rangle_{\psi(t)} \langle \partial_{\theta_j} H(s)\rangle_{\psi(s)} ds\ dt.
\end{align*}
Since parameters are assumed to take real values,
for any given perturbation vector $\nu\in \Real^M$, by the notation of $A(r)$ in \eqref{eqn::Ar}, one has
\begin{align*}
\nu^\top \big(\nabla^2 \loss_\theta\rvert_{\theta^\star}\big) \nu =&\ 2\int_{0}^T \int_{0}^T\ \bra{\psi(t)} A(t) U_\theta(s,t) A(s) \ket{\psi(s)} - \langle A(t) \rangle_{\psi(t)} \langle A(s) \rangle_{\psi(s)} ds\ dt.
\end{align*}
Thus we arrive at the Eq.~\eqref{eqn::hess}. 

\section{Supplemental details for \secref{sec::example}}

\subsection{Hamiltonians used in numerical experiments}
\label{app::Hamil}

{\noindent \emph{Hamiltonian of $H_2$ molecule:}}\\

During numerical experiments, we use the following Hamiltonian representation for hydrogen molecule from \cite{kandala_hardware-efficient_2017}:
\begin{align}
\label{eqn::hydrogen}
\Hhydro = 0.397936\ \sigma_Z^{(1)} + 0.397936\ \sigma_Z^{(2)} + 0.011280\ \sigma_Z^{(1)} \otimes \sigma_Z^{(2)} + 0.180931\ \sigma_X^{(1)} \otimes \sigma_X^{(2)}.
\end{align}
The Hamiltonian has been mapped into the representation of two qubits.\\

{\noindent \emph{Quantum Ising chain:}}\\

We consider the Hamiltonian learning problem from \cite{wiebe_2014_hamiltonian}. The parameterised Hamiltonian is chosen as 1D Ising chain with $L$ sites herein:
\begin{align}
\label{eqn::Ising0}
H = \sum_{i = 1}^{L-1} x_{i} \sigma_Z^{(i)} \otimes \sigma_Z^{(i+1)}.
\end{align}
The parameters $x_i$ are randomly initiated from $[-0.5, -0.08] \cup [0.08, 0.5]$ similarly to \cite{wiebe_2014_hamiltonian}. The initial state is chosen as $\ket{+}^{\otimes L}$ followed by possible random rotation (see Table~\ref{table::hyperparameter}).\\

{\noindent \emph{Time-dependent quantum Ising chain:}}\\

We also consider the time-dependent case:
\begin{align}
\label{eqn::Ising_td}
H =  \sum_{i = 1}^{L-1} x_{i} \sigma_Z^{(i)} \otimes \sigma_Z^{(i+1)} + f(t) \sum_{i=1}^{L} \sigma_X^{(i)},
\end{align}
where $f(t) = \sin( \pi t)$ for experiments in \secref{sec::example}. For ansatz, we consider the same structure as the above, and train the model to learn $x_i$ and $f(t)$. For the function $f(t)$, we use a two-layer network network with $\sin$ function as the activation function, namely, $f(t) \approx f_\theta(t) = \sum_{k=1}^{m} w^{(2)}_i \sin( w^{(1)}_k t + b_k) + b$ where the parameter $\theta$ is simply the collection $\{w_k^{(2)}\}_{k=1}^{m}$, $\{w^{(1)}_k\}_{k=1}^{m}$, $\{b_k\}_{k=1}^m$, and $b$. For the above experiments, we simply pick $m = 2$ for simplicity.

\subsection{Decomposition of Pauli matrices in adjoint states}
\label{appendix::pauli}

It is easy to validate the following decompositions:
\begin{align*}
\sigma_Z &= \ketbra{0} - \ketbra{1} , \\
\sigma_X &= \ketbra{+} - \ketbra{-}, \qquad \ket{\pm} = \frac{1}{\sqrt{2}}\big(\ket{0}\pm \ket{1}\big) , \\ 
\sigma_Y &= \ketbra{\psi_1} - \ketbra{\psi_2}, \qquad \ket{\psi_1} = \frac{1}{\sqrt{2}} \big(\ket{1} - i \ket{0}\big), \qquad \ket{\psi_2} = \frac{1}{\sqrt{2}} \big(\ket{1} + i \ket{0}\big).
\end{align*}
All these pure states are easy to prepare. For two-body Pauli matrix $\sigma_X \otimes \sigma_Y$ as an example, one can decompose it into the summation of $4$ terms:
\begin{align*}
\sigma_X \otimes \sigma_Y =& \big(\ketbra{+} \otimes \ketbra{\psi_1}\big) - \big(\ketbra{+}\otimes \ketbra{\psi_2}\big) \\
&- \big(\ketbra{-} \otimes \ketbra{\psi_1}\big) + \big(\ketbra{-} \otimes \ketbra{\psi_2}\big).
\end{align*}
Therefore, one can run the optimisation algorithm for the Hamiltonian learning using loss function in Eq.~\eqref{eqn::loss_observable}, for all observables constructed via the tensor product of local Pauli matrices.

\subsection{Supplementary training details}
\label{app::trainning_details}

{\noindent \emph{Optimisation - }} The training hyper-parameters are summarized in the following Table~\ref{table::hyperparameter}. For the toy model of learning an unknown state, we simply use the stochastic gradient descent method; for other problems, we use the Adam method for the parameter update (together with batch samples). For the integral in Eq.~\eqref{eq:mixedgradient2}, we always discretise it using the Trapezoidal rule. For the mixed state simulation for the loss function in Eq.~\eqref{eqn::loss_observable}, the choice of  observables as well as whether to randomize the observables play a non-trivial role in the efficiency and training accuracy, which is also a challenging topic in quantum state tomography. 
A more comprehensive study on the optimal way to choose the classical training procedure will be interesting for future endeavors. \\

{\noindent \emph{Test error - }} In the above experiments in \secref{sec::example}, we always report $1-\text{Fidelity}$ which is defined as follows:
\begin{align}
\label{eqn::test_error}
1 - \frac{1}{\sample}\sum_{i=1}^{\sample}\ \abs{\bra{\phi_i} U^\dagger_\theta(0, T_i) U(0, T_i) \ket{\phi_i}}^2, 
\end{align}
where $T_i$ are possibly randomly generated (see Table~\ref{table::hyperparameter}), the initial state $\rho_i = \ketbra{\phi_i}$ are generated randomly by applying random rotations.
The unitary $U(0, T)$ is the exact unitary evolution of the black-box Hamiltonian from time zero to $T$ and $U_\theta(0, T)$ is the unitary for parameterised Hamiltonian. 
In all experiments, we choose $\sample = 50$.

\begin{table}[h!]
\renewcommand{\arraystretch}{2.0}
\caption{Below are key hyper-parameters for the training.}
\begin{NiceTabular}[width=14cm]{X[4,c,m]X[2,c,m]X[2,c,m]X[2,c,m]X[2,c,m]X[2,c,m]}[]
\hline
{\bf Loss function} & {\bf lr}  & {\bf initialization $\rho_i$} & \medskip {\bf $T_i$} &\smallskip {\bf batch size of $(\rho_i, T_i)$}\smallskip & {\bf Figure} \\ \hline \hline
Eq.~\eqref{eqn::loss_generative} & $10^{-2}$  & $\ket{0}$ & $1$ & n/a & Fig.~\ref{fig::eg1}  \\ \hline
Eq.~\eqref{eqn::loss_eg2} for Hydrogen & $10^{-2}$ &  $\ket{+}^{\otimes n}$ & $\text{Uniform}(1,2)$ & 1 & Fig.~\ref{fig::eg2} \\ 
Eq.~\eqref{eqn::loss_eg2} for Ising & $10^{-2}$ &  $\ket{+}^{\otimes n}$ & $\text{Uniform}(1,2)$ & 1 & Fig.~\ref{fig::eg2}  \\ 
Eq.~\eqref{eqn::loss_eg2} for time-dependent Ising  & $2\times 10^{-2}$ & $\ket{+}^{\otimes n}$ with random rotation & $\text{Uniform}(1,2)$ & 10 & Fig.~\ref{fig::eg2_td}  \\ \hline
Eq.~\eqref{eqn::loss_observable} for Hydrogen & $8\times 10^{-3}$ & $\ket{+}^{\otimes n}$ with random rotation & $\text{Uniform}(1,2)$ & 1 & Fig.~\ref{fig::eg3}\\ 
Eq.~\eqref{eqn::loss_observable} for Ising & $8\times 10^{-3}$ & $\ket{+}^{\otimes n}$ with random rotation & $\text{Uniform}(1,2)$ & 1 & Fig.~\ref{fig::eg3} \\ 
Eq.~\eqref{eqn::loss_observable} for time-dependent Ising & $2\times 10^{-2}$ & $\ket{+}^{\otimes n}$ with random rotation & $\text{Uniform}(1,2)$ & 10 & Fig.~\ref{fig::eg3_td} \\ \hline 
\end{NiceTabular}
\label{table::hyperparameter}
\end{table}

\end{document}